\tikzstyle{bad state}=[draw,diamond, inner sep=0.1mm, minimum size=5mm]
\tikzstyle{good state}=[draw, rectangle, inner sep=0.1mm, minimum size=4.5mm]
\tikzstyle{good state rand}=[draw, rectangle, double, inner sep=0.1mm, minimum size=4.5mm]
\tikzstyle{stochastic state}=[draw,circle, inner sep=0.1mm, minimum size=5mm]
\tikzstyle{dice state}=[draw,circle, inner sep=0.1mm, minimum size=2mm]
\tikzstyle{target state}=[draw,circle,fill=lightgray, inner sep=0.1mm, minimum size=3mm]
\tikzstyle{dead state}=[draw,circle, inner sep=0mm, minimum size=6mm]
\tikzstyle{arr}=[-latex', rounded corners]
\newcommand{\vecprime}[1]{\vec{#1}^{\prime}}
\newcommand{\vecsup}[2]{\vec{#1}^{#2}}
\newcommand{\achievable}{\mathcal{A}}
\newcommand{\path}{\lambda}
\newcommand{\paths}[1]{\Omega_{{#1}}}
\newcommand{\pdist}[1]{\mathcal{D}(#1)}
\renewcommand{\Pr}{\mathbb{P}}
\newcommand{\Exp}{\mathbb{E}}
\newcommand{\mydef}{\stackrel{\mbox{\rm {\tiny def}}}{=}}
\def\Rset{\mathbb{R}}
\newcommand{\allstratsone}{\Pi}
\newcommand{\stratone}{\ensuremath{\pi}\xspace}
\newcommand{\allstratstwo}{\Sigma}
\newcommand{\strattwo}{\ensuremath{\sigma}\xspace}
\newcommand{\game}{\ensuremath{\mathcal{G}}\xspace}
\newcommand{\states}{S}
\tikzstyle{state1}=[draw, rectangle, inner sep=0mm, minimum size=1mm]
\tikzstyle{state2}=[draw,diamond, inner sep=0mm, minimum size=1.5mm]
\tikzstyle{state3}=[draw,circle, inner sep=0mm, minimum size=1.2mm]
\tikzstyle{state3}=[draw,rectangle, double, inner sep=0mm, minimum size=1mm]
\newcommand{\statesone}{\states_{\Box}}
\newcommand{\statestwo}{\states_{\Diamond}}
\newcommand{\statesprob}{\states_{\bigcirc}}
\newcommand{\actions}{\tfunction}
\newcommand{\tfunction}{\Delta}
\newcommand{\gametuple}{\game = \langle  \states, (\statesone, \statestwo, \statesprob),
			\tfunction \rangle }
\newcommand{\startpara}[1]{{%
\vskip5pt\noindent
{\bf #1.}}}
\newcommand{\dwc}{\textsf{dwc}}
\newcommand{\conv}{\textsf{conv}}
\newcommand{\appref}[1]{\ifthenelse{\isundefined{\techreport}}{\cite{report}}{Appendix~\ref{#1}}}
\definecolor{green}{rgb}{0,.7,0}
\newcommand{\set}[1]{\left\{ #1 \right\}}
\newcommand{\PONE}{\textsf{Player~1}\xspace}
\newcommand{\PTWO}{\textsf{Player~2}\xspace}
\newcommand{\LAST}{\textsf{last}}
\newcommand{\supp}{\textsf{supp}}
\newcommand{\length}{\textsf{len}}
\newcommand{\terminal}{\textsf{Term}}
\newcommand{\bbR}{\mathbb{R}}
\newcommand{\Qf}{\mathbb{R}}
\newcommand\restr[2]{{
  \left.\kern-\nulldelimiterspace 
  #1 
  \vphantom{\big|} 
  \right|_{#2} 
  }}
\def\lslope{\textsf{lslope}}
\def\rslope{\textsf{rslope}}
\def\openbegin{(}
\def\openend{)}
\def\adam{\textsf{Adam}\xspace}
\def\eve{\textsf{Eve}\xspace}
\newcommand\point[1]{\noindent\textbf{#1.} }
\newcommand{\thmhelperpre}[2]{\newcommand{\theoremlike}[1]{\par\medskip\penalty-250\refstepcounter{theorem}{\bfseries\noindent##1 \ref{#1}.}\itshape}\theoremlike{#2}}
\newcommand{\thmhelperpost}{
 \renewcommand{\theoremlike}[1]{\par\medskip\penalty-250\refstepcounter{theorem}{\bfseries\noindent##1 \thesection .\thetheorem.}\itshape}%
}
\newenvironment{reftheorem}[1]{\thmhelperpre{#1}{Theorem}}{\thmhelperpost}
\newenvironment{reflemma}[1]{\thmhelperpre{#1}{Lemma}}{\thmhelperpost}
\def\df{\delta}
\def\tot{\mathsf{totrew}}
\def\dis{\mathsf{disrew}}
\def\bound{{\vec{z}}}
\def\reward{{\vec{\varrho}}}
\def\singlereward{\varrho}
\newcommand{\firstdim}[1]{#1_1}
\newcommand{\seconddim}[1]{#1_2}
\tikzstyle{player1}=[draw=green!50!black,minimum size=6mm,thick,inner sep=3pt, top color=green!20, bottom color=green!40,draw=green!20!black]
\tikzstyle{player2}=[thick,inner sep=1pt, diamond, top color=red!20, bottom color=red!40,minimum size=6mm,draw=red!20!black]
\tikzstyle{player3}=[thick,inner sep=1pt, circle, minimum size=6mm,top color=blue!20, bottom color=blue!40,draw=blue!20!black]
\tikzstyle{smallplayer1}=[player1,inner sep=0pt,minimum size=4mm]
\tikzstyle{smallplayer2}=[player2,inner sep=0pt,minimum size=3mm]
\tikzstyle{smallplayer3}=[player3,inner sep=0pt,minimum size=4mm]
\title{Decidability Results for Multi-objective Stochastic Games}
\author{Romain Brenguier \and
 Vojt\v{e}ch Forejt}
\institute{Department of Computer Science, University of Oxford, UK}
\begin{document}
\maketitle

\begin{abstract}
  We study stochastic two-player turn-based games in which the objective of one player is to ensure several infinite-horizon total reward objectives, while the other player attempts to spoil at least one of the objectives.
  The games have previously been shown not to be determined, and an approximation algorithm for computing a Pareto curve has been given. The major drawback of the existing
  algorithm is that it needs to compute Pareto curves for finite horizon objectives (for increasing length of the horizon), and the size of these Pareto curves can grow unboundedly, even when the infinite-horizon
  Pareto curve is small.

  By adapting existing results, we first give an algorithm that computes the Pareto curve for determined games. Then, as the main result of the paper, we show that for the natural class
  of stopping games and when there are two reward objectives, the problem of deciding whether a player can ensure satisfaction of the objectives with given thresholds is decidable. The result relies on intricate and novel
  proof which shows that the Pareto curves contain only finitely many points.

As a consequence, we get that the two-objective discounted-reward problem for unrestricted class of stochastic games is decidable.
\end{abstract}

\section{Introduction}

Formal verification is an area of computer science which deals with establishing properties of systems by mathematical
means. Many of the systems that need to be modelled and verified contain controllable decisions, which can be influenced
by a user, and behaviour which is out of the user's control. The latter can be further split into events whose presence can
be quantified, such as failure rate of components, and events which are considered to be completely adversarial, such as
acts of an attacker who wants to break into the system.

Stochastic turn-based games are used as a modelling formalism for such systems~\cite{DBLP:journals/jcss/ChatterjeeH12}.
Formally, a stochastic game comprises three kinds
of states, owned by one of three players: \PONE, \PTWO, and the stochastic player.
In each state, one or more transitions to successor states are available. 
At the beginning of a play, a token is placed on a distinguished initial
state, and the player who controls it picks a transition and the token is moved to the corresponding
successor state. This is repeated ad infinitum and a path, comprising an infinite sequence of
states, is obtained. \PONE and \PTWO have a free choice of transitions, and the recipe for picking them is called
a strategy. The stochastic player is bound to pick each transition with a fixed probability that is associated with it.

\vskip -0.9pt 
The properties of systems are commonly expressed using rewards, where numbers corresponding to gains or losses are assigned to states
of the system. The numbers along the infinite paths are then summed, giving the total reward of an infinite path, intuitively expressing the energy consumed
or the profit made along a system's execution. Alternatively, the numbers can be summed with a discounting $\delta<1$, giving
discounted reward.
It formalises the fact that immediate gains matter more than future gains, and it is particularly important in economics where money
received early can be invested and yield interest.

Traditionally, the aim of one player is to make sure the expected (discounted) total reward exceeds a given bound, while
the other player tries to ensure the opposite. 
We study the {\em multi-objective problem} in which each state is given a tuple of numbers, for example corresponding
to both the profit made on visiting the state, and the energy spent. Subsequently, we give a bound on both profit and energy, and \PONE
attempts to ensure that the expected total profit and expected total energy exceed (or do not exceed) the given bound, while \PTWO tries to spoil this by making sure
that at least one of the goals is not met.

The problem has been studied in~\cite{CFKSW13}, where it has been shown that Pareto optimal strategies might not exist,
and the game might not be determined (for some bounds neither of the players have $\varepsilon$-optimal strategies).
A value iteration
algorithm has been given for approximating the Pareto curve of the game, i.e. the bounds \PONE can ensure. The algorithm successively computes,
for increasing $n$, the sets of bounds \PONE can ensure if the length of the game is restricted to $n$ steps. The approach has two major
drawbacks. Firstly, the algorithm cannot decide, for given bounds, if \PONE can achieve them. Secondly, it does
not scale well since the representation of the sets can grow with increasing $n$, even if the ultimate Pareto curve is small.

The above limitations show that it is necessary to
design alternative solution approaches.
One of the promising directions is to characterise the shape of the set of achievable bounds, for computing it efficiently.
The value iteration of~\cite{CFKSW13} allows us to show that the sets are convex, but no further observations can be made, in particular
it is not clear whether the sets are convex polyhedra, or if they can have infinitely many extremal points. The main result of our paper shows
that for two-objective case and stopping games, the sets are indeed convex polyhedra, which directly leads to a decision algorithm. We believe
that our proof technique is of interest on its own. It proceeds by assuming that there is an accumulation point on the Pareto curve, and then establishes
that there must be an accumulation point in one of the successor states such that the {\em slope} of the Pareto curves in the accumulation points are equal.
This allows us to obtain a cycle in the graph of the game in which we can ``follow'' the accumulation points and eventually revisit some of them
infinitely many times. By further analysing slopes of points on the Pareto curves that are close to the accumulation point, we show that
there are two points on the curve that are sufficiently far from each other yet have the same slope, which contradicts the assumption that
they are near an accumulation point.

Our results also yield novel important contributions for non-stochastic games. Although there have recently been several works on non-stochastic
games with multiple objectives, they a priori restrict to deterministic strategies, by which the associated problems become fundamentally different. It is easy to
show that enabling randomisation of strategies extends the bounds \PONE can achieve, and indeed, even in other areas of game-theory randomised
strategies have been studied for decades: the fundamental theorem of game theory is that every finite game admits a \emph{randomised} Nash equilibrium~\cite{nash50}.

\startpara{Related work}
In the area of stochastic games, single-objective problems are well studied. For reachability objectives the games are determined
and the problem of existence of an optimal strategy achieving a given value is in NP$\cap$co-NP~\cite{Condon1992}; same holds for total reward objectives.
In the multi-objective setting, \cite{CFKSW13} gives a value iteration algorithm
for the multi-objective total reward problem. Although value iteration converges to the correct result, it does so only in infinite
number of steps. It is further shown in~\cite{CFKSW13} that when \PONE is restricted to only use deterministic strategies, the problem becomes undecidable; the proof relies fundamentally
on the strategies being deterministic and it is not clear how it can be extended to randomised strategies.
The work of \cite{BKTW14} extends the equations of \cite{CFKSW13} to expected energy objectives, and mainly concerns itself
with a variant of multi-objective mean-payoff reward, where the objective is a
``satisfaction objective'' requiring that there is a set of runs of a given probability on which all mean payoff rewards
exceed a given bound (i.e., expected values are not considered). 
\cite{BKTW14} only studies existence of finite-memory strategies and the probability bound~1; this restriction has very recently been lifted by
\cite{Krish16}, which shows that even unrestricted satisfaction objective problem is coNP-complete. 

In non-stochastic games, multi-objective optimisation has been studied for multiple mean-payoff objectives and energy games~\cite{DBLP:journals/iandc/VelnerC0HRR15}. A comprehensive analysis of the complexity of synthesis of optimal strategies has been given~\cite{CRR14}, and it has been shown that a variant of the problem is undecidable~\cite{Velner2015}. The work of~\cite{DBLP:conf/cav/BrenguierR15} studies the complexity of problems related to exact computation of Pareto curves for multiple mean-payoff objectives. In~\cite{DBLP:conf/fsttcs/HunterR14}, interval objectives are studied for total, mean-payoff and discounted reward payoff functions. The problems for interval objectives are a special kind of multi-objective problems that require the payoff to be within a given interval, as opposed to the standard single-objective setting where the goal is to exceed a given bound.
 As mentioned earlier, all the above works for non-stochastic games a priori restrict the players to use deterministic strategies, and hence the problems exhibit completely different properties than the problem we study.

\startpara{Our contribution}
We give the following novel decidability results. Firstly, we show that the problem for {\em determined} stochastic games is decidable.
Then, as the main result of the paper, we show that for non-determined games which also satisfy the stopping assumption and for two objectives,
the set of achievable bounds forms a convex polyhedron.
  This immediately leads to an algorithm for computing Pareto curves, and we obtain the following novel results as corollaries.
  \begin{itemize}
   \item Two-objective discounted-reward problem for stochastic games is decidable.
   \item Two-objective total-reward problem for stochastic stopping games is decidable.
  \end{itemize}
Although we phrase our results in terms of stochastic games, to our best knowledge, the above results also yield
novel decidability results for multi-objective {\em non-stochastic games} when randomisation of strategies is allowed.

\startpara{Outline of the paper}
In Sec.~\ref{sec:determined}, we show a simple algorithm that works for determined games and show how to decide whether a stopping game is determined.
In Sec.~\ref{sec:general}, we give decidability results for two-objective stopping games.

\section{Preliminaries on stochastic games}
\label{sec:prelims}
We begin this section by introducing the notation used throughout the paper.
Given a vector $\vec{v}\in \Qf^n$, we use $\vec{v}_i$ to refer to its $i$-th component,
where $1\le i \le n$.
The comparison operator $\le$ on vectors is defined to be the componentwise ordering: $\vec{u} \le \vec{v} \Leftrightarrow \forall i \in [1 , n].\ \vec{u}_i \le \vec{v}_i$.
We write $\vec{u} < \vec{v}$ when $\vec{u} \le \vec{v}$ and $\vec{u} \ne \vec{v}$.
Given two vectors $\vec{u},\vec{v}\in \Qf^n$, the {\em dot product} of $\vec{u}$ and $\vec{v}$ is defined by $\vec{u}\cdot \vec{v} = \sum_{i=1}^n \vec{u}_i\cdot \vec{v}_i$.

The sum of two sets of vectors $U,V\subseteq \Qf^n$ is defined by
$U+V = \{\vec{u}+\vec{v} \,|\, \vec{u}\in U, \vec{v}\in V\}$.
Given a set $V\in \Qf^n$,
we define the \emph{downward closure} of $V$ as $\dwc(V)\mydef\{\vec{u} \mid \exists \vec{v} \in V \,.\, \vec{u}\leq \vec{v}\}$, and
we use $\conv(V)$ for the \emph{convex closure} of $V$, i.e. the set of all $\vec{v}$ for which there are $\vec{v}{}^1,\ldots \vec{v}^{\,n} \in V$ and $w_1\ldots w_n \in [0,1]$
such that $\sum_{i=1}^n w_i = 1$ and $\vec{u} = \sum_{i=1}^n w_i\cdot \vecsup{v}{i}$.
An \emph{extremal point} of a set $X \subseteq \Qf^n$ is a vector $x\in X$ that is not a convex combination of other points in $X$, i.e. $x \not \in \conv(X \setminus \{ x \})$.

A function $f \colon \mathbb{R} \to \mathbb{R}$ is concave whenever for all $x,y\in \mathbb{R}$ and $t\in [0,1]$ we have $f(t\cdot x + (1-t)\cdot y) \ge t\cdot f(x) + (1-t)\cdot f(y)$.
Given $x\in \mathbb{R}$, the \emph{left slope} of $f$ in $x$ is defined by $\lslope(f,x) \mydef \lim_{x' \to x^-} \frac{f(x) - f(x')}{x - x'}$.
  Similarly the \emph{right slope} is defined by $\lim_{x' \to x^+} \frac{f(x) - f(x')}{x - x'}$.
  Note that if $f$ is concave then both limits are well-defined, because by concavity $\frac{f(x) - f(x')}{x - x'}$ is monotonic in $x'$; nevertheless, the left and right slope might still not be equal.

A point $\vec{p}\in \mathbb{R}^2$ is an \emph{accumulation point} of $f$  if $f(\firstdim{\vec{p}}) = \seconddim{\vec{p}}$ and for all $\varepsilon > 0$, there exists $x\neq\firstdim{\vec{p}}$ such that $(x,f(x))$ is an extremal point of $f$ and $| \firstdim{\vec{p}} - x | <\varepsilon$. Moreover, $\vec{p}$ is a \emph{left (right) accumulation point}
if in the above we in addition have $x < \firstdim{\vec{p}}$ (resp. $x > \firstdim{\vec{p}}$).
We sometimes slightly abuse notation by saying that $x$ is an extremal point when $(x,f(x))$ is an extremal point, and similarly for accumulation points.

A \emph{discrete probability distribution} (or just {\em distribution}) over a (countable) set $S$ is
a function $\mu : S \to [0, 1]$ such that $\sum_{s \in S} \mu(s) =  1$.
We write $\pdist{S}$ for the set of all distributions over $S$,
and use $\supp(\mu) = \set{s \in S \mid  \mu(s) > 0}$ for the \emph{support set}
  of $\mu \in \pdist{S}$.

We now define turn-based stochastic two-player
games together with the concepts of strategies and paths of the game.
We then present the objectives that are studied in this paper and the associated decision problems.

\startpara{Stochastic games}
  A \emph{stochastic (two-player) game} is a tuple $\gametuple$ where
    $\states$ is a finite set of states partitioned into
    sets $\statesone$, $\statestwo$, and $\statesprob$;
    $\tfunction : \states\times\states \rightarrow [0,1]$ is
    a probabilistic transition function
    such that $\tfunction(s,t)\in\{0,1\}$ if $s \in \statesone \cup
    \statestwo$ and
    $\sum_{t\in\states} \tfunction(s,t) = 1$
    if $s\in\statesprob$.

$\statesone$ and $\statestwo$ represent the sets of states controlled
by \PONE{} and \PTWO{}, respectively, while $\statesprob$ is the
set of stochastic states.
For a state $s\in\states$, the set of successor states is
denoted by $\actions(s)\mydef\{t\in\states\mid\tfunction(s,t){>}0\}$.
We assume that $\actions(s) \neq \emptyset$ for all $s \in \states$.
A state from which no other states except for itself are reachable is called
{\em terminal}, and the set of terminal states is denoted by $\terminal \mydef \{s\in\states\mid\tfunction(s){=}\{s\}\}$.


\startpara{Paths}
An \emph{infinite path} $\path$ of a stochastic game $\game$ is a
sequence $(s_i)_{i \in\mathbb{N}}$ of states such that
$s_{i+1}\in\actions(s_i)$ for all $i \geq 0$.
A \emph{finite path} is a prefix of such a sequence.
For a finite or infinite path $\path$ we write $\length(\path)$ for the number of states in  the path.
For $i < \length(\path)$ we write $\path_i$ to refer to the $i$-th state $s_{i-1}$ of $\path=s_0s_1\ldots$ and $\path_{\le i}$ for the prefix of $\path$ of length $i+1$.
For a finite path $\path$ we write $\LAST(\path)$ for the
last state of the path.
For a game $\game$ we write $\paths{\game}^+$ for the set of
all finite paths, and $\paths{\game}$ for the set of all infinite paths,
and $\paths{\game,s}$ for the set of infinite paths starting in state $s$.
We denote the set of paths that reach a state in
$T\subseteq S$ by $\Diamond T \mydef \{
\path\in\paths{\game} \,|\,
\exists i \,.\, \path_i\in T\}$.

%

\startpara{Strategies}
We write $\paths{\game}^\Box$ and $\paths{\game}^\Diamond$ for the finite paths that end with a state of $\statesone$ and $\statestwo$, respectively.
A \emph{strategy} of \PONE{} is a function
$\stratone : \paths{\game}^\Box {\rightarrow} \pdist{S}$
such that $s\in\supp(\stratone(\path))$ only if $\tfunction(\LAST(\path),s)=1$.
%
 We say that $\stratone$ is \emph{memoryless} if $\LAST(\path) {=} \LAST(\path')$ implies $\stratone(\path) {=} \stratone(\path')$, and \emph{deterministic} if $\pi(\path)$ is Dirac for all $\path \in \paths{\game}^+$, i.e. $\pi(\path)(s)=1$ for some $s\in S$.
%
A strategy $\strattwo$ for \PTWO{} is defined similarly replacing $\paths{\game}^\Box$ with $\paths{\game}^\Diamond$.
We denote by $\allstratsone$ and $\allstratstwo$ the sets of all
strategies for \PONE{} and \PTWO{}, respectively.

\startpara{Probability measures}
A stochastic game $\game$, together with  a
strategy pair $(\stratone,\strattwo) \in \allstratsone \times \allstratstwo$
and an initial state $s$, induces an infinite Markov chain on the game~(see e.g. \cite{ChenFKSTU12}).
We denote the probability measure of this Markov chain by
$\Pr^{\stratone,\strattwo}_{\game,s}$.
The expected value of a measurable function
$g\colon\states^\omega{\to}{\bbR_{\pm\infty}}$
is defined as
$
\Exp_{\game,s}^{\stratone,\strattwo}[g] \mydef
\int_{\paths{\game,s}} g\,d\Pr_{\game,s}^{\stratone,\strattwo}$.
We say that a game $\game$ is a {\em stopping game} if, for every strategy pair $(\stratone,\strattwo)$,
a terminal state is reached with probability\ $1$, i.e. $\Pr_{\game,s}^{\stratone,\strattwo}(\Diamond \terminal) = 1$ for all $s$.


\startpara{Total reward}
A reward function $\singlereward \colon \states \to \mathbb{Q}$ assigns a reward  to each state of the game.
We assume the rewards are $0$ in all terminal states.
The \emph{total reward} of a path~$\path$ is $\singlereward(\path) \mydef \sum_{j \ge 0} \singlereward(\path_j)$.
Given a game \game, an initial state~$s$, a vector of $n$ rewards $\reward$ and a vector of $n$ bounds $\bound\in \Qf^n$, we say that a pair of strategies~$(\stratone,\strattwo)$
\emph{yields} an objective $\tot(\reward,\bound)$ if $\Exp^{\stratone,\strattwo}_{\game,s}[\reward_i] \ge \bound_i$ for all $1\le i \le n$.
A strategy $\stratone\in\allstratsone$ {\em achieves} $\tot(\reward,\bound)$ if for all $\strattwo$ we have that
$(\stratone,\strattwo)$ yields $\tot(\reward,\bound)$; the vector $\bound$ is then called {\em achievable}, and we use
$\achievable_s$ for the set of all achievable vectors.
A strategy $\strattwo\in\allstratstwo$ {\em spoils} $\tot(\reward,\bound)$ if for no $\stratone \in \allstratsone$,
the tuple
$(\stratone,\strattwo)$ yields $\tot(\reward,\bound)$.
Note that lower bounds (objectives $\Exp^{\stratone,\strattwo}_{\game,s}[\reward_i] \le \bound_i$) can be modelled by upper bounds after multiplying all rewards and bounds by $-1$.

A (lower) Pareto curve in $s$ is the set of all
maximal $\bound$ such that for all $\varepsilon>0$ there is $\stratone\in \allstratsone$ that achieves the objective $\tot(\reward, \bound-\varepsilon)$.
We use $f_s$ for the Pareto curve, and for the two-objective case we treat it as a function, writing $f_s(x)=y$ when $(x,y)\in f_s$.
We say that a game is
{\em determined} if for all states, every bound can be spoiled or lies in the downward closure of the Pareto curve%
\footnote{The reader might notice that in some works, games are said to be determined when each vector can be either achieved
by one player, or spoiled by the other. This is not the case
of our definition, where the notion of determinacy is {\em weaker} and only requires ability to spoil or achieve up to arbitrarily small $\varepsilon$.}. Note that the downward closure of the Pareto curve equals the closure of $\achievable_s$.


\startpara{Discounted reward}
Discounted games play an important role in game theory.
In these games, the rewards have a discount factor~$\df \in \openbegin0,1\openend$ meaning that the reward received after $j$ steps is multiplied by $\df^j$, and so a discounted
reward of a path $\lambda$ is then $\singlereward(\path,\df) = \sum_{j \ge 0} \singlereward(\path_j)\cdot \df^j$.
We define the notions of achieving, spoiling and Pareto curves for discounted reward $\dis(\reward,\df,\bound)$ in the same way as for total reward. Since the problems for discounted reward
can easily be encoded using the total reward framework (by adding before each state a stochastic state from which with probability $(1 - \delta)$ we transition to a terminal state),
from now on we will concentrate on total reward, unless specified otherwise.

    


\startpara{The problems}
In this paper we study the following decision problems.
\begin{definition}[Total-reward problem]
Given a stochastic game $\game$, an initial state $s_0$, and vectors of reward functions $\reward$
and thresholds $\bound$, is $\tot(\reward,\bound)$ achievable from $s_0$?
\end{definition}
\begin{definition}[Discounted-reward problem]
Given a stochastic game $\game$, an initial state $s_0$, vectors of reward functions $\reward$
and thresholds $\bound$, and a discount factor $\df \in (0,1)$, is
$\dis(\reward,\df,\bound)$ achievable from $s_0$?
\end{definition}

\noindent
In the particular case when $n$ above is $2$, we speak about {\em two-objective} problems.

\startpara{Simplifying assumption}
In order to keep the proofs simple, we will assume that each non-terminal state has exactly two successors and that only the states controlled by \PTWO have weights different from $0$.
Note that any stochastic game can be transformed into an equivalent game with this property in polynomial time, so we do not lose generality by this assumption.

\renewcommand\windowpagestuff{%
\begin{minipage}{0.95\textwidth}
  \includegraphics[width=0.95\textwidth,natwidth=541,natheight=311]{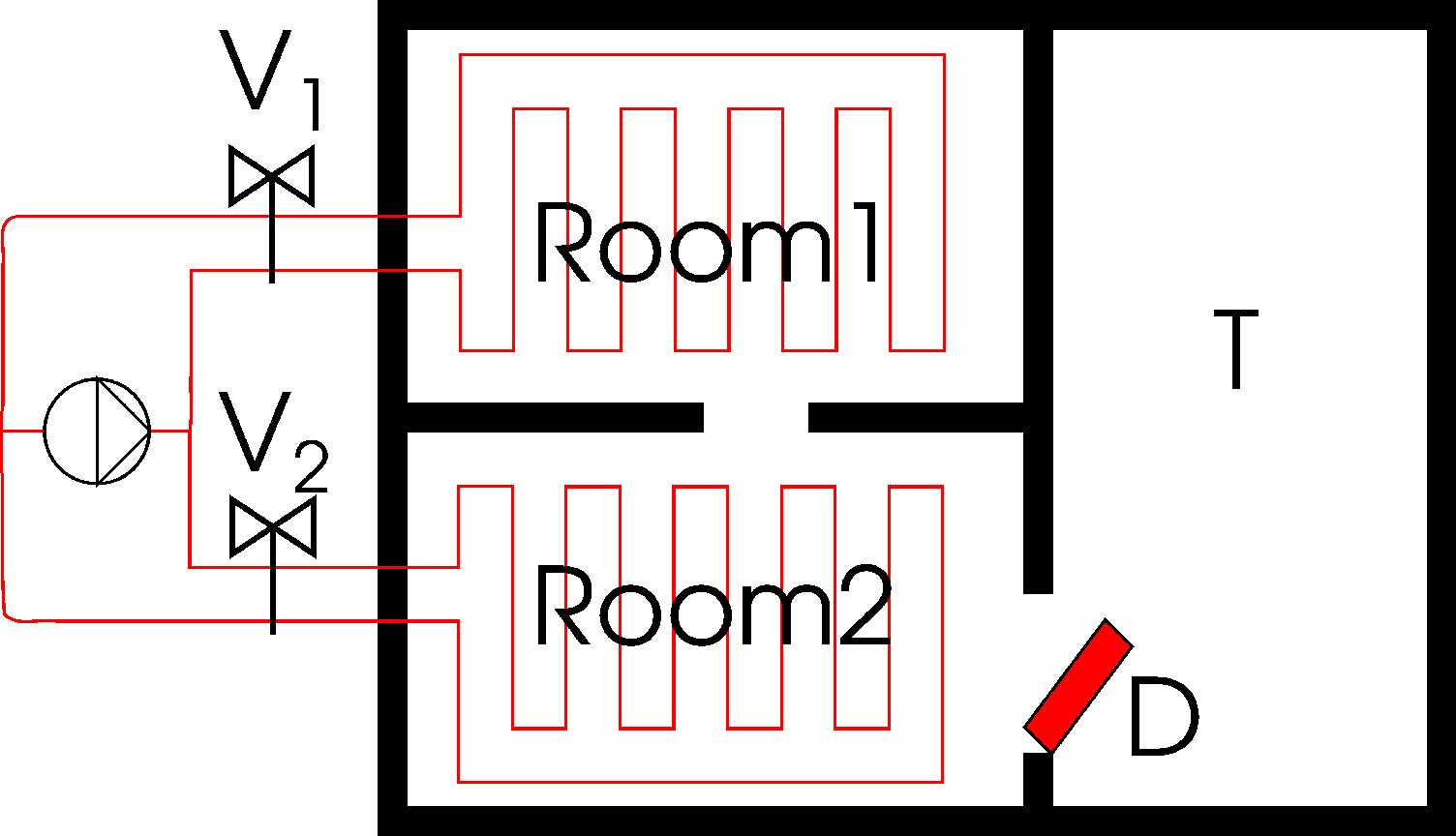}
  \captionof{figure}{A house with controllable floor heating in two rooms.}
  \label{fig:drawing}
\end{minipage}
}

\medskip
\noindent
{\em Example 3 (Floor heating problem).}
As an example illustrating the definitions, as well as possible applications of our results,
we consider a simplified version of the smart-house case study presented in \cite{LMMST16} with
a difference that we model both user comfort and energy consumption.
\PONE, representing a controller, decides which rooms are heated, while the \PTWO
represents the configuration of the house, for instance which door and windows are open, which cannot be influenced by the controller.
The temperature in another room changes based on additional probabilistic factors.
We illustrate this example in \figurename~\ref{fig:drawing} and a simple model as a stochastic game is given in \figurename~\ref{fig:heating}~(left).
We have to control the floor heating of two rooms in a house, by opening at most one of the valves $V_1$ and $V_2$ at a time.

\vskip 2pt
\opencutleft
\begin{cutout}{3}{0pt}{7cm}{10}%
The state of each room is either cold or hot, for instance in state $H,C$, the first room is warm while the second one is cold, and the third room has unknown temperature.
Weights on the first dimension represent the energy consumption of the system while the second represent the comfort inside the house.
\PTWO controls whether the door $D$ between the second room and a third one is open or not.
The temperature $T$ in the other room of the house is controlled by stochastic transitions.
For instance in the initial state $(C,C)$, the controller can choose either to switch on the heating in room 1 or room 2. 
Then the second player chooses whether the door is opened or not and stochastic states determine the contribution of the other rooms: for instance from $(H,C)$ if the second player chooses that the door is opened then depending on whether the temperature of the other room is low or high, room 2 can either stay cold or get heated through the door, and the next state in that case is $(H,H)$ which is the terminal state.
The objective is to optimise energy consumption and comfort until both rooms are warm.
The Pareto curve for a few states of the game is depicted in \figurename~\ref{fig:heating} (right).
\end{cutout}


\subsection{Equations for lower value}\label{sec:eqns}
We recall the results of \cite{CFKSW13,BKTW14} showing 
that for stopping games the sets of achievable points $\achievable_s$ are the unique solution to the sets of equations defined as follows:
\[
X_s =
\begin{cases}
 \dwc(\{ (0,\dots,0)\} ) & \text{if $s\in \terminal$}\\
 \dwc(\conv(\bigcup_{t\in \tfunction(s)} X_t)) & \text{if $s\in \statesone$}\\
 \reward(s) + \dwc(\bigcap_{t\in \tfunction(s)} X_t) &\text{if $s\in \statestwo$}\\
 \dwc(\sum_{t\in\tfunction(s)} \tfunction(s,t) \cdot X_t) & \text{if $s\in \statesprob$}
\end{cases}
\]

The equations can be used to design a value-iteration algorithm that iteratively computes sets $X^i_s$ for increasing $i$: As a base step we have $X^0_s = \dwc(\vec{0})$ (where $\vec{0}=(0,\ldots,0)$); we then substitute $X^i_s$ for $X_s$ on the right-hand side of the equations, and obtain $X^{i+1}_s$ as $X_s$ on the left-hand side. The sets $X^i_s$ so obtained converge to the least fixpoint of the equations above~\cite{CFKSW13,BKTW14}. As we will show later, the sets $X^i_s$ might be getting increasingly complex even though the actual solution $X_s$ only comprises two extremal points.

\begin{figure}[t!]
\begin{tikzpicture}[x=1.6cm,y=1.3cm,xscale=0.75]
  \draw (-0.1,0) node[player1] (CC1) {C,C}; 
  \draw(CC1.-90) node[below] {$(-1,0)$};
  \draw (1,1) node[player2] (HC2) {H,C};
  \draw (1,-1) node[player2] (CH2) {C,H};
  \draw (1.7,0.5) node[player3] (HC3) {};
  \draw (1.7,-0.5) node[player3] (CH3) {};
  \draw (4.5,1) node[player3] (HCS3) {};
  \draw (3,-1) node[player1] (CH1) {C,H};
  \draw (4.5,0) node[player2] (D1) {H,H};
  \draw (3,0) node[player1] (HH1) {H,H}; 
  \draw (3,1) node[player1] (HC1) {H,C};
  \draw (4.5,-1) node[player3] (CHS3) {}; 
  \draw[-latex'] (-0.6,0) -- (CC1);
  \draw[-latex'] (CC1) --  (HC2);
  \draw (HC2.-90) node[below] {$(0,-1)$};
  \draw (CH2.-90) node[below] {$(0,-1)$};
  \draw[-latex'] (HC2) -- (HC3);
  \draw[-latex'] (HC2) -- node[above] {\scriptsize{$D$ closed}} (HC1);
  \draw[-latex'] (CC1) -- (CH2);
  \draw[-latex'] (CH2) -- (CH3); 
  \draw[-latex'] (CH2) -- node[below] {\scriptsize{$D$ closed}} (CH1);
  \draw[-latex'] (HC1) --  (HCS3);
  \draw[-latex'] (HC1)  --  (D1) ;
  \draw[-latex'] (HCS3) edge[bend right=25] (HC2);
  \draw[-latex'] (HCS3) edge[bend left=10] (HH1.10);
  \draw[-latex'] (HC3) -- node[above, sloped] {\scriptsize{$T$ low}} (HC1);
  \draw[-latex'] (HC3) -- node[above, sloped] {\scriptsize{$T$ high}} (HH1);
  \draw[-latex'] (CH3) -- node[above, sloped] {\scriptsize{$T$ low}}(CC1);
  \draw[-latex'] (CH3) -- node[above, sloped] {\scriptsize{$T$ high}} (CH1);
  \draw[-latex'] (CH1)  -- (D1);
  \draw[-latex'] (D1) -- (HH1);
  \draw[-latex'] (CH1) --  (CHS3);
  \draw (D1.-70) node[below] {$(-1,0)$};
  \draw[-latex'] (CHS3) edge[bend right=10] (HH1.-10);
  \draw[-latex'] (CHS3) edge[bend left=25] (CH2);
  \draw[-latex',rounded corners] (HH1) -- ++(-0.15,-0.55) -- ++(0.3,0) -- (HH1);
\end{tikzpicture}
\hfill
\begin{tikzpicture}[scale=1.5]
  \draw[-latex'] (-2.2,0) -- (0.5,0) node[above] {energy};
  \draw[-latex'] (0,-2.2) -- (0,0.5) node[above] {comfort};

  \draw[dotted] (-1,0) -- (-1,-2.2);
  \draw[dotted] (-2,0) -- (-2,-2.2);
  \draw[dotted] (0,-1) -- (-2.2,-1);
  \draw[dotted] (0,-2) -- (-2.2,-2);

  \draw(-2,0) 
  node[above]{$-2$};

  \draw(0,-1) node[right]{$-1$};
  


  \draw(-1,0) node[smallplayer2] {\tiny H,H};
  \draw (-1,0.1)node[above]{$-1$};
  \draw(0,-2) 
node[right]{$-2$};
  \draw[thick,color=green] (-1,0) --node[smallplayer1,color=black] {\tiny H,C}  (0,-1);
  \draw[thick,color=red] (-1,-1) --node[smallplayer2,color=black] {\tiny H,C}  (0,-2);


  \draw (0,0) node[smallplayer1] {\tiny H,H};

  \draw[thick,color=green] (-2,-1) -- node[smallplayer1,color=black] {\tiny C,C} (-1,-2);

  \begin{scope}[xshift=7.5cm]
  \end{scope}

\end{tikzpicture}
\caption{A stochastic two-player game modelling the floor heating problem.
   Vectors under states denote a reward function when it is not $(0,0)$. All probabilistic
   transitions have probability $\frac{1}{2}$. Pareto curves of a few states of the game are depicted on the right.
\label{fig:heating}}
\end{figure}
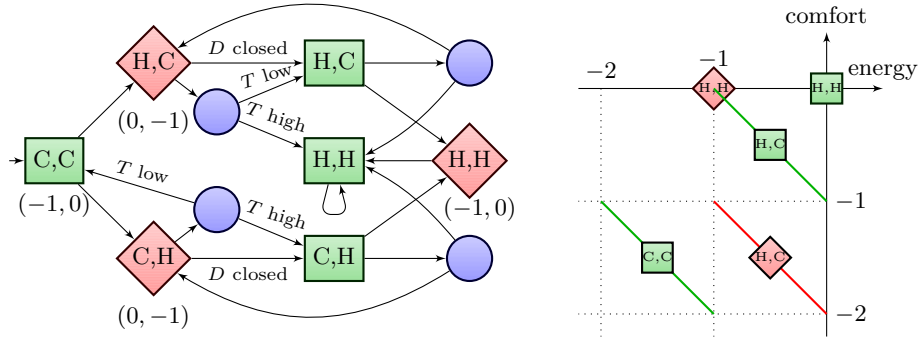

\section{Determined games}\label{sec:determined}
In this section we present a simple algorithm which works under the assumption that the game is determined.
For stopping games, we then give a procedure to decide whether a game is determined.

\begin{theorem}
  There is an algorithm working in exponential time, which given a determined stochastic two-player game, computes its Pareto-curve.
\end{theorem}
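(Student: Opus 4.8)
The plan is to reduce the multi-objective problem to finitely many \emph{single-objective} total-reward games obtained by scalarisation, exploiting that single-objective turn-based stochastic games are classically determined and admit memoryless deterministic optimal strategies for both players. For a weight vector $\vec w\ge\vec 0$, write $\reward_{\vec w}$ for the scalar reward function $s\mapsto\vec w\cdot\reward(s)$, and let $\val_s(\vec w)$ be the (well-defined, by classical determinacy) value of the single-objective total-reward game with reward $\reward_{\vec w}$ from $s$. The first step is to establish the characterisation
\[
 \overline{\achievable_s} \;=\; \dwc(f_s) \;=\; \{\, \vec z \mid \vec w\cdot\vec z \le \val_s(\vec w) \text{ for all } \vec w\ge\vec 0 \,\}.
\]

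For ``$\subseteq$'' the argument is direct: if $\vec z\in\overline{\achievable_s}$ then for every $\varepsilon>0$ some $\pi$ achieves $\tot(\reward,\vec z-\varepsilon)$, so $\Exp^{\pi,\sigma}_{\game,s}[\reward_{\vec w}]=\vec w\cdot\Exp^{\pi,\sigma}_{\game,s}[\reward]\ge\vec w\cdot(\vec z-\varepsilon)$ for all $\sigma$ and all $\vec w\ge\vec 0$; taking $\sup_\pi\inf_\sigma$ and letting $\varepsilon\to0$ gives $\val_s(\vec w)\ge\vec w\cdot\vec z$. For ``$\supseteq$'' I would use the determinacy hypothesis: if $\vec z\notin\dwc(f_s)$ then $\vec z$ is spoiled by some $\sigma$, i.e.\ $\Exp^{\pi,\sigma}_{\game,s}[\reward]\not\ge\vec z$ for every $\pi$; the set $Y_\sigma\mydef\{\Exp^{\pi,\sigma}_{\game,s}[\reward]\mid\pi\in\allstratsone\}$ is convex, because \PONE can randomise over strategies by an initial coin flip, and $\vec z\notin\dwc(Y_\sigma)$, so a separating hyperplane yields $\vec w\ge\vec 0$ (non-negative since $\dwc(Y_\sigma)$ is downward closed) with $\vec w\cdot\vec z>\sup_{\vec y\in Y_\sigma}\vec w\cdot\vec y$; then, using determinacy of the scalar game, $\val_s(\vec w)=\inf_{\sigma'}\sup_\pi\Exp^{\pi,\sigma'}_{\game,s}[\reward_{\vec w}]\le\sup_\pi\Exp^{\pi,\sigma}_{\game,s}[\reward_{\vec w}]=\sup_{\vec y\in Y_\sigma}\vec w\cdot\vec y<\vec w\cdot\vec z$, contradicting the right-hand side. (The usual care with strictness of the separation and with boundary points is routine, and I would suppress it.)

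The second step is to show that $\vec w\mapsto\val_s(\vec w)$ is piecewise linear with finitely many pieces on the cone $\{\vec w\ge\vec 0\}$, whence $\dwc(f_s)$ is a polyhedron. Because the scalar games admit memoryless deterministic optimal strategies, $\val_s(\vec w)=\min_\sigma\max_\pi \vec w\cdot\vec v^{\pi,\sigma}_s$, where $\pi,\sigma$ range over the finitely many memoryless deterministic strategies and $\vec v^{\pi,\sigma}_s$ is the vector of expected total rewards of the finite Markov chain induced by $(\pi,\sigma)$, obtained by solving a linear system. For each memoryless deterministic pair $(\pi,\sigma)$, the set $R_{\pi,\sigma}$ of weights $\vec w\ge\vec 0$ for which the pair is a saddle point is a polyhedral cone whose defining inequalities compare $\vec w\cdot\vec v^{\pi,\sigma}_s$ with $\vec w\cdot\vec v^{\pi',\sigma}_s$ and with $\vec w\cdot\vec v^{\pi,\sigma'}_s$ over all other memoryless deterministic $\pi',\sigma'$; these cones cover $\{\vec w\ge\vec 0\}$, and on $R_{\pi,\sigma}$ we have $\val_s(\vec w)=\vec w\cdot\vec v^{\pi,\sigma}_s$. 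Feeding this into the characterisation, $\dwc(f_s)=\bigcap_{(\pi,\sigma)}\{\vec z\mid\vec w\cdot(\vec z-\vec v^{\pi,\sigma}_s)\le 0 \text{ for all }\vec w\in R_{\pi,\sigma}\}=\bigcap_{(\pi,\sigma)}\big(\vec v^{\pi,\sigma}_s+R_{\pi,\sigma}^{\circ}\big)$, a finite intersection of polyhedra (each containing the negative orthant in its recession cone), hence a polyhedron; the Pareto curve $f_s$ is then its set of maximal points.

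This yields the algorithm: enumerate all memoryless deterministic strategy pairs---at most $2^{|\states|}$ of them, since by the simplifying assumption every controlled state has two successors---and for each solve a linear system for $\vec v^{\pi,\sigma}_s$; for each pair write down the cone $R_{\pi,\sigma}$ from the inequalities above (discarding empty ones), take its polar $R_{\pi,\sigma}^{\circ}$ (simply the cone generated by those inequality normals) and the translate $\vec v^{\pi,\sigma}_s+R_{\pi,\sigma}^{\circ}$; intersect all these polyhedra and output the maximal vertices of the result. Every step is polynomial in the exponential number of strategy pairs, so the whole procedure runs in exponential time. The step I expect to be the main obstacle is the piecewise-linearity argument, in particular handling non-stopping determined games, where some $\val_s(\vec w)$ may be $+\infty$ or $-\infty$ and one must invoke exactly the right classical memoryless-determinacy statement for single-objective total reward; the separation and closure technicalities in the characterisation proof also need care. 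For the applications we ultimately want---discounted games and total-reward stopping games---the games are stopping and these difficulties do not arise.
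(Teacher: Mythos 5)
Your proposal is correct in substance but follows a genuinely different route from the paper. The paper enumerates only the memoryless deterministic strategies of \PTWO: fixing each such $\strattwo_i$ turns the game into a multi-objective MDP whose Pareto curve $f^{\strattwo_i}_s$ is computable by the algorithm of~\cite{EKVY08}, and the game's curve is recovered as the pointwise minimum $\min_i f^{\strattwo_i}_s$; correctness rests on Lem.~\ref{lemma:identify-md}, whose proof chains determinacy with the memoryless-deterministic-spoiler theorem (Thm.~\ref{theorem:spoiler-md}) and the $\varepsilon$-removal property of MDPs. You instead enumerate memoryless deterministic strategy \emph{pairs} and scalarise: the description of $\dwc(f_s)$ by the supporting half-spaces $\{\vec z\mid\vec w\cdot\vec z\le\val_s(\vec w)\}$, combined with piecewise linearity of $\vec w\mapsto\val_s(\vec w)$, yields the polyhedron directly in half-space form. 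Determinacy enters both arguments at the same point---to produce a spoiling strategy for any $\vec z$ outside the curve---but you then exploit convexity of $Y_{\strattwo}$ and a separating hyperplane where the paper invokes Thm.~\ref{theorem:spoiler-md}; in exchange you need memoryless deterministic \emph{optimal} strategies for \emph{both} players in every scalarised single-objective game. Your approach buys self-containedness (no appeal to the multi-objective MDP machinery of~\cite{EKVY08} nor to the multi-objective spoiler theorem) and an explicit inequality representation of $\dwc(f_s)$; the paper's buys brevity and the fact that only \PTWO's strategies need to be enumerated.

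Two points to tighten. First, the separation step: since $\dwc(f_s)$ is closed, you should replace $\vec z$ by $\vec z-\varepsilon\vec 1$ for small $\varepsilon>0$ before invoking determinacy, so that the point lies strictly outside the closure of $\dwc(Y_{\strattwo})$ and strict separation with a non-negative normal is available. Second, the hypothesis that every scalarised total-reward game has a well-defined finite value attained by memoryless deterministic optimal strategies on both sides is exactly where non-stopping games become delicate; but the paper's proof carries an analogous implicit hypothesis through its use of~\cite{EKVY08}, so this is a shared caveat rather than a defect specific to your argument, and it disappears for the stopping and discounted instances that the corollaries actually require.
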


For the proof of the theorem we will make use of the following:
\begin{theorem}[{\cite[Thm.~7]{CFKSW13}}]\label{theorem:spoiler-md}
  Suppose \PTWO has a strategy $\strattwo$ such that for all $\stratone$ of \PONE there is at least one $1\le i\le n$ with
  $\Exp^{\stratone,\strattwo}_{\game,s}(\reward_i) < \bound_i$. Then \PTWO has a memoryless deterministic strategy
  with the same properties.
\end{theorem}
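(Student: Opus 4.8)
The plan is to reduce the multi-objective spoiling condition to a \emph{single-objective} total-reward game in which \PTWO is the minimiser, and then to invoke the classical memoryless determinacy of single-objective total-reward (reachability-style) stochastic games~\cite{Condon1992}. First I would fix the given spoiling strategy \strattwo and study the set of expectation vectors \PONE can realise against it, $V_\strattwo \mydef \{(\Exp^{\stratone,\strattwo}_{\game,s}[\reward_1],\dots,\Exp^{\stratone,\strattwo}_{\game,s}[\reward_n]) \mid \stratone \in \allstratsone\}$. By linearity of expectation, randomising between finitely many strategies of \PONE by a fresh initial coin flip realises the corresponding convex combination of their expectation vectors; hence every point of $\conv(V_\strattwo)$ is realised by some \stratone, and the hypothesis that \strattwo spoils $\tot(\reward,\bound)$ is \emph{exactly} the statement that $\bound \notin \dwc(\conv(V_\strattwo))$ (if \bound were dominated by a point of $\conv(V_\strattwo)$, the \stratone realising that point would satisfy $\Exp^{\stratone,\strattwo}_{\game,s}[\reward_i]\ge\bound_i$ for all $i$). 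Thus \bound lies outside a convex, downward-closed set.

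Next I would separate. Since $\dwc(\conv(V_\strattwo))$ is convex, downward closed, and avoids \bound, there is a non-zero weight vector $\vec w$, which must satisfy $\vec w \ge 0$ (otherwise moving down in a negative coordinate, allowed by downward closure, would make the supremum below unbounded), such that $\vec w \cdot \bound \ge \sup_{\stratone}\, \vec w\cdot(\Exp^{\stratone,\strattwo}_{\game,s}[\reward_i])_i = \sup_{\stratone}\, \Exp^{\stratone,\strattwo}_{\game,s}[\vec w\cdot\reward]$, the last equality again by linearity. I would then consider the single reward $\vec w\cdot\reward$ and the single-objective total-reward game in which \PTWO minimises and \PONE maximises its expectation. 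By the classical theory, this game is determined with value $\val = \inf_{\strattwo'}\sup_{\stratone}\Exp^{\stratone,\strattwo'}_{\game,s}[\vec w\cdot\reward]$, and the minimiser \PTWO has a memoryless deterministic optimal strategy \maxstrat{\strattwo} attaining it; the original \strattwo witnesses $\val \le \sup_{\stratone}\Exp^{\stratone,\strattwo}_{\game,s}[\vec w\cdot\reward] \le \vec w\cdot\bound$.

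If this separation is \emph{strict}, i.e.\ $\val < \vec w\cdot\bound$, the proof is immediate: for the memoryless deterministic \maxstrat{\strattwo} and every \stratone we get $\Exp^{\stratone,\maxstrat{\strattwo}}_{\game,s}[\vec w\cdot\reward] \le \val < \vec w\cdot\bound$, hence $\sum_i \vec w_i(\Exp^{\stratone,\maxstrat{\strattwo}}_{\game,s}[\reward_i]-\bound_i) < 0$, and since $\vec w\ge 0$ some coordinate $i$ satisfies $\Exp^{\stratone,\maxstrat{\strattwo}}_{\game,s}[\reward_i] < \bound_i$, which is exactly the required property.

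The main obstacle is the non-strict (boundary) case, where \bound lies on the boundary of $\dwc(\conv(V_\strattwo))$: because the games under consideration need not be determined, \bound may lie on the Pareto curve yet still be spoilable, and then the supporting hyperplane yields only $\vec w\cdot\bound = \val$. This permits a \PONE best response \stratone with $(\Exp^{\stratone,\maxstrat{\strattwo}}_{\game,s}[\reward_i])_i \ge \bound$ lying \emph{on} the separating hyperplane, which does not by itself contradict spoiling, so \maxstrat{\strattwo} need not strictly spoil. I expect resolving this to be the crux. I would handle it by exploiting that \PTWO has only finitely many memoryless deterministic strategies and arguing by contradiction: assuming none of them spoils, each admits a \PONE best response achieving \bound exactly. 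I would then refine the direction lexicographically, restricting attention to the face $\{\vec u \mid \vec w\cdot\vec u = \val\}$ and recursing with a strictly smaller set of objectives, and combine the finitely many best responses so obtained (via the convex-combination construction above) into a single \stratone that dominates \bound against the original \strattwo, contradicting the assumption that \strattwo spoils. The delicate point is keeping the extracted spoiling strategy memoryless and deterministic across the lexicographic levels, which is where the finiteness of the state space is used.
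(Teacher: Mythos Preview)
The paper does not prove this statement; it is quoted verbatim from \cite{CFKSW13} and used as a black box. So there is no ``paper's own proof'' to compare against, and your proposal should be judged on its own merits.

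Your reduction is the right one: separate $\bound$ from $\dwc(V_\strattwo)$ by a nonnegative weight vector $\vec w$, pass to the single scalar reward $\vec w\cdot\reward$, and take a memoryless deterministic optimal minimiser for \PTWO in that single-objective game. Where you go astray is in treating the boundary case as the crux and then sketching a lexicographic recursion that you do not actually carry out. In fact the boundary case does not arise in the setting of this paper. With $\strattwo$ fixed, the strategy space $\allstratsone$ is a countable product of simplices, hence compact by Tychonoff; in a stopping game the total reward is almost surely bounded (the paper itself notes the bound $M=\sum_{i\ge 0}(1-p_{\min}^{|S|})\,\singlereward_{\max}^{|S|}$), so by dominated convergence the map $\stratone\mapsto(\Exp^{\stratone,\strattwo}_{\game,s}[\reward_i])_i$ is continuous. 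Thus $V_\strattwo$ is compact, $\dwc(V_\strattwo)=V_\strattwo+(-\infty,0]^n$ is closed, and since $\bound\notin\dwc(V_\strattwo)$ you get a \emph{strict} separating hyperplane: $\vec w\cdot\bound>\sup_{\stratone}\Exp^{\stratone,\strattwo}_{\game,s}[\vec w\cdot\reward]$. This immediately gives $\val(\vec w)\le \sup_{\stratone}\Exp^{\stratone,\strattwo}_{\game,s}[\vec w\cdot\reward]<\vec w\cdot\bound$, and the memoryless deterministic optimum $\strattwo^\star$ for \PTWO in the $\vec w\cdot\reward$ game strictly spoils, exactly as in your ``strict'' case.

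Two further remarks. First, your aside that ``the games need not be determined'' is a red herring here: non-determinacy concerns the full two-player game, whereas $V_\strattwo$ is the achievable set of a one-player problem (an MDP induced by the fixed $\strattwo$), and its topological niceness has nothing to do with determinacy of $\game$. Second, your proposed lexicographic refinement, even if it could be made to work, would not obviously keep the extracted \PTWO strategy memoryless across levels; since it is unnecessary once closedness is established, you should drop it rather than try to repair it.
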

From the above theorem we obtain the following lemma.

\begin{lemma}\label{lemma:identify-md} The following two statements are equivalent for determined games:
\begin{itemize}
 \item A given point $\bound$ lies in the downward closure of the Pareto curve for $s$.
 \item For all memoryless deterministic strategies $\strattwo$ of \PTWO, there is a strategy $\stratone$ of \PONE such that $(\stratone,\strattwo)$ yield $\tot(\reward,\bound)$.
\end{itemize}
\end{lemma}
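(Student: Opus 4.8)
The plan is to prove the two directions separately, with the forward direction being essentially trivial and the backward direction relying on Theorem~\ref{theorem:spoiler-md} together with the definition of determinacy.

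For the direction from top to bottom, suppose $\bound$ lies in the downward closure of the Pareto curve for $s$. By the definition of the Pareto curve, for every $\varepsilon > 0$ there is a strategy $\stratone_\varepsilon$ of \PONE that achieves $\tot(\reward, \bound - \varepsilon)$; in particular, for every $\strattwo$ of \PONE the pair $(\stratone_\varepsilon, \strattwo)$ yields $\tot(\reward, \bound - \varepsilon)$. Here I would need to be slightly careful: what I directly get is that for every memoryless deterministic $\strattwo$ there is a \PONE strategy beating $\bound - \varepsilon$, for every $\varepsilon$, and I want a single \PONE strategy beating $\bound$ exactly. Since there are only finitely many memoryless deterministic strategies $\strattwo$ of \PTWO, I would argue as follows: fix such a $\strattwo$; then the game reduces to a Markov decision process for \PONE with the $\strattwo$-choices hard-wired, and in this MDP each $\Exp_{\game,s}^{\stratone,\strattwo}[\reward_i]$ is optimised; for stopping games (and more generally whenever the relevant suprema are attained, which is the setting of \cite{CFKSW13}) the supremum over $\stratone$ of the achievable vectors against this fixed $\strattwo$ is a closed set, so $\bound$, being a limit of achievable points, is itself achievable against $\strattwo$. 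This gives a $\stratone$ (depending on $\strattwo$) with $(\stratone,\strattwo)$ yielding $\tot(\reward,\bound)$, which is the second statement.

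For the converse, suppose the second statement holds but $\bound$ does not lie in the downward closure of the Pareto curve for $s$. Since the game is determined, by the definition of determinacy $\bound$ can then be spoiled: there is a strategy $\strattwo$ of \PTWO such that for no $\stratone$ does $(\stratone,\strattwo)$ yield $\tot(\reward,\bound)$, i.e. for every $\stratone$ there is some coordinate $i$ with $\Exp_{\game,s}^{\stratone,\strattwo}[\reward_i] < \bound_i$. Now Theorem~\ref{theorem:spoiler-md} applies verbatim: \PTWO has a \emph{memoryless deterministic} strategy $\strattwo'$ with the same spoiling property, namely that for all $\stratone$ there is $i$ with $\Exp_{\game,s}^{\stratone,\strattwo'}[\reward_i] < \bound_i$. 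But this directly contradicts the second statement, which asserts that against this particular memoryless deterministic $\strattwo'$ there is a $\stratone$ with $(\stratone,\strattwo')$ yielding $\tot(\reward,\bound)$, i.e. $\Exp_{\game,s}^{\stratone,\strattwo'}[\reward_i] \ge \bound_i$ for all $i$. This contradiction establishes that $\bound$ lies in the downward closure of the Pareto curve, completing the proof.

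The main obstacle is the closure/attainment issue in the forward direction: going from ``beats $\bound - \varepsilon$ for all $\varepsilon$ against every memoryless deterministic adversary'' to ``beats $\bound$ exactly against every memoryless deterministic adversary'' requires knowing that the set of vectors \PONE can guarantee against a fixed memoryless deterministic \PTWO strategy is topologically closed. I would justify this by noting that fixing such a $\strattwo$ turns \game into a finite-state MDP, and the set of achievable reward vectors in a (stopping, or appropriately well-behaved) MDP is a closed convex polytope, as established in the MDP multi-objective literature and implicitly in \cite{CFKSW13}; alternatively one can invoke that $\achievable_s$ and its per-adversary analogue are characterised by the fixpoint equations of Section~\ref{sec:eqns}, whose solutions are downward-closed and closed. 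All other steps are immediate applications of the definitions and of Theorem~\ref{theorem:spoiler-md}.
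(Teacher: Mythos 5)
Your proof is correct and uses the same ingredients as the paper's: the definition of determinacy to pass from ``not in the downward closure'' to ``spoilable'', Thm.~\ref{theorem:spoiler-md} to make the spoiler memoryless deterministic, and the closedness of the achievable set of the MDP induced by a fixed memoryless deterministic \PTWO strategy (the paper cites \cite{EKVY08} for exactly this) to pass from $\bound-\varepsilon$ for all $\varepsilon$ to $\bound$ itself. The paper merely packages these steps as a single chain of biconditionals rather than two separate implications, so the two arguments are essentially identical.
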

Thus, to compute the Pareto curve for a determined game $\game$, it is sufficient to consider all memoryless deterministic strategies $\strattwo_1,\strattwo_2,\ldots,\strattwo_m$ of \PTWO and use~\cite{EKVY08} to compute the Pareto curves $f^{\strattwo_i}_s$ for the games $\game^{\strattwo_i}$
induced by $\game$ and $\strattwo_i$ (i.e. $\game^{\strattwo_i}$ is obtained from $\game$ by turning all $s\in\statestwo$ to stochastic vertices and stipulating $\tfunction(s,t) = \strattwo_i(s)$ for all successors $t$ of $s$; in turn, $\game^{\strattwo_i}$ is a Markov decision process), and obtain the Pareto curve for $\game$ as the pointwise minimum
$V_s:= \min_{1\le i\le m} f^{\strattwo_i}_s$.

To decide if a stopping game is determined, it is sufficient to take the downward closures of solutions $V_s$
and check if they satisfy the equations from Sec.~\ref{sec:eqns}. Since in stopping games the solution of the equations is unique,
if the sets are a solution they are also the Pareto curves and the game is determined. If any of the
equations are not satisfied, then $V_s$ are not the Pareto curves and the game is not determined.
Note that for non-stopping games the above approach does not work: even if the sets do not change by applying one step of value iteration,
it is still possible that the solution is not the least fixpoint, and so we cannot infer any conclusion.

\section{Games with two objectives}
\label{sec:general}

We start this section by showing that the existing value iteration algorithm presented in Sec.~\ref{sec:eqns}
might iteratively compute sets $X^i_s$ with increasing number of extremal points,
although the actual resulting set $X_s$ (and the associated Pareto curve $f_s$) is very simple. Consider the game from Fig.~\ref{fig:inf} (left).
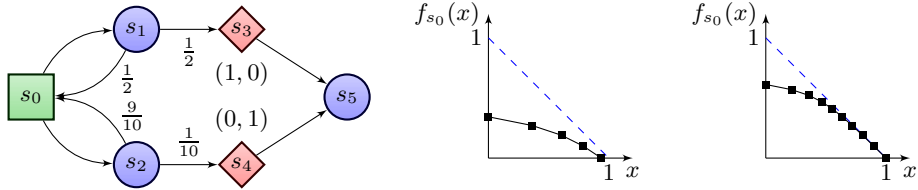
\begin{figure}[t]
  \begin{tikzpicture}[x=1.4cm,y=1.2cm]
    \draw (0,0) node[player1] (S0) {$s_0$};
    \draw (1,0.75) node[player3] (S1) {$s_1$};
    \draw (1,-0.75) node[player3] (S2) {$s_2$};
    \draw (2,0.75) node[player2] (S3) {$s_3$};
    \draw (2,-0.75) node[player2] (S4) {$s_4$};
    \draw (3,0) node[player3] (S5) {$s_5$};
    \path[-latex'] (S0) edge[bend left] (S1)
                   (S0) edge[bend right] (S2)
		   (S1) edge[bend left] node[right,pos=0.4]{\phantom{.}$\frac{1}{2}$} (S0)
		   (S2) edge[bend right] node[right,pos=0.4]{\phantom{.}$\frac{9}{10}$} (S0)
		   (S1) edge node[below]{$\frac{1}{2}$} (S3)
		   (S2) edge node[above]{$\frac{1}{10}$} (S4)
		   (S3) edge (S5)
		   (S4) edge (S5);
    \draw (S3.-90) node[below] {$(1,0)$};
    \draw (S4.90) node[above] {$(0,1)$};
  \end{tikzpicture}
\hfill
  \begin{tikzpicture}[scale=1.6]
    \draw[-latex'] (0,0) -- (1.2,0) node[below] {$x$};
    \draw[-latex'] (0,0) -- (0,1.2)  node[left] {$f_{s_0}(x)$};
    \draw[blue,dashed] (0,1) -- (1,0);
    \foreach \iter in {0,...,4} {
	\pgfmathsetmacro{\xx}{(9/10)^(\iter)*(1-(2/1)^(-(4-\iter))};
	\pgfmathsetmacro{\yy}{1-(10/9)^(-\iter)};
    	\draw (\xx,\yy) node [fill,inner sep=0mm,minimum size=1mm] (n\iter) {};
    }
    \foreach \iter in {1,...,4} {
	\pgfmathtruncatemacro{\itm}{\iter-1};
    	\path [-] (n\itm.center) edge (n\iter.center);
    }
    \node at (1,-0.12) {$1$};
    \node at (-0.1,1) {$1$};
  \end{tikzpicture}
\hfill
  \begin{tikzpicture}[scale=1.6]
    \draw[-latex'] (0,0) -- (1.2,0) node[below] {$x$};
    \draw[-latex'] (0,0) -- (0,1.2)  node[left] {$f_{s_0}(x)$};
    \draw[blue,dashed] (0,1) -- (1,0);
    \foreach \iter in {0,...,9} {
	\pgfmathsetmacro{\xx}{(9/10)^(\iter)*(1-(2/1)^(-(9-\iter))};
	\pgfmathsetmacro{\yy}{1-(10/9)^(-\iter)};
    	\draw (\xx,\yy) node [fill,inner sep=0mm,minimum size=1mm] (n\iter) {};
    }
    \foreach \iter in {1,...,9} {
	\pgfmathtruncatemacro{\itm}{\iter-1};
    	\path [-] (n\itm.center) edge (n\iter.center);
    }
    \node at (1,-0.12) {$1$};
    \node at (-0.1,1) {$1$};
  \end{tikzpicture}
\caption{An example showing that value iteration might produce Pareto curves with unboundedly many extremal points.\label{fig:inf}}
\end{figure}
Applying the value-iteration algorithm given by the equations from Sec.~\ref{sec:eqns} for $n$ steps gives a Pareto curve in $s_0$ with
$n-1$ extremal points. Each extremal point corresponds to a strategy $\stratone_i$ that in $s_0$ chooses to go to $s_2$ when the number of visits of $s_0$ is less than $i$, and after
that chooses to go to $s_1$. The upper bounds of the sets $X^n_s$ for $n=5$ and $n=10$ are drawn in Fig.~\ref{fig:inf} (centre and right, respectively) using solid line, and their
extremal points are marked with dots. The Pareto curve $f_s$ is drawn with dashed blue line, and it consists of two extremal points, $(0,1)$ and $(1,0)$.

We now proceed with the main result of this section, the decidability of the two-objective strategy synthesis problem for
stopping games. The result can be obtained from the following theorem.

\begin{theorem}\label{thm:stopping-no-accum}
  If \game is a stopping stochastic two-player game with two objectives, and $s$ a state of \game then the Pareto curve $f_s$ has only finitely many extremal points.
\end{theorem}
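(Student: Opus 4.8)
The plan is to argue by contradiction: suppose some state $s$ has a Pareto curve $f_s$ with infinitely many extremal points. Since each $f_s$ is concave (hence its extremal points in any bounded region can only accumulate at the endpoints) and the total reward is bounded in a stopping game, an infinite set of extremal points must have an accumulation point. So fix a state $s_0$ and an accumulation point $\vec{p}$ of $f_{s_0}$; without loss of generality take it to be a right accumulation point, so there are extremal points $x_n \to \firstdim{\vec{p}}$ with $x_n > \firstdim{\vec{p}}$. The first main step is a \emph{propagation lemma}: I want to show that whenever $f_{s}$ has an accumulation point $\vec{p}$, then some successor $t \in \tfunction(s)$ has an accumulation point $\vec{p}'$ such that the (left, say) slopes of $f_{s}$ at $\vec{p}$ and of $f_{t}$ at $\vec{p}'$ coincide. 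This is done by inspecting the four cases of the fixpoint equations from Sec.~\ref{sec:eqns}. For a \PONE-state, $f_s$ is (the upper boundary of) the convex closure of the union of the $f_t$'s, so near an accumulation point infinitely many of the contributing extremal points must come from a single successor, which then has its own accumulation point with the matching slope. For a stochastic state, $f_s$ is a weighted Minkowski sum $\sum_t \tfunction(s,t)\cdot f_t$; the slope of a point of a Minkowski sum equals the common slope of the summand-points realizing it, and an accumulation point on the left forces an accumulation point on (at least) one summand with the same slope. For a \PTWO-state, $f_s = \reward(s) + \dwc(\bigcap_t X_t)$, and near an accumulation point of the intersection, the ``active'' successor (the one whose curve locally defines the boundary) has an accumulation point with equal slope; a translation by $\reward(s)$ does not change slopes.

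Given the propagation lemma, the second step is a \emph{pigeonhole / cycle extraction}: starting from $(s_0,\vec{p})$ and iterating the lemma produces an infinite walk $s_0 \to s_1 \to s_2 \to \cdots$ in the (finite) game graph, together with accumulation points $\vec{p}^{(i)}$ of $f_{s_i}$, all sharing one common slope value $\theta$ (the slope is preserved at every step). Since $\states$ is finite, some state $q$ recurs infinitely often, and between two visits we obtain a cycle in the graph along which accumulation points with slope $\theta$ are ``followed''. The stopping assumption is what guarantees this cycle is not degenerate in a harmful way: along a genuine cycle the total reward contributions are finite and well-controlled, so we can quantify how the positions of the accumulation points relate when we go once around the cycle and come back to $q$.

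The third and final step — which I expect to be the genuine obstacle — is to derive the contradiction from the existence of such a slope-preserving cycle through $q$. The idea is to look not just at the accumulation point $\vec{p}_q$ of $f_q$ but at actual extremal points of $f_q$ lying close to $\vec{p}_q$: each such extremal point $x$ corresponds, via one loop around the cycle, to another extremal (or boundary) point $x'$ of $f_q$, and the transformation $x \mapsto x'$ is affine (composition of the affine operations: convex combination with a fixed point from the "other" branches at \PONE- and \PTWO-states, scaling and translation at stochastic states, translation by rewards). An affine self-map of a neighborhood of $\vec{p}_q$ on the curve that fixes $\vec{p}_q$ and preserves the slope $\theta$ must, if it is not the identity on the relevant segment, move some extremal point to a point at \emph{bounded distance} with the \emph{same} slope $\theta$; but two distinct points on a concave curve can have equal slope only if the curve is linear between them, contradicting that both are near an accumulation point of extremal points. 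Ruling out the degenerate "identity" case is where the stopping assumption and a careful choice of which accumulation point (left vs.\ right) to follow must be used: a stopping game forces the cycle to pass through at least one stochastic branching that strictly contracts (the loop probability is $<1$ in the relevant sense), so the affine map is a genuine contraction and cannot be the identity on a nondegenerate segment. Assembling these three steps yields the contradiction, so $f_s$ has only finitely many extremal points.
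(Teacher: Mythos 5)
Your first two steps track the paper's proof closely: the propagation of an accumulation point to a successor with the same left slope is exactly Lemmas~\ref{lem:slope-convex-union}, \ref{lem:slope-intersection} and \ref{lem:accumulation-stochastic-two}, and the pigeonhole step is the transition system $T_{s_0,\vecsup{p}{0}}$ together with Lemma~\ref{lem:finite-transition-system} (note that for the pigeonhole you also need that when the same \emph{game state} recurs the accumulation point recurs too; this follows from Lemma~\ref{lem:general-lemma-bis}.\ref{lem:extremal-slopes-bis} since two distinct accumulation points with equal left slope would force an extremal point strictly between them with a different slope).

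The genuine gap is in your third step, and it sits exactly where the paper has to do its hardest work. Your argument rests on the claim that one trip around the cycle induces an \emph{affine} self-map $x\mapsto x'$ on extremal points of $f_q$ near $\vec{p}_q$. That claim fails at a stochastic state $s_0$ whose \emph{both} successors carry accumulation points with the matching slope: an extremal point at horizontal distance $\varepsilon$ from $\vec{p}$ decomposes as $\varepsilon=\tfunction(s_0,s_1)\varepsilon_1+\tfunction(s_0,s_2)\varepsilon_2$, and the split $(\varepsilon_1,\varepsilon_2)$ varies from point to point in a way that is not an affine (nor even a controlled) function of $\varepsilon$; in particular the branch you are forced to follow around your cycle may receive an arbitrarily small share $\varepsilon_i\ll\varepsilon$, so the round-trip map need not expand, need not be affine, and your dichotomy ``identity on a segment vs.\ moves a point a bounded distance with the same slope'' does not get off the ground. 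This adversarial splitting is precisely what the paper's \emph{inverse betting game} (Thm.~\ref{thm:inverse-betting}, Lem.~\ref{lem:stopping-implies-one-successor}, Cor.~\ref{cor:inverse-betting}) is built to handle: \adam distributes the credit $\varepsilon$ over the successors at each stochastic state, and one must prove that \eve can nevertheless force a visit to a stochastic state in $U_{s_0,\vecsup{p}{0}}$ (only one successor with an accumulation point) without the credit dropping below its initial value, at which point the credit is guaranteed to grow by a factor $\ge 1/\delta$. Your appeal to ``stopping forces a strictly contracting stochastic branching'' is the right intuition for why such states must be reachable, but it is not a substitute for this argument, and without it the contradiction (constancy of the slope on a nondegenerate left neighbourhood of the accumulation point, via Lemmas~\ref{lem:following-close} and \ref{lem:constant-slope}) cannot be derived. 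A smaller issue: the relevant map on the $\varepsilon$-neighbourhood is an \emph{expansion} (the offset is divided by a transition probability), not a contraction, though this does not affect the structure of the intended argument.
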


The above theorem can be used to design the following algorithm. For a fixed number $k$, we create
a formula $\varphi_k$ over $(\Rset,+,\cdot,\le)$ which is true if and only if for each $s\in S$ there are points
$\vecsup{p}{s,1},\ldots, \vecsup{p}{s,k}$ such that the sets $V_s \mydef \dwc(\conv(\{\vecsup{p}{s,1},\ldots \vecsup{p}{s,k}\}))$
satisfy the equations from Sec.~\ref{sec:eqns}. Using~\cite{Tarski:reals-arithmetic} we can then successively check validity of $\varphi_k$ for increasing $k$,
and Thm.~\ref{thm:stopping-no-accum} guarantees that we will eventually get a formula which is valid, and it immediately gives us the Pareto curve.
We get the following result as a corollary.
\begin{corollary}
 Two-objective total reward problem is decidable for stopping stochastic games, and two-objective discounted-reward problem is decidable for stochastic games.
\end{corollary}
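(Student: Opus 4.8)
The plan is to derive both parts of the corollary from \thmref{thm:stopping-no-accum}. For the total-reward problem on stopping games I will turn the characterisation of the achievable sets by the fixpoint equations of Sec.~\ref{sec:eqns} into a decidable first-order query over the ordered field of reals, using the finiteness of extremal points to bound the syntactic size of a witness. For the discounted-reward problem on arbitrary games I will reuse the discounting-to-stopping gadget already described in the preliminaries, reducing it to the first part.

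First I would, for each fixed $k\in\Nset$, write a sentence $\varphi_k$ in the first-order theory of $(\Rset,+,\cdot,\le)$ that asserts the existence, for every state $s\in\states$, of points $\vecsup{p}{s,1},\dots,\vecsup{p}{s,k}\in\Rset^2$ such that the sets $V_s\mydef\dwc(\conv(\{\vecsup{p}{s,1},\dots,\vecsup{p}{s,k}\}))$ solve the equations of Sec.~\ref{sec:eqns}. The key observation is that each set operation occurring on the right-hand sides is first-order definable once the $V_s$ are given by their generating points: membership $\vec{v}\in V_s$ is $\exists w_1,\dots,w_k\ge 0$ with $\sum_i w_i=1$ and $\vec{v}\le\sum_i w_i\cdot\vecsup{p}{s,i}$, and likewise $\vec{v}\in\dwc(\conv(\bigcup_{t}V_t))$, $\vec{v}\in\reward(s)+\dwc(\bigcap_t V_t)$ and $\vec{v}\in\dwc(\sum_t\tfunction(s,t)\cdot V_t)$ are all expressible by quantifying over convex-combination weights and dominating vectors; the simplifying assumption that every non-terminal state has exactly two successors keeps every intersection and Minkowski sum binary. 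Each equation $V_s=\mathrm{RHS}$ then becomes a universally quantified biconditional $\forall\vec{v}\,(\vec{v}\in V_s\leftrightarrow\vec{v}\in\mathrm{RHS})$, and $\varphi_k$ is the existential closure over the $2k|\states|$ witness coordinates of the conjunction of these biconditionals over all $s$. By~\cite{Tarski:reals-arithmetic} the truth of $\varphi_k$ is decidable.

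Next I would run the semi-decision procedure that checks $\varphi_1,\varphi_2,\dots$ in turn and halts at the first valid $\varphi_k$. Termination is guaranteed by \thmref{thm:stopping-no-accum}: since $\game$ is stopping with two objectives, every Pareto curve $f_s$ has finitely many extremal points, so the unique solution $\achievable_s$ of the equations is a downward-closed polyhedron with finitely many extremal points, and taking $k$ to be the maximum number of these over all states makes $\varphi_k$ true with the extremal points as witnesses. Conversely, for stopping games the equations of Sec.~\ref{sec:eqns} have a \emph{unique} solution, namely $(\achievable_s)_{s\in\states}$; hence any witness found by the procedure necessarily satisfies $V_s=\achievable_s$ for all $s$, and in particular yields an explicit description of $\achievable_{s_0}$. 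Deciding the instance then reduces to the first-order (indeed linear) query $\bound\in\achievable_{s_0}$, evaluated by the same convex-combination formula.

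Finally, for the discounted-reward problem over an \emph{arbitrary} stochastic game I would apply the transformation sketched in the preliminaries: insert before each state a fresh stochastic state from which, with probability $1-\df$, the play moves to a fresh terminal state, and with probability $\df$ continues as before. In the resulting game every play reaches a terminal state with probability $1$ (the number of original steps before termination is geometrically distributed), so it is stopping, and the total reward of a path in the new game equals the $\df$-discounted reward of the corresponding path in the original game; the transformation is polynomial and preserves achievability of $\bound$. Thus $\dis(\reward,\df,\bound)$ is achievable in $\game$ iff $\tot(\reward,\bound)$ is achievable in the transformed stopping game, and the latter is decidable by the first part. The genuinely hard part is \thmref{thm:stopping-no-accum}, which we assume; given it, the only points needing care are that the set-algebra on the right-hand sides is faithfully captured in $(\Rset,+,\cdot,\le)$ with a \emph{uniform} generator bound $k$ per state (handled by leaving the generators as existentially quantified variables), that correctness of the stopping criterion rests on uniqueness of the fixpoint, and that the mild subtlety of ``achievable'' meaning exact achievability is resolved because the unique solution, being of the form $\dwc(\conv(\cdot))$ with finitely many generators, is topologically closed.
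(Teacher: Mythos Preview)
Your proposal is correct and follows essentially the same approach as the paper: encode, for each $k$, the existence of $k$ generators per state whose downward convex closures satisfy the fixpoint equations of Sec.~\ref{sec:eqns} as a sentence $\varphi_k$ over $(\Rset,+,\cdot,\le)$, decide it via Tarski, iterate on $k$ with termination guaranteed by Thm.~\ref{thm:stopping-no-accum} and correctness by uniqueness of the fixpoint for stopping games, and reduce the discounted case via the standard $(1-\df)$-termination gadget. You spell out more carefully than the paper does why the set operations are first-order definable and why closedness of the witnessed solution resolves the exact-achievability question, but the underlying argument is the same.
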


\startpara{Outline of the proof of Thm.~\ref{thm:stopping-no-accum}}
The proof of Thm.~\ref{thm:stopping-no-accum} proceeds by assuming that there are infinitely many extremal points on the Pareto curve, and then deriving a contradiction.
Firstly, because the game is stopping, an upper bound on the expected total reward that can be obtained with respect to a single total reward objective is $M:=\sum_{i=0}^\infty (1-p_{min}^{|S|}) \cdot \singlereward_{max}^{|S|}$ where $p_{min} = \min \{\Delta(s,s') \mid \Delta(s,s')>0\}$ is the smallest transition probability, and
$\singlereward_{max} = \max_{i\in\{1,2\}} \max_{s\in S} \reward_i(s)$ is the maximal reward assigned to a state. Thus, the Pareto curve
is contained in a compact set, and this implies that there is an accumulation point on it.
In Sec.~\ref{sec:mapping}, we show that we can follow one accumulation point~$\vec{p}$ from one state to one of its successors, while preserving the same left slope. 
Moreover, in the neighbourhood of the accumulation point the rate at which the right slope decreases is quite similar to the decrease in the successors, in a way that is made precise in Lem.~\ref{lem:slope-convex-union}, \ref{lem:slope-intersection}, and \ref{lem:accumulation-stochastic-two}.
This is with the exception of some stochastic states for which the decrease strictly slows down when going to the successors: we will exploit this fact to get a contradiction.
We construct a transition system $T_{s_0,\vec{p}}$, which keeps all the paths obtained by following the accumulation point $\vec{p}$ from $s_0$.
We show that if \game is a stopping game, then we can obtain a path in $T_{s_0,\vec{p}}$ which visits stochastic states for which the decrease of the right slope strictly slows down.
This relies on results for {\em inverse betting games}, which are presented in Sec.~\ref{subsec:inverse-betting}.
Since this decrease can be repeated and there are only finitely many reachable states in $T_{s_0,\vec{p}}$, we show in Sec.~\ref{sec:contra} that the decrease of the right slope must be zero somewhere, meaning that the curve is constant in the neighbourhood of an accumulation point, which is a contradiction.

We will rely on the properties of the equations from Sec.~\ref{sec:eqns} and the left and right slopes of
the Pareto curve. Note that we introduced the notion of slope only for two-dimensional sets,
and so our proofs only work for two dimensions. 
Generalisations of the concept of slopes exist for higher dimensions, but simple generalisation of our lemmas would not be valid,
as we will show later.
Hence, in the remainder of this section, we focus on the two-objective case. For the simplicity of presentation, we will present
all claims and proofs for {\em left} accumulation points. The case of right accumulation points is analogous.

\subsection{Mapping accumulation points to successor states}\label{sec:mapping}
We start by enumerating some basic but useful properties of the Pareto curve and its slopes. 
First notice that it is a continuous concave function and we can prove the following:
\begin{lemma}\label{lem:general-lemma-bis}
  Let $f$ be a continuous concave function defined on $[a,b]$.
  \begin{enumerate}
  \item If $a < x < x' \le b$ are two reals for which $\lslope(f)$ is defined, 
      then $\lslope(f,x) \ge \rslope(f,x) \ge \lslope(f,x')$.
    \label{lem:slope-decrease-bis}
  \item
    If $\openbegin x, x' \openend$ contains an extremal point of $f$ then $\lslope(f_s,x) \ne \lslope(f_s,x')$.\label{lem:extremal-slopes-bis}  
  \item If $x\in \openbegin a ,b ]$, then $\lim_{x' \to x^-} \lslope(f,x') = \lim_{x' \to x^-} \rslope(f,x') = \lslope(f,x)$. \label{lem:limit-slope-bis}
  \end{enumerate}
\end{lemma}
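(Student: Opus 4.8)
The plan is to prove the three statements about a continuous concave function $f$ on $[a,b]$ by exploiting the single key fact stated in the preliminaries: for concave $f$, the difference quotient $q_x(x') := \frac{f(x) - f(x')}{x - x'}$ is monotonically non-increasing in $x'$ (this is the standard "slopes decrease" property of concave functions, which follows directly from the defining inequality $f(tx_1 + (1-t)x_2) \ge tf(x_1) + (1-t)f(x_2)$). Everything reduces to careful bookkeeping with this monotonicity and with one-sided limits.

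\emph{Part 1.} Fix $a < x < x' \le b$. The left slope $\lslope(f,x)$ is $\lim_{y \to x^-} q_x(y)$ and the right slope $\rslope(f,x)$ is $\lim_{y \to x^+} q_x(y)$; since $q_x$ is non-increasing in its argument and crossing from $y < x$ to $y > x$ only decreases it, we get $\lslope(f,x) \ge \rslope(f,x)$. For $\rslope(f,x) \ge \lslope(f,x')$: pick any $y$ with $x < y < x'$. By monotonicity of the difference quotient, $\rslope(f,x) = \lim_{z\to x^+} q_x(z) \ge q_x(y) = \frac{f(x)-f(y)}{x-y}$, and this quantity equals $\frac{f(y)-f(x)}{y-x} = q_{x'}(\cdot)$-type quotient between $x$ and $y$; by the same monotonicity applied at the point $x'$, $\frac{f(x)-f(y)}{x-y} \ge \lim_{z \to x'^-} q_{x'}(z) = \lslope(f,x')$. (Concretely: the secant slope between $x$ and $y$ is $\ge$ the secant slope between $y$ and $x'$ by concavity, and the latter is $\ge \lslope(f,x')$ by letting the left endpoint tend to $x'$.) Chaining the inequalities gives the claim.

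\emph{Part 2.} Suppose $(x,x')$ contains an extremal point $x_0$, i.e. $(x_0, f(x_0))$ is not a convex combination of other points of the graph. We argue contrapositively: if $\lslope(f,x) = \lslope(f,x')$, then by Part 1, for every $y \in (x,x')$ we have $\lslope(f,x) \ge \lslope(f,y) \ge \rslope(f,y) \ge \lslope(f,x')$, so all these slopes coincide with a common value $c$. This forces $f$ to be affine with slope $c$ on $[x,x']$: indeed if $f$ were strictly above the chord anywhere on $(x,x')$ the slopes would strictly decrease somewhere, and it cannot dip below the chord by concavity. But then every point $(y, f(y))$ with $y \in (x,x')$ is a proper convex combination of $(x,f(x))$ and $(x',f(x'))$, contradicting that $x_0$ is extremal. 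Hence $\lslope(f,x) \ne \lslope(f,x')$.

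\emph{Part 3.} Fix $x \in (a,b]$. By Part 1, for $a < y < y' < x$ we have $\lslope(f,y) \ge \rslope(f,y) \ge \lslope(f,y')$, so $y \mapsto \lslope(f,y)$ and $y \mapsto \rslope(f,y)$ are non-increasing on $(a,x)$ and sandwiched together; hence they have a common limit $L := \lim_{y \to x^-}\lslope(f,y) = \lim_{y\to x^-}\rslope(f,y)$ (possibly $+\infty$ only if $x = a$, excluded, so it is finite; more carefully $L \ge \lslope(f,x)$ and is finite since $f$ is finite-valued and concave on a neighbourhood). It remains to identify $L$ with $\lslope(f,x)$. On one hand, for each $y < x$, $\rslope(f,y) \ge q_x(\text{secant between } y \text{ and } x)$... — rather, $\lslope(f,x) = \lim_{y\to x^-} q_x(y)$ and $q_x(y) \ge \rslope(f,y)$ by Part 1 applied with the pair $y < x$ (secant slope between $y$ and $x$ is $\le \lslope(f,y)$ is false; it is $\ge \rslope(f,y)$ and $\le \lslope(f,y)$), so taking $y \to x^-$ gives $\lslope(f,x) \ge L$. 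On the other hand, for $y < x$, the secant slope $q_x(y)$ satisfies $q_x(y) \le \lslope(f,y)$, hence $\lslope(f,x) = \lim_{y\to x^-} q_x(y) \le \lim_{y\to x^-}\lslope(f,y) = L$. Therefore $L = \lslope(f,x)$, as required.

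The main obstacle is purely notational: all three parts are elementary consequences of monotonicity of secant slopes, but one must be careful about which secant slope lies between which one-sided slopes, and about the direction of limits. I expect Part 2 — specifically, arguing rigorously that equality of the two boundary left-slopes forces affinity on the whole interval, and that affinity destroys extremality — to need the most care, since it is the only place where the \emph{extremal point} definition (rather than just slopes) enters.
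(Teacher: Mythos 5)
Parts 1 and 2 of your proposal are correct and follow essentially the same route as the paper: monotonicity of secant slopes for Part 1, and for Part 2 the observation that equal left slopes at $x$ and $x'$ force the slope to be constant, hence $f$ affine, on $[x,x']$, which destroys extremality of any interior point (the paper runs the same argument in contrapositive form, exhibiting a point whose left slope strictly exceeds the chord slope whenever an extremal point sits strictly above the chord).

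Part 3 has a genuine gap. Writing $L=\lim_{y\to x^-}\lslope(f,y)$ and $q_x(y)=\frac{f(x)-f(y)}{x-y}$, you claim $q_x(y)\ge\rslope(f,y)$ and use this to conclude $\lslope(f,x)\ge L$. That inequality is false: for concave $f$ and $y<x$ the correct chain is $\lslope(f,y)\ge\rslope(f,y)\ge q_x(y)\ge\lslope(f,x)$, so the secant lies \emph{below} $\rslope(f,y)$, not above it (try $f(t)=-t^2$, where $q_x(y)=-(x+y)<-2y=\rslope(f,y)$). Taking limits in the correct inequality only reproduces the easy direction $\lslope(f,x)\le L$, which already follows from Part 1; your ``on the other hand'' paragraph then proves that same easy direction a second time. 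The direction $\lslope(f,x)\ge L$, which is the entire content of the statement, is therefore unproved --- and your own wavering in the parenthetical shows the sign confusion. To fix it, keep the \emph{left} endpoint of the secant fixed: for $z<y<x$ concavity gives $\lslope(f,y)\le\frac{f(y)-f(z)}{y-z}$; letting $y\to x^-$ with $z$ fixed and using continuity of $f$ yields $L\le\frac{f(x)-f(z)}{x-z}$ for every $z<x$, and letting $z\to x^-$ gives $L\le\lslope(f,x)$. (The paper instead argues by contradiction from $L>\lslope(f,x)$ via its intermediate-slope lemma, which produces a point of $[x-\varepsilon,x]$ whose left slope is at most the secant slope over that interval.)
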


To prove Thm.~\ref{thm:stopping-no-accum}, we will use the equations from
Sec.~\ref{sec:eqns} to describe how accumulation points on a Pareto curve for $s$ ``map'' to accumulation
points on successors. 
\begin{figure}[t]
    \begin{tikzpicture}[scale=2]
      \draw[-latex'] (0,0) -- (1.3,0);
      \draw[-latex'] (0,0) -- (0,1.3);
      \draw[blue!20,very thick] (0,1) arc (90:45:0.5) coordinate (bla);
      \draw[blue!20,very thick] (1.087,0) arc(0:42:0.31) -- (bla);
      \draw[dotted] (0,1) arc (90:10:0.5) -- (0.6,0);
      \draw[dashed] (0,0.6) arc (90:50:1.5) arc (53:0:0.31);
      \draw (0.2,0.955) node [fill,inner sep=0mm,minimum size=1mm] {} node[above] {$\vec{p}$};
    \end{tikzpicture}
\hfill
    \begin{tikzpicture}[scale=2]
      \draw[-latex'] (0,0) -- (1.3,0);
      \draw[-latex'] (0,0) -- (0,1.3);
      \draw[blue!20,very thick] (0,0.6) arc (90:70.2:1.5) -- (0.6,0);
      \draw[dashed] (0,1) arc (90:10:0.5) -- (0.6,0);
      \draw[dotted] (0,0.6) arc (90:50:1.5) arc (53:0:0.31);
      \draw (0.2,0.582) node [fill,inner sep=0mm,minimum size=1mm] {} node[above] {$\vec{p}$};
    \end{tikzpicture}
\hfill
    \begin{tikzpicture}[scale=2]
      \draw[-latex'] (0,0) -- (1.3,0);
      \draw[-latex'] (0,0) -- (0,1.3);

      \draw[blue!20,very thick] (0,0.95) arc (90:45:0.35cm)
      -- (0.85,0.25) arc (45:0:0.35cm) -- (0.95,0);
      \draw[dashed] (0,1.2) -- (1.2,0) ;
      \draw[dotted] (0.7,0) arc (0:90:0.7cm);
      \draw (1.2,0) node [fill,inner sep=0mm,minimum size=1mm] {} node[below] {$\vec{r}$};
      \draw (30:0.7) node [fill,inner sep=0mm,minimum size=1mm] {} node[below] {$\vec{q}$};
      \draw (0.6,0) + (30:0.35) node [fill,inner sep=0mm,minimum size=1mm] {} node[below] {$\vec{p}$};
    \end{tikzpicture}
\hfill
    \begin{tikzpicture}[scale=2]
      \draw[blue!20,very thick] (0.2,0.4) -- (0.6,0.4);
      \node at (0,0.4) {$s_0$};
      \draw[dotted] (0.2,0.2) -- (0.6,0.2);
      \node at (0,0.2) {$s_1$};
      \draw[dashed] (0.2,0) -- (0.6,0);
      \node at (0,0) {$s_2$};
    \end{tikzpicture}
\caption{An example of Pareto curve in a state $s_0$ with two successors $s_1$ and $s_2$, for the case of $s_0\in \statesone$ (left), $s_0\in \statestwo$ (centre), and $s_0\in \statesprob$ with uniform probabilities on transitions (right). In each case, the curve in $s_0$ has infinitely many accumulation points.\label{fig:succ}}
\end{figure}
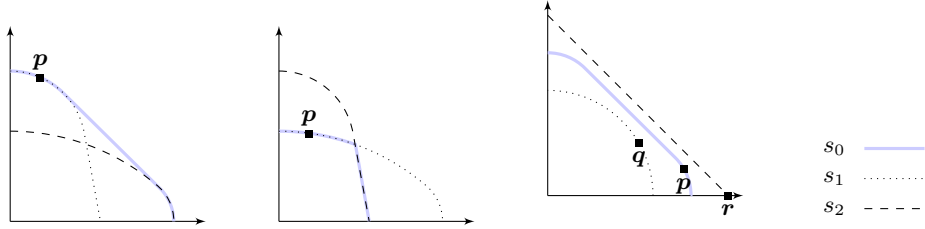

\begin{lemma}\label{lem:slope-convex-union}
  Let $s_0$ be a \PONE{} state with two successors $s_1$ and $s_2$, and let $\vec{p}$ be a left \mbox{accumulation} point of $f_{s_0}$.
  Then there is  $\eta(s_0,\vec{p}) > 0$ such that
  for all $\varepsilon \in (0, \eta(s_0,\vec{p}\ )\openend$,
  there is $s' \in \{ s_1, s_2 \}$ such that:
  \begin{inparaenum}
  \item $\vec{p}$ is a left accumulation point in $f_{s'}$; \label{lem:slope-convex-union-a}
  \item $\lslope(s_0,\firstdim{\vec{p}}) = \lslope(s',\firstdim{\vec{p}})$; \label{lem:slope-convex-union-b}
  \item $f_{s_0}(\firstdim{\vec{p}} - \varepsilon) \ge f_{s'}(\firstdim{\vec{p}} - \varepsilon)$ and $\rslope(f_{s_0},\firstdim{\vec{p}} - \varepsilon)  \ge \rslope(f_{s'},\firstdim{\vec{p}} - \varepsilon)$. \label{lem:slope-convex-union-c}
  \end{inparaenum}
\end{lemma}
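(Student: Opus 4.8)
\textbf{Proof plan for Lemma~\ref{lem:slope-convex-union}.}
The plan is to exploit the defining equation $X_{s_0} = \dwc(\conv(X_{s_1} \cup X_{s_2}))$, which says precisely that $f_{s_0}$ is the ``upper envelope'' of the two curves $f_{s_1}$ and $f_{s_2}$ (after downward and convex closure). First I would argue that near the point $\firstdim{\vec{p}}$, on a small punctured interval $(\firstdim{\vec{p}} - \eta, \firstdim{\vec{p}})$, the curve $f_{s_0}$ coincides with one of the two curves $f_{s_1}$, $f_{s_2}$ except possibly on segments coming from the convex hull (i.e. chords connecting a point of one curve to a point of the other). The key observation is that there can be only finitely many such ``hull chords'' accumulating at $\firstdim{\vec{p}}$ from the left unless one of the successor curves already has infinitely many extremal points accumulating there; I would make this precise by noting that each maximal chord segment has two endpoints, each of which is either an extremal point of $f_{s_1}$ or of $f_{s_2}$, and consecutive chords cannot share the same slope (Lem.~\ref{lem:general-lemma-bis}(2)). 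Since $\vec{p}$ is an accumulation point of $f_{s_0}$, there are infinitely many extremal points of $f_{s_0}$ arbitrarily close to $\firstdim{\vec{p}}$ from the left, hence (after shrinking $\eta$) infinitely many of them must be extremal points of $f_{s_{i}}$ for one fixed $i \in \{1,2\}$; call this successor $s'$. This handles the dichotomy and gives item~\ref{lem:slope-convex-union-a}: $\vec{p}$ is a left accumulation point of $f_{s'}$ (note $f_{s_0}(\firstdim{\vec{p}}) = f_{s'}(\firstdim{\vec{p}})$ since both equal the limit along these extremal points and $f_{s_0} \ge f_{s'}$ everywhere by the envelope property).

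For item~\ref{lem:slope-convex-union-b}, I would use Lem.~\ref{lem:general-lemma-bis}(3): the left slope at $\firstdim{\vec{p}}$ is the limit of the (right) slopes at points approaching from the left. Along the sequence of shared extremal points $x_n \to \firstdim{\vec{p}}^-$, the right slopes of $f_{s_0}$ and $f_{s'}$ agree (locally the curves coincide at these extremal points and in a neighbourhood, since between two consecutive shared extremal points with no intervening hull chord the two curves are identical), so taking the limit $\lslope(f_{s_0},\firstdim{\vec{p}}) = \lslope(f_{s'},\firstdim{\vec{p}})$. For item~\ref{lem:slope-convex-union-c}: $f_{s_0} \ge f_{s'}$ pointwise is immediate from $X_{s'} \subseteq X_{s_0}$. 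For the right-slope inequality at $\firstdim{\vec{p}} - \varepsilon$, I would again invoke Lem.~\ref{lem:general-lemma-bis}(1): since $f_{s_0}$ and $f_{s'}$ agree at $\firstdim{\vec{p}} - \varepsilon$ (choosing $\varepsilon \le \eta$ so this point lies on a shared segment, possibly after slightly adjusting $\eta$ so that the shared extremal points are dense enough — actually I should set up $\eta$ so that on all of $(\firstdim{\vec{p}} - \eta, \firstdim{\vec{p}})$ no hull chord strictly dominates $f_{s'}$, which is exactly what the finiteness argument above delivers) and $f_{s_0} \ge f_{s'}$ to the right, the right slope of $f_{s_0}$ at $\firstdim{\vec{p}} - \varepsilon$ is at least that of $f_{s'}$ there.

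The main obstacle I anticipate is making rigorous the claim that on a sufficiently small left-neighbourhood of $\firstdim{\vec{p}}$ the envelope $f_{s_0}$ equals $f_{s'}$ (not just shares infinitely many extremal points with it). A priori the two curves could interleave, with hull chords from $s'$ to the other successor dominating on infinitely many subintervals accumulating at $\firstdim{\vec{p}}$; I need to rule this out. The resolution is the slope-monotonicity of concave functions: a hull chord has a fixed slope, and along a sequence of chords accumulating at $\firstdim{\vec{p}}$ the slopes would form a monotone sequence converging to $\lslope(f_{s_0},\firstdim{\vec{p}})$, but between consecutive chords the curve $f_{s'}$ must ``turn'' with a strictly different slope (Lem.~\ref{lem:general-lemma-bis}(2) applied to $f_{s'}$), which forces $f_{s'}$ itself to have infinitely many extremal points accumulating at $\firstdim{\vec{p}}$ — so $\vec{p}$ is an accumulation point of $f_{s'}$ in any case, and one checks the endpoints of the chords are extremal points of $f_{s'}$ too, after which a careful bookkeeping shows we may take $\eta$ small enough that $f_{s_0} = f_{s'}$ on $(\firstdim{\vec{p}} - \eta, \firstdim{\vec{p}}]$ up to the (now harmless) boundary. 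I would isolate this bookkeeping as the technical heart of the argument.
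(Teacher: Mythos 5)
Your treatment of items (1) and (2) is essentially the paper's: extremal points of $f_{s_0}$ are extremal points of $f_{s_1}$ or of $f_{s_2}$, a pigeonhole argument gives one successor $s'$ owning infinitely many of them, and the left slope at $\firstdim{\vec{p}}$ is recovered as a limit along those shared points. One simplification: the paper gets item (2) directly from the difference quotients $\frac{f_{s_0}(\firstdim{\vecsup{p}{i}})-f_{s_0}(\firstdim{\vec{p}})}{\firstdim{\vecsup{p}{i}}-\firstdim{\vec{p}}}$, which coincide with those of $f_{s'}$ simply because the two curves agree at those abscissae and at $\firstdim{\vec{p}}$; you do not need your stronger claim that the curves coincide on neighbourhoods of the shared points.

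The gap is in item (3). You reduce the right-slope inequality to the claim that, after shrinking $\eta$, $f_{s_0}$ coincides with a single $f_{s'}$ on all of $(\firstdim{\vec{p}}-\eta,\firstdim{\vec{p}}]$. This fails in precisely the case your ``bookkeeping'' is meant to dispose of: when both $s_1$ and $s_2$ have $\vec{p}$ as a left accumulation point with the same left slope and their extremal points interleave, the envelope $f_{s_0}$ consists of infinitely many hull chords accumulating at $\firstdim{\vec{p}}$ and is strictly above \emph{both} $f_{s_1}$ and $f_{s_2}$ at most abscissae of every left neighbourhood. Your resolution only shows that in this situation $f_{s'}$ also has an accumulation point at $\vec{p}$; it does not prevent the interleaving, so the premise ``$f_{s_0}$ and $f_{s'}$ agree at $\firstdim{\vec{p}}-\varepsilon$'' on which your slope comparison rests is unavailable. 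Note that the lemma is quantified as ``for all $\varepsilon$ there is $s'$'', i.e.\ $s'$ may depend on $\varepsilon$; you are implicitly proving the uniform version, which is false in general. The paper instead establishes a separate sandwich lemma (Lem.~\ref{lem:bound-slope-convex-union}): for every $x$, $\min(\rslope(f_{s_1},x),\rslope(f_{s_2},x)) \le \rslope(f_{s_0},x) \le \max(\rslope(f_{s_1},x),\rslope(f_{s_2},x))$, valid even when $f_{s_0}(x)$ lies strictly above both successor curves; its proof uses supporting lines and dot products with a normal vector rather than pointwise coincidence. With that in hand, for each $\varepsilon$ one picks the successor attaining the minimum (the value inequality $f_{s_0}\ge f_{s'}$ holding automatically for both successors). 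To repair your proof you would need either to prove that sandwich inequality, or to confine the coincidence argument to the degenerate cases (only one successor has an accumulation point at $\vec{p}$, or the left slopes there differ), where coincidence on a left neighbourhood does hold and is indeed how the paper treats those subcases.
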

\begin{proof}[Sketch]
The point
\begin{inparaenum}
 \item follows from the fact that every extremal point in the Pareto curve for $s_0$ must be an extremal point in one of the successors. 
   This is illustrated in \figurename~\ref{fig:succ}~(left): $p$ which is an extremal point for $s_0$ is also an extremal point for $s_1$.
The point
 \item follows because from a sequence of extremal points $(\vecsup{p}{i})_{i\ge 0}$ on the Pareto curve of $s_0$ that converge to $\vec{p}$, we can select a 
  subsequence that gives extremal points on $s'$ that converge to the left accumulation point $\vec{p}$ on
  $s'$. Finally, to prove
 \item we use the fact that the right slope of $f_{s_0}$ is always between those of $f_{s_1}$ and of $f_{s_2}$.
\end{inparaenum}
\qed
\end{proof}

\begin{lemma}\label{lem:slope-intersection}
  Let $s_0$ be a \PTWO{} state with two successors $s_1$ and $s_2$, and let $\vec{p}$ be a left \mbox{accumulation} point of $f_{s_0}$.
  There is $\eta(s_0,\vec{p}) > 0$ such that for all $\varepsilon \in (0,\eta(s_0,\vec{p})\ \openend$, there is $s' \in \{s_1,s_2\}$, such that:
  \begin{inparaenum}
  \item $\vec{p} - \reward(s_0)$ is a left accumulation point in $f_{s'}$; \label{lem:slope-intersection-a}
  \item $\lslope(s_0,\firstdim{\vec{p}}) = \lslope(s',\firstdim{\vec{p}}-\reward_1(s_0))$; \label{lem:slope-intersection-b}
  \item  $f_{s_0}(\firstdim{\vec{p}}-\varepsilon) = f_{s'}(\firstdim{\vec{p}}-\varepsilon-\reward_1(s_0))$ and $\rslope(f_{s_0},\firstdim{\vec{p}}-\varepsilon) = \rslope(f_{s'},\firstdim{\vec{p}}-\varepsilon-\reward_1(s_0))$.\label{lem:slope-intersection-c}
  \end{inparaenum}
\end{lemma}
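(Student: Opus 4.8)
The plan is to read the statement off the equation $X_{s_0} = \reward(s_0) + \dwc(X_{s_1}\cap X_{s_2})$ from Sec.~\ref{sec:eqns}. Each $X_{s_i}$ is downward closed, hence so is $X_{s_1}\cap X_{s_2}$ and the outer $\dwc$ is vacuous; consequently $f_{s_0}$ is, up to the rigid translation by $\reward(s_0)$ (a shift of $\reward_1(s_0)$ in the first coordinate and of $\reward_2(s_0)$ in the second), the pointwise minimum $g \mydef \min(f_{s_1},f_{s_2})$ of the two successor curves on their common domain. Since translations preserve extremal points, accumulation points and slopes, it suffices to prove claims (a)--(c) with $f_{s_0}$ and $\vec p$ replaced by $g$ and $\vec q \mydef \vec p - \reward(s_0)$, where $\vec q$ is then a left accumulation point of $g$. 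I will also use two elementary facts, each proved by a one-line computation: the pointwise minimum of concave functions is concave, and at a point $x$ with $f_{s_1}(x) = f_{s_2}(x)$ one has $\rslope(g,x) = \min(\rslope(f_{s_1},x),\rslope(f_{s_2},x))$.

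Next I would split on the two open sets $U_1 = \{x : f_{s_1}(x) < f_{s_2}(x)\}$ and $U_2 = \{x : f_{s_2}(x) < f_{s_1}(x)\}$ near $\firstdim{\vec q}$. \textbf{Non-interleaving case:} some interval $(\firstdim{\vec q}-\delta,\firstdim{\vec q}]$ misses $U_1$ (symmetrically $U_2$). Then $f_{s_2}\le f_{s_1}$ there, so $g = f_{s_2}$ on that whole interval; taking $s' = s_2$ for every $\varepsilon\le\delta$, claims (a), (b), (c) all hold at once because $g$ and $f_{s'}$ literally coincide on a left-neighbourhood of $\firstdim{\vec q}$. \textbf{Interleaving case:} $U_1$ and $U_2$ both accumulate at $\firstdim{\vec q}$ from the left. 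Then $\firstdim{\vec q}\notin U_1\cup U_2$, so $f_{s_1}(\firstdim{\vec q}) = f_{s_2}(\firstdim{\vec q}) = g(\firstdim{\vec q}) = \seconddim{\vec q}$ and $\vec q$ lies on both curves. For a given $\varepsilon$ I pick $s'$ according to whether $\firstdim{\vec q}-\varepsilon$ is an interior point of $U_1$, of $U_2$, or a point where $f_{s_1}=f_{s_2}$ (in the last case, the one attaining $\min(\rslope(f_{s_1},\cdot),\rslope(f_{s_2},\cdot))$ there). Part (c) is then immediate for the chosen $s'$: on $U_i$ one has $g = f_{s_i}$ on a neighbourhood, and at an equality point the value is shared and $\rslope(g,\cdot) = \min(\rslope(f_{s_1},\cdot),\rslope(f_{s_2},\cdot))$. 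What remains is (a) and (b) for \emph{both} successors, since $\varepsilon$ may select either.

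For (b) in the interleaving case: by Lem.~\ref{lem:general-lemma-bis}(\ref{lem:limit-slope-bis}), $\lslope(g,\firstdim{\vec q}) = \lim_{x\to\firstdim{\vec q}^-}\rslope(g,x)$; evaluating this limit along a sequence of interior points of $U_i$ tending to $\firstdim{\vec q}$ — on which $\rslope(g,\cdot)$ agrees with $\rslope(f_{s_i},\cdot)$ — and applying the same lemma to $f_{s_i}$ gives $\lslope(g,\firstdim{\vec q}) = \lslope(f_{s_i},\firstdim{\vec q})$ for $i=1,2$. For (a) it suffices to show that neither $f_{s_i}$ is affine on any left-neighbourhood of $\firstdim{\vec q}$. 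Suppose for contradiction $f_{s_1}$ coincides with an affine function $\ell$ on $(\firstdim{\vec q}-\delta,\firstdim{\vec q})$. On $(\firstdim{\vec q}-\delta,\firstdim{\vec q}]$ we then have $g\le f_{s_1} = \ell$, with equality at $\firstdim{\vec q}$ and at every point of $U_1$ in that interval, so $\ell - g$ is a non-negative convex function (affine minus concave) vanishing on a set that accumulates at $\firstdim{\vec q}$. A non-negative convex function vanishing at two points vanishes on the segment between them, hence $\ell - g\equiv 0$, i.e.\ $g$ is affine, on a left-neighbourhood of $\firstdim{\vec q}$ — contradicting $\vec q$ being an accumulation point of $g$. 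The symmetric argument rules out $f_{s_2}$ being locally affine, completing (a).

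The step I expect to be the crux is precisely this last contradiction — excluding that a successor curve is locally affine at $\firstdim{\vec q}$ while the minimum of the two successor curves still accumulates there; the convexity of $\ell - g$ is what makes it go through, and this is exactly the place where the two-dimensional geometry (minimum of two concave graphs is again a concave graph, affine-minus-concave is convex) is genuinely used and where, as the paper notes, a naive higher-dimensional analogue would fail. The remaining ingredients — the reduction through the \PTWO equation, the bookkeeping of the $\reward(s_0)$-shift between the statement and $g$, and the two elementary slope identities — are routine and I would dispatch them with short direct arguments.
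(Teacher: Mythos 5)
Your proof is correct, and although it rests on the same foundation as the paper's --- the characterisation $f_{s_0}(x+\reward_1(s_0)) = \reward_2(s_0)+\min(f_{s_1}(x),f_{s_2}(x))$ followed by a split on whether one successor curve locally dominates the other to the left of the accumulation point --- the engine you use for the hardest step, propagating the accumulation point into a successor, is genuinely different. The paper takes a sequence of extremal points of $f_{s_0}$ converging to $\vec{p}$, thins it so that all members have pairwise distinct left slopes, pigeonholes an infinite subsequence onto a single successor realising the minimum, and then uses the ``distinct slopes force an extremal point'' fact (Lem.~\ref{lem:general-lemma-bis}.\ref{lem:extremal-slopes-bis} and its converse) to recover infinitely many extremal points of that successor; part (c) is then patched by a separate case analysis through Lem.~\ref{lem:bound-slope-intersection}. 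You instead argue contrapositively: if $f_{s_1}$ agreed with an affine $\ell$ on a left-neighbourhood, then $\ell-g$ would be a non-negative convex function vanishing at $\firstdim{\vec{q}}$ and at a point of $U_1$, hence identically zero between them, making $g$ locally affine and contradicting accumulation. This is cleaner, and in the interleaving case it yields the stronger conclusion that \emph{both} successors inherit the accumulation point and the left slope --- which is exactly what lets you decouple the per-$\varepsilon$ choice of $s'$ in (c) from (a) and (b). Your derivations of (b) via $\lim_{x\to\firstdim{\vec{q}}^-}\rslope(g,x)$ evaluated along $U_i$, and of (c) via $\rslope(g,\cdot)=\rslope(f_{s_i},\cdot)$ on $U_i$ and $\rslope(g,\cdot)=\min_i\rslope(f_{s_i},\cdot)$ at equality points, reproduce the content of Lem.~\ref{lem:bound-slope-intersection} by shorter direct computations. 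Two small things to write out when finalising: the equivalence, for a concave curve, between ``no extremal point in a left-neighbourhood'' and ``affine on that left-neighbourhood'' (it follows from Lem.~\ref{lem:general-lemma-bis} but is used silently in your reduction of (a)); and the $+\reward_2(s_0)$ shift in the second coordinate, which your translation by the full vector $\reward(s_0)$ handles correctly and which is in fact dropped in the paper's own statement of item (c).
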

\begin{proof}[Sketch]
A crucial observation here is that $f_{s_0}(\firstdim{\vecsup{p}{i}})$ is either $\reward_2(s_0) + f_{s_1}(\firstdim{\vecsup{p}{i}}-\reward_1(s_0))$ or $\reward_2(s_0) + f_{s_2}(\firstdim{\vecsup{p}{i}}-\reward_1(s_0))$.
This is illustrated in \figurename~\ref{fig:succ}~(center): $f_{s_0}(\firstdim{\vec{p}}) = \reward_2(s_0) + f_{s_1}(\firstdim{\vec{p}}-\reward_1(s_0))$ (there $\reward(s_0) = (0,0)$).
Hence when we take a sequence $(\firstdim{\vecsup{p}{i}})_{i\in \mathbb{N}}$, for some $\ell\in \{1,2\}$ the value $f_{s_0}(\firstdim{\vecsup{p}{i}})$ equals $\reward_2(s_0) + f_{s_\ell}(\firstdim{\vecsup{p}{i}}-\reward_1(s_0))$ infinitely many times. From this we get a converging sequence of points in $s_\ell$,
and obtain that the left slopes are equal in $s_0$ and $s_\ell$. By further arguing that in any
left neighbourhood of $\firstdim{\vecsup{p}{i}}-\reward_1(s_0)$ we can find infinitely many points with different left slopes, we obtain that there are also infinitely many extremal
points in the neighbourhood and hence $\firstdim{\vecsup{p}{i}}-\reward_1(s_0)$ is a left accumulation point.

As for the last item, the important observation here is that if at some point $\vec{p}'$, $f_{s_1}$ is strictly below $f_{s_2}$ then the right slope of $f_{s_0}$ corresponds to that of $f_{s_1}$, and if $f_{s_1}$ equals $f_{s_2}$ then the right slope of $f_{s_0}$ corresponds to the minimum of the right slopes of $f_{s_1}$ and $f_{s_2}$ (it is also interesting to note that the left slope corresponds to the maximum of the two).
\qed
\end{proof}

\begin{lemma}\label{lem:accumulation-stochastic-two}
  Let $s_0$ be a stochastic state with two successors $s_1$ and $s_2$, and $\vec{p}$ a left accumulation point of $f_{s_0}$.
  There are points $\vec{q}$ and $\vec{r}$ on $f_{s_1}$ and $f_{s_2}$ respectively such that $\vec{p} = \tfunction(s_0,s_1) \cdot \vec{q} + \tfunction(s_0,s_2) \cdot \vec{r}$.
  Moreover:
  \begin{enumerate}
  \item there is $(s',\vec{t}) \in \{ (s_1,\vec{q}), (s_2,\vec{r}) \}$ such that $\vec{t}$ is a left accumulation point of $f_{s'}$ and
    $\lslope(f_{s_0},\firstdim{\vec{p}}) = \lslope(f_{s'},\firstdim{\vec{t}})$;\label{lem:accumulation-stochastic-two-lap}
  \item there is $\eta(s_0,\vec{p}) > 0$ such that for all $\varepsilon \in (0,\eta(s_0,\vec{p}\ ) \openend$:
    \label{lem:accumulation-stochastic-two-eps}
    \begin{itemize}
      \item there are $\varepsilon_1 \ge 0, \varepsilon_2 \ge 0$ such that $\rslope(f_{s_0},\firstdim{\vec{p}} - \varepsilon) \ge \rslope(f_{s_1},\firstdim{\vec{q}} - \varepsilon_1)$, $\rslope(f_{s_0},\firstdim{\vec{p}} - \varepsilon) \ge \rslope(f_{s_2},\firstdim{\vec{r}} - \varepsilon_2)$, and $\varepsilon = \tfunction(s_0,s_1) \cdot \varepsilon_1 + \tfunction(s_0,s_2) \cdot \varepsilon_2$;
      \item if $\vec{r}$ is not a left accumulation point in $f_{s_2}$, or $\lslope(f_{s_0},\firstdim{\vec{p}}) \ne \lslope(f_{s_2},\firstdim{\vec{r}})$, then 
        $f_{s_0}(\firstdim{\vec{p}} - \varepsilon) = \tfunction(s_0,s_1)\cdot f_{s_1}\left(\frac{\firstdim{\vec{p}} - \varepsilon - \tfunction(s_0,s_2) \cdot \firstdim{\vec{r}}}{\tfunction(s_0,s_1)}\right) + \tfunction(s_0,s_2) \cdot \seconddim{\vec{r}}$;
      \item symmetrically, if $\vec{q}$ is not a left accumulation point in $f_{s_1}$, or $\lslope(f_{s_0},\firstdim{\vec{p}}) \ne \lslope(f_{s_1},\firstdim{\vec{q}})$, then 
        $f_{s_0}(\firstdim{\vec{p}} - \varepsilon) = \tfunction(s_0,s_1)\cdot \seconddim{\vec{q}} + \tfunction(s_0,s_2) \cdot f_{s_1}\left(\frac{\firstdim{\vec{p}} - \varepsilon - \tfunction(s_0,s_1) \cdot \firstdim{\vec{q}}}{\tfunction(s_0,s_2)}\right)$.
    \end{itemize}
  \end{enumerate}
\end{lemma}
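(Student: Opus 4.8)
\emph{Setup.} The plan is to read $f_{s_0}$ off the equation $X_{s_0} = \dwc\big(\tfunction(s_0,s_1)\cdot X_{s_1} + \tfunction(s_0,s_2)\cdot X_{s_2}\big)$. Writing $\alpha = \tfunction(s_0,s_1)$, $\beta = \tfunction(s_0,s_2)$, on its domain $f_{s_0}$ is the scaled sup-convolution $f_{s_0}(x) = \max\{\alpha f_{s_1}(u) + \beta f_{s_2}(v) \mid \alpha u + \beta v = x\}$, and any maximal point $\vec p$ of $X_{s_0}$ comes from maximal points $\vec q$ of $X_{s_1}$ and $\vec r$ of $X_{s_2}$ with $\vec p = \alpha\vec q + \beta\vec r$ (if a component were not maximal, raising it would raise $\vec p$) --- this gives the first sentence of the lemma. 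I would use two facts about this operation: (a) for \emph{any} optimal decomposition $x = \alpha u + \beta v$, the slope interval $[\rslope(f_{s_0},x),\lslope(f_{s_0},x)]$ equals $[\rslope(f_{s_1},u),\lslope(f_{s_1},u)] \cap [\rslope(f_{s_2},v),\lslope(f_{s_2},v)]$, so $\lslope(f_{s_0},x)$ is the minimum of the two left slopes, $\rslope(f_{s_0},x)$ the maximum of the two right slopes, and $x$ can be extremal in $f_{s_0}$ only if $u$ is extremal in $f_{s_1}$ or $v$ is extremal in $f_{s_2}$ (the usual care is needed when $u$ or $v$ is an endpoint of its domain --- that case is exactly when the hypothesis of one of the last two bullets fails); and (b) by concavity the optimal decomposition admits a selection $x\mapsto(u^{*}(x),v^{*}(x))$ with both $u^{*},v^{*}$ non-decreasing. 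I would fix such a selection and set $\firstdim{\vec q} := u^{*}(\firstdim{\vec p}^{-})$, $\firstdim{\vec r} := v^{*}(\firstdim{\vec p}^{-})$, so that for any extremal points $\vecsup p i$ of $f_{s_0}$ with $\firstdim{\vecsup p i}\uparrow\firstdim{\vec p}$ the induced decompositions satisfy $\firstdim{\vecsup q i}\uparrow\firstdim{\vec q}$ and $\firstdim{\vecsup r i}\uparrow\firstdim{\vec r}$. As elsewhere in Sec.~\ref{sec:general} I state everything for left accumulation points.

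\emph{Part~\ref{lem:accumulation-stochastic-two-lap}.} By fact (a) each kink $\vecsup p i$ forces $\vecsup q i$ or $\vecsup r i$ to be extremal, so by pigeonhole I may pass to a subsequence along which every $\vecsup q i$ is extremal. The $\firstdim{\vecsup q i}$ cannot be eventually constant: if $\firstdim{\vecsup q i}=\firstdim{\vec q}$ for infinitely many $i$, then for those $i$ the slope interval of $f_{s_2}$ at $\firstdim{\vecsup r i}$ would contain the positive-length slope interval of $f_{s_0}$ at the kink $\firstdim{\vecsup p i}$, impossible since by Lem.~\ref{lem:general-lemma-bis}(\ref{lem:limit-slope-bis}) the former shrinks to $\{\lslope(f_{s_2},\firstdim{\vec r})\}$ as $\firstdim{\vecsup r i}\to\firstdim{\vec r}$. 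Hence the $\firstdim{\vecsup q i}$ are infinitely many distinct extremal points of $f_{s_1}$ increasing to $\firstdim{\vec q}$, i.e.\ $\vec q$ is a left accumulation point of $f_{s_1}$. For the slope equality, Lem.~\ref{lem:general-lemma-bis}(\ref{lem:limit-slope-bis}) gives $\rslope(f_{s_1},\firstdim{\vecsup q i}),\lslope(f_{s_1},\firstdim{\vecsup q i})\to\lslope(f_{s_1},\firstdim{\vec q})$; since $[\rslope(f_{s_0},\firstdim{\vecsup p i}),\lslope(f_{s_0},\firstdim{\vecsup p i})]\subseteq[\rslope(f_{s_1},\firstdim{\vecsup q i}),\lslope(f_{s_1},\firstdim{\vecsup q i})]$ by fact (a), $\lslope(f_{s_0},\firstdim{\vecsup p i})$ is squeezed to $\lslope(f_{s_1},\firstdim{\vec q})$, while it also tends to $\lslope(f_{s_0},\firstdim{\vec p})$ by Lem.~\ref{lem:general-lemma-bis}(\ref{lem:limit-slope-bis}) at $\firstdim{\vec p}$; hence $\lslope(f_{s_0},\firstdim{\vec p})=\lslope(f_{s_1},\firstdim{\vec q})$. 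So $(s_1,\vec q)$ works, and $(s_2,\vec r)$ works symmetrically if instead $\vecsup r i$ is extremal infinitely often.

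\emph{Part~\ref{lem:accumulation-stochastic-two-eps}.} I would choose $\eta(s_0,\vec p)>0$ small enough that on $(\firstdim{\vec p}-\eta,\firstdim{\vec p})$ one has $u^{*}(x)\le\firstdim{\vec q}$, $v^{*}(x)\le\firstdim{\vec r}$, the points in the displayed equalities lie in the respective domains, and the arguments below apply. For $x = \firstdim{\vec p}-\varepsilon$ set $\varepsilon_1 := \firstdim{\vec q}-u^{*}(x)\ge 0$, $\varepsilon_2 := \firstdim{\vec r}-v^{*}(x)\ge 0$; then $\varepsilon = \firstdim{\vec p}-x = \alpha\varepsilon_1+\beta\varepsilon_2$, and fact (a) at $(u^{*}(x),v^{*}(x))$ yields $\rslope(f_{s_0},x)\ge\rslope(f_{s_1},\firstdim{\vec q}-\varepsilon_1)$ and $\rslope(f_{s_0},x)\ge\rslope(f_{s_2},\firstdim{\vec r}-\varepsilon_2)$ --- the first bullet. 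For the second bullet I would argue the contrapositive: if there is no $\eta$ for which its conclusion holds throughout, then (as $v^{*}$ is non-decreasing with left-limit $\firstdim{\vec r}$ at $\firstdim{\vec p}$) $v^{*}(x)<\firstdim{\vec r}$ strictly for all $x<\firstdim{\vec p}$ near it and $v^{*}(x)\uparrow\firstdim{\vec r}$ through infinitely many values. Looking at the kinks $\vecsup p i$ once more, with $\firstdim{\vecsup r i}$ now strictly increasing to $\firstdim{\vec r}$: if $\vecsup r i$ is extremal infinitely often I get distinct extremal points of $f_{s_2}$ below and converging to $\firstdim{\vec r}$; otherwise the $\firstdim{\vecsup r i}$ eventually lie in linear pieces of $f_{s_2}$, and either there are infinitely many such pieces (again producing extremal points of $f_{s_2}$ accumulating at $\firstdim{\vec r}$ from the left) or $f_{s_2}$ is eventually linear on a whole interval $(e,\firstdim{\vec r})$ --- but then fact (a) forces the slope interval of $f_{s_0}$ to be a singleton on a left neighbourhood of $\firstdim{\vec p}$, so $f_{s_0}$ has no extremal point there, contradicting that $\vec p$ is a left accumulation point. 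Hence $\vec r$ is a left accumulation point of $f_{s_2}$, and, exactly as in Part~\ref{lem:accumulation-stochastic-two-lap}, the slope intervals of $f_{s_2}$ at $v^{*}(x)$ and of $f_{s_0}$ at $x$ collapse to $\{\lslope(f_{s_2},\firstdim{\vec r})\}$, giving $\lslope(f_{s_0},\firstdim{\vec p})=\lslope(f_{s_2},\firstdim{\vec r})$; this makes the hypothesis of the second bullet false, finishing the contrapositive. When the hypothesis does hold, $v^{*}\equiv\firstdim{\vec r}$ near $\firstdim{\vec p}$, i.e.\ $\varepsilon_2=0$, which is precisely the displayed equality. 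The third bullet is symmetric, and $\eta(s_0,\vec p)$ is taken to be the minimum of the values produced above.

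\emph{Main obstacle.} The delicate point is the non-uniqueness of the decomposition $\vec p = \alpha\vec q + \beta\vec r$: one has to pin it down by a selection monotone in \emph{both} coordinates, so that accumulation points transfer to a successor \emph{in the correct (left) direction} and the slope identities survive passage to the limit. Closely related is the bookkeeping in Part~\ref{lem:accumulation-stochastic-two-eps} of whether a perturbation $\varepsilon$ is absorbed entirely on one side or split between the two; the hypotheses of the last two bullets are designed exactly so that a genuine split would force a transferred left accumulation point with matching slope, which is then excluded. Everything else is slope arithmetic via Lem.~\ref{lem:general-lemma-bis}, parallel to the proofs of Lem.~\ref{lem:slope-convex-union} and Lem.~\ref{lem:slope-intersection}.
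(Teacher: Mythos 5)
Your overall strategy is the one the paper uses: decompose $f_{s_0}$ as the scaled sup-convolution of $f_{s_1}$ and $f_{s_2}$, fix a selection of optimal decompositions monotone in both coordinates (the paper's Lem.~\ref{lem:optimal-stochastic} plus the uniqueness at extremal points in Lem.~\ref{lem:slope-stochastic}), and read slopes off the fact that the slope interval of $f_{s_0}$ at an optimally decomposed point is the intersection of the successors' slope intervals (the paper derives exactly this content, split into the $\min$ formula for left slopes in Lem.~\ref{lem:slope-optimal-stochastic} and the dot-product argument of Lem.~\ref{lem:slope-stochastic-two}). Your Part~\ref{lem:accumulation-stochastic-two-eps}, including the contrapositive treatment of the last two bullets, is sound and essentially parallels the paper.

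Part~\ref{lem:accumulation-stochastic-two-lap}, however, has a genuine gap. After passing to a subsequence where every $\vecsup{q}{i}$ is extremal, you claim the $\firstdim{\vecsup{q}{i}}$ cannot be eventually constant because the positive-length slope interval of $f_{s_0}$ at the kink $\firstdim{\vecsup{p}{i}}$ is contained in the slope interval of $f_{s_2}$ at $\firstdim{\vecsup{r}{i}}$, which shrinks to a singleton. That is not a contradiction: the kink intervals of $f_{s_0}$ at $\firstdim{\vecsup{p}{i}}$ also have lengths tending to $0$ (by Lem.~\ref{lem:general-lemma-bis}.\ref{lem:limit-slope-bis} applied in $s_0$), and ``positive length for each $i$'' is compatible with ``lengths tending to $0$''. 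The eventually-constant case can really occur: take $f_{s_1}$ with a single isolated kink at $\firstdim{\vec{q}}$ whose slope interval is $[-10,-1]$, and $f_{s_2}$ with kinks at $v_k \uparrow \firstdim{\vec{r}}$ whose slope intervals are $[-2+2^{-k-1},-2+2^{-k}]$; then every kink of $f_{s_0}$ near $\firstdim{\vec{p}}$ decomposes with the same extremal $\vec{q}$, both components are extremal at every kink, yet $\vec{q}$ is not a left accumulation point of $f_{s_1}$ --- the accumulation point is $\vec{r}$. Your pigeonhole on ``which component is extremal'' can therefore select the wrong successor and the subsequent argument would wrongly certify $(s_1,\vec{q})$. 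The dichotomy must instead be made on which successor sequence realises infinitely many \emph{distinct} values: the paper uses that the kinks $\vecsup{p}{i}$ carry infinitely many distinct left slopes (Lem.~\ref{lem:general-lemma-bis}.\ref{lem:extremal-slopes-bis}) together with $\lslope(f_{s_0},\firstdim{\vecsup{p}{i}}) = \min(\lslope(f_{s_1},\firstdim{\vecsup{q}{i}}),\lslope(f_{s_2},\firstdim{\vecsup{r}{i}}))$, so at least one successor sequence realises infinitely many distinct slopes, hence infinitely many distinct extremal points, hence a left accumulation point with matching slope in the limit. With that repair the rest of your proof goes through.
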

\begin{proof}[Sketch]
We use the fact that for every extremal point $\vecprime{p}$ there are unique extremal points $\vecprime{q}$ and $\vecprime{r}$ on $f_{s_1}$ and $f_{s_2}$, respectively, such that
$\vecprime{p} = \tfunction(s_0,s_1) \cdot \vecprime{q} + \tfunction(s_0,s_2) \cdot \vecprime{r}$. 

To prove item \ref{lem:accumulation-stochastic-two-lap}, we show that for all extremal point $\vecprime{p}$, $\lslope(s_0,\vecprime{p}) = \min (\lslope(s_1,\vecprime{q}), \lslope(s_2,\vecprime{r}))$, which can be surprising at first glance since one could have expected a weighted sum of the left slopes.
This fact is illustrated in \figurename~\ref{fig:succ}~(right): $\lslope(s_0,\vecprime{p}) = \lslope(s_1,\vecprime{q}) \le \lslope(s_2,\vecprime{r})$.
The inequality $\lslope(s_0,\vec{p}) \le \lslope(s_1,\vec{q})$ (and similarly $\lslope(s_0,\vec{p}) \le \lslope(s_2,\vec{r})$), follows from concavity of $f_{s_0}$: because for all
$\varepsilon>0$ the inequality $f_{s_0}(\firstdim{\vec{p}} - \varepsilon) \ge \tfunction(s_0,s_1) \cdot f_{s_1}(\firstdim{\vec{q}} - \frac{\varepsilon}{\tfunction(s_0,s_1)}) + \tfunction(s_0,s_2) \cdot f_{s_2}(\firstdim{\vec{r}})$
holds true, from which we obtain $\lim_{\varepsilon \to 0^+} \frac{f_{s_0}(\firstdim{\vec{p}})- f_{s_0}(\firstdim{\vec{p}} - \varepsilon)}{\varepsilon} \le \lim_{\varepsilon \to 0^+} \frac{f_{s_1}(\firstdim{\vec{q}}) - f_{s_1}(\firstdim{\vec{q}} - \varepsilon)}{\varepsilon}$. Showing that the left slope is at least the minimum of the successors' slopes is significantly more demanding and technical, and we give the proof in the appendix.

Proving the second point, is based on the observation that a point on the Pareto curve $f_{s_0}$ is a combination of points of $f_{s_1}$ and $f_{s_2}$ that share a common tangent: in other words they maximize the dot product with a specific vector on their respective curves.
From this observation it is possible to link the right slopes of these curves.

The last two points hold because with the assumption, extremal points that converge to $\vec{p}$ from the left can be obtained as a combination from a fixed $\vec{r}$ and points on $f_{s_2}$.
\qed
\end{proof}

Now we will prove that there are no left accumulation points on the Pareto curve.
To do that, we will try to follow one in the game: if there is a left accumulation point in one state then at least one of its successors also has one, as the above lemmas show.
By using the fact that the left slopes of left accumulation points are preserved we show that the number of reachable combinations $(s, \vec{p})$, where $s\in \states$ and $\vec{p}$ is a left accumulation point, is finite.
We then look at points slightly to the left of the accumulation points, their distance to the accumulation point and right slopes are also mostly preserved except in stochastic states, where if only one successor has a left accumulation point, the decrease of the right slope accelerate (by Lem.~\ref{lem:accumulation-stochastic-two}.\ref{lem:accumulation-stochastic-two-eps}).
By using the fact that in stopping games we can ensure visiting such stochastic states, we will show that for some states the right slope is constant on the left neighbourhood of the left accumulation point, which is a contradiction.

Assume we are given a state $s_0$ and a left accumulation point $\vecsup{p}{0}$ of $f_{s_0}$.
We construct a transition system $T_{s_0,\vecsup{p}{0}}$ where the initial state is $(s_0,\vecsup{p}{0})$, and the successors of a given configuration $(s,\vec{p})$ are the states $(s',\vecprime{p})$ such that $s'$ is a successor of $s$, and $\vecprime{p}$ is a left accumulation point of $s$ with the same left slope on $f_{s'}$ as $\vec{p}$ on $f_{s}$.
Lem.~\ref{lem:slope-convex-union}, \ref{lem:slope-intersection}, and \ref{lem:accumulation-stochastic-two}, ensure that all the reachable states have at least one successor.

\begin{lemma}\label{lem:finite-transition-system}
  For all reachable states $(s,\vec{p})$ and $(s',\vecprime{p})$ in the transition system $T_{s_0,\vecsup{p}{0}}$, if $s = s'$, 
  then $\vec{p} = \vecprime{p}$.
\end{lemma}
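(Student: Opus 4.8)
The plan is to ride the one invariant that the transition system $T_{s_0,\vecsup{p}{0}}$ was built to carry along every path, namely the value of the left slope at the accumulation point. By definition of the successor relation, whenever $(s',\vecprime{p})$ is a successor of $(s,\vec p)$ the point $\vecprime{p}$ is again a left accumulation point of $f_{s'}$ and $\lslope(f_{s'},\firstdim{\vecprime{p}}) = \lslope(f_s,\firstdim{\vec p})$; this is precisely what items~\ref{lem:slope-convex-union-a}--\ref{lem:slope-convex-union-b} of Lem.~\ref{lem:slope-convex-union}, items~\ref{lem:slope-intersection-a}--\ref{lem:slope-intersection-b} of Lem.~\ref{lem:slope-intersection}, and item~\ref{lem:accumulation-stochastic-two-lap} of Lem.~\ref{lem:accumulation-stochastic-two} supply in the three cases $s\in\statesone$, $s\in\statestwo$ and $s\in\statesprob$. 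So I would first observe, by an immediate induction on the length of a path from the root $(s_0,\vecsup{p}{0})$, that every reachable configuration $(s,\vec p)$ satisfies: (i) $\vec p$ is a left accumulation point of $f_s$, and (ii) $\lslope(f_s,\firstdim{\vec p}) = \theta$, where $\theta \mydef \lslope(f_{s_0},\firstdim{\vecsup{p}{0}})$ is the left slope at the root.

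Next I would suppose, towards a contradiction, that $(s,\vec p)$ and $(s,\vecprime{p})$ are both reachable with $\vec p \ne \vecprime{p}$. Since accumulation points lie on the graph of $f_s$, i.e.\ $\seconddim{\vec p} = f_s(\firstdim{\vec p})$ and $\seconddim{\vecprime{p}} = f_s(\firstdim{\vecprime{p}})$, the first coordinates must differ; assume w.l.o.g.\ that $\firstdim{\vec p} < \firstdim{\vecprime{p}}$. Both left slopes are defined (each point, being a left accumulation point, has extremal points immediately to its left, so the domain of $f_s$ extends left of $\firstdim{\vec p}$), and by (ii) both equal $\theta$. Now invoke (i) for $\vecprime{p}$: choosing $\varepsilon \mydef \tfrac{1}{2}(\firstdim{\vecprime{p}} - \firstdim{\vec p}) > 0$ in the definition of a left accumulation point yields an extremal point $x^{*}$ of $f_s$ with $x^{*} < \firstdim{\vecprime{p}}$ and $\firstdim{\vecprime{p}} - x^{*} < \varepsilon$, hence $x^{*} \in (\firstdim{\vec p}, \firstdim{\vecprime{p}})$. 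Thus the open interval $(\firstdim{\vec p},\firstdim{\vecprime{p}})$ contains an extremal point of $f_s$, so Lem.~\ref{lem:general-lemma-bis}.\ref{lem:extremal-slopes-bis} yields $\lslope(f_s,\firstdim{\vec p}) \ne \lslope(f_s,\firstdim{\vecprime{p}})$, contradicting that both equal $\theta$. Therefore $\vec p = \vecprime{p}$.

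I do not expect any real obstacle here: the statement is essentially bookkeeping on top of the slope-preservation already established in Sec.~\ref{sec:mapping}. The two points that need a little care are (a) checking that slope preservation holds uniformly across all three state types, so that $\theta$ is a genuine global invariant of $T_{s_0,\vecsup{p}{0}}$ rather than something that could drift along a path; and (b) choosing $\varepsilon$ strictly smaller than the gap $\firstdim{\vecprime{p}} - \firstdim{\vec p}$ (half of it works), so that the extremal point furnished by the left-accumulation-point hypothesis genuinely falls strictly inside the interval whose endpoints carry the same left slope.
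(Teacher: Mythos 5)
Your proof is correct and follows essentially the same route as the paper's: both arguments use the fact that the left slope at the accumulation point is preserved along every edge of $T_{s_0,\vecsup{p}{0}}$ (so both configurations carry the root's left slope), then note that $\vecprime{p}$ being a left accumulation point places an extremal point strictly inside $\openbegin\firstdim{\vec p},\firstdim{\vecprime{p}}\openend$, and conclude via Lem.~\ref{lem:general-lemma-bis}.\ref{lem:extremal-slopes-bis} that the two left slopes would have to differ, a contradiction. The only difference is that you spell out the slope-preservation invariant by induction over the three state types, which the paper compresses into ``by construction of $T_{s_0,\vecsup{p}{0}}$''.
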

\begin{proof}
  Assume $s = s'$.
  By construction of $T_{s_0,\vecsup{p}{0}}$,
  the left slope in $s$ of $\vec{p}$ and~$\vecprime{p}$ is the same: $\lslope(s,\firstdim{{\vec{p}}}) = \lslope(s_0,\firstdim{\vecsup{p}{0}} ) = \lslope(s_2,\firstdim{\vecprime{p}} )$.
  Assume towards a contradiction that $\vec{p} < \vecprime{p}$; the proof would work the same for $\vecprime{p} < \vec{p}$.
  Since $\vecprime{p}$ is a left accumulation point, there is an extremal point in $\openbegin \firstdim{\vec{p}} ,\firstdim{\vecprime{p}} \openend$.
  Lem.~\ref{lem:general-lemma-bis}.\ref{lem:extremal-slopes-bis} tells us that $\lslope(s_1,\firstdim{\vec{p}} ) \ne \lslope(s_2,\firstdim{\vecprime{p}} )$ which is a contradiction.
  Hence $\vec{p} = \vecprime{p}$.
\qed
\end{proof}
As a corollary of this lemma, the number of states that are reachable in $T_{s_0,\vecsup{p}{0}}$ is finite and bounded by $|S|$.

\subsection{Inverse betting game}\label{subsec:inverse-betting}
To show a contradiction, we will follow a path with left accumulation points.
We want this path to visit stochastic states which have only one successor in $T_{s_0,\vecsup{p}{0}}$.
For that, we will prove a property of an intermediary game that we call an inverse betting game.

An \emph{inverse betting game} is a two player game, given by $\langle V_\exists,V_\forall, E, (v_0,c_0), w \rangle$ where
$V_\exists$ and $V_\forall$ are the set of vertices controlled by \eve and \adam, respectively,
$\langle V_\exists \cup V_\forall, E \rangle$ is a graph whose each vertex has two successors,
$(v_0, c_0) \in V \times \mathbb{R}$ is the initial configuration, and
$w \colon E \to \mathbb{R}$ is a weight function such that for all $v \in V$:
\(\sum_{v' \mid (v,v') \in E} w(v,v') = 1\).

A configuration of the game is a pair $(v,c) \in V \times \mathbb{R}$ where $v$ is a vertex and $c$ a credit.
The game starts in configuration $(v_0, c_0)$ and is played by two players \eve and \adam.
At each step, from a configuration $(v,c)$ controlled by \eve, \adam suggests a valuation $d \colon E \to \mathbb{R}$ for the outgoing edges of $v$ such that $\sum_{v' \mid (v,v') \in E} w(v,v') \cdot d(v,v') = c$.
\eve then chooses a successor $v'$ such that $(v,v')\in E$ and the game continues from configuration $(v', d(v,v'))$.
From a configuration $(v,c)$ controlled by \adam, \adam choses a successor $v'$ of $v$ and keeps the same credit, hence the game continues from $(v',c)$.

Intuitively, \adam has some credit, and at each step he has to distribute it by betting over the possible successors.
Then \eve choses the successor and \adam gets a credit equal to its bet divided by the probability of this transition.
The game is \emph{inverse} because \eve is trying to maximize the credit of \adam.

\begin{theorem}\label{thm:inverse-betting}
  Let $\langle V_\exists,V_\forall, E, (v_0,c_0), w \rangle$ be an inverse betting game.
  Let $T \subseteq V_\exists \cup V_\forall$ be a target set and $B \in \mathbb{R}$ a bound.
  If from every vertex $v\in V$, \eve has a strategy to ensure visiting $T$ then she has one to ensure visiting it with a credit $c \ge 1$ or to exceed the bound, that is, she can force a configuration in $(T \times [c_0,+\infty\openend) \cup (V \times [B, +\infty\openend)$.
\end{theorem}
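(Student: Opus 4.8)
The plan is to reduce the theorem to a reachability-style statement about a potential function along plays, and to argue by contradiction using König's lemma / a pumping argument on the finite state space $V$. Suppose \eve cannot force a configuration in $(T \times [c_0,+\infty)) \cup (V \times [B,+\infty))$ from $(v_0,c_0)$. I would first observe that the credit dynamics are ``linear'' in the sense that if \eve plays to reach $T$ (which she can, by hypothesis), then \adam is forced at each \eve-vertex to split the current credit $c$ as a $w$-weighted average over the two successors, so on the successor chosen by \eve the credit becomes $c/w(v,v')$ if \adam puts all weight there, but in general \eve gets to pick the successor after seeing the bet. The key point is that among the two successors, at least one receives a bet $d(v,v') \ge c$ (since a $w$-convex combination of the two bets equals $c$, and they sum-average to $c$, so the larger one is $\ge c$), and then \eve's new credit on that branch is $d(v,v')/w(v,v') \ge d(v,v') \ge c$ — wait, more carefully: the game continues from $(v', d(v,v'))$, \emph{not} $(v', d(v,v')/w(v,v'))$; re-reading the rules, the credit after the move is exactly the bet $d(v,v')$, and the ``divided by probability'' is the intuitive reading where $w$ plays the role of the probability. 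So \eve, by choosing the successor with the larger bet, can guarantee the credit is non-decreasing at every \eve-vertex, and it is unchanged at every \adam-vertex.

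From this monotonicity the proof is almost immediate: \eve's strategy is to combine ``always pick the successor with the larger bet at \eve-vertices'' with ``play toward $T$''. Concretely, fix from each vertex $v$ an \eve-strategy $\tau_v$ that forces a visit to $T$; \eve plays $\tau_{v_0}$ but deviates from it only to pick the larger-bet successor when that differs, and whenever she deviates she restarts, following $\tau_{v'}$ from the new vertex $v'$. The issue is that deviations might prevent reaching $T$. Here the finiteness of $V$ is used: I would argue that the credit is non-decreasing throughout the whole play, so if the play \emph{never} reaches $T$, then in particular it visits some vertex infinitely often; but more to the point, I would instead argue that \eve can commit to $\tau_{v_0}$ and only at the \emph{first} \eve-vertex where $\tau_{v_0}$'s choice is the smaller-bet successor does she deviate (she may as well, since the credit only drops or stays on the $\tau_{v_0}$-branch while it stays or rises on the other). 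Actually the cleanest route: since the credit never decreases under the larger-bet policy, either the play stays forever among \eve-vertices and \adam-vertices with credit $\ge c_0$, reaching $T$ eventually because... hmm, it need not reach $T$.

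So the correct argument, which I would carry out in detail, is a potential/pumping argument on $V$: consider the tree of plays where \eve always picks the larger-bet successor. Along any branch the credit is non-decreasing. If along every branch the credit is bounded by $B$ forever and $T$ is never hit, then since $V$ is finite some vertex $v$ repeats on a branch with credit values $c_1 \le c_2$ at the two visits. Between the two visits \adam made some bets; if $c_1 < c_2$ strictly, iterating the same behaviour (which \adam can be forced to approximately repeat, or rather: \eve replays her choices) the credit grows without bound, eventually exceeding $B$ — contradiction. If $c_1 = c_2$, then the credit was constant on the whole loop, meaning every bet on the loop was exactly equal to the current credit, in particular every \adam-bet at \eve-vertices on the loop put equal weight $c$ on both successors; this is a very rigid situation. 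The \textbf{main obstacle} is precisely this degenerate case $c_1 = c_2$: I need to show \eve can escape the constant-credit loop toward $T$. For this I would invoke the hypothesis ``from every vertex \eve has a strategy to reach $T$'': from the repeated vertex $v$, \eve abandons the loop and plays $\tau_v$; the credit along $\tau_v$'s play is still non-decreasing (larger-bet choices) and starts at $c_1 \ge c_0$; hmm, but $\tau_v$ might not make larger-bet choices. Reconciling ``play toward $T$'' with ``keep credit non-decreasing'' is the real crux — I would handle it by noting that at each \eve-vertex, \emph{whichever} successor $\tau_v$ prescribes, the bet there is $\ge$ some value, but not necessarily $\ge c$; instead I would define \eve's strategy by a lexicographic rule (prioritise $\tau_v$'s reachability choice, and show that since there are only finitely many vertices and the ``bad'' set — where \eve cannot simultaneously reach $T$ and avoid credit $\ge c_0$ — would have to be a trap, contradicting the universal reachability hypothesis via an attractor computation). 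I would formalise this last step as: let $W$ be the set of configurations-reachable-vertices from which \eve \emph{cannot} force $(T\times[c_0,\infty)) \cup (V\times[B,\infty))$; show $W$ is closed under \adam-moves and under \adam's bet choices at \eve-vertices (using that the larger bet is $\ge$ current credit $\ge c_0$ so if that successor were outside $W$ we'd be done, forcing both successors into $W$ whenever the current credit is $\ge c_0$), deduce $W$ contains no vertex from which $T$ is forcibly reachable while staying in $W$, and combine with the global reachability hypothesis to get $W = \emptyset$.
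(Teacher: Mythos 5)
Your opening observation is correct and is also where the paper starts: since $\sum_{v'} w(v,v')\,d(v,v') = c$ and $\sum_{v'} w(v,v') = 1$, at least one bet is $\ge c$, so \eve can keep the credit non-decreasing; and you correctly isolate the crux, namely that this policy is incompatible with steering towards $T$ (\adam may simply bet $d(v,v_1)=d(v,v_2)=c$ at every step, so the credit stays frozen at $c_0<B$ and $T$ is never visited). However, neither of your two attempts to resolve that crux goes through. The pumping argument fails because \adam is adaptive: when a vertex recurs he is under no obligation to reproduce his earlier bets, so the credit increments along the larger-bet play can shrink geometrically and the credit can converge to a limit strictly below $B$; and in the degenerate constant-credit case you explicitly have no mechanism left. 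The bad-set argument fails for a quantifier reason: writing $U$ for the set of vertices carrying some configuration of $W$ with credit $\ge c_0$, your closure step only yields that every $v\in U$ has \emph{some} successor in $U$ (the larger-bet successor at $V_\exists$-vertices, \adam's chosen successor at $V_\forall$-vertices). To contradict the hypothesis that \eve can force $T$ from every vertex you would need $U$ to be a trap in the underlying reachability game, i.e.\ \emph{all} successors of the \eve-controlled vertices of $U$ to remain in $U$; but the smaller-bet successor may carry a bet below $c_0$ and thus escape $U$, so no contradiction is obtained. A set in which every vertex merely has one successor inside the set and which avoids $T$ is perfectly compatible with $T$ being reachable by \eve from everywhere.

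The missing ingredient is a quantitative device that trades credit against progress towards $T$, and this is exactly what the paper's proof supplies. Fix a memoryless strategy $\sigma$ of \eve that forces $T$, let $a(v)\le |V|$ be the length of the longest $\sigma$-compatible play from $v$ that avoids $T$, let $W$ be the minimum edge weight, and consider the potential $p(v,c)=c+W^{a(v)}-W^{|V|}$. A short computation shows that the $w$-weighted average of $p$ over the two offered successor configurations is at least $p(v,c)+W^{|V|}-W^{|V|+1}$, so \eve can always move to a successor where $p$ increases by this fixed positive amount: a step towards $T$ decreases $a$, and the resulting gain in $W^{a(v)}$ pays, on weighted average, for whatever credit is lost on that branch, while a step away from $T$ is only chosen when the bet there is large enough to compensate. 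Since $W^{a(v)}-W^{|V|}$ is bounded, an ever-increasing potential forces the credit itself to grow without bound unless $T$ is reached first. This single potential function is what reconciles the two objectives you were trying to combine by purely qualitative means.
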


Our next step is transforming the transition system $T_{s_0,\vecsup{p}{0}}$ into such a game.
Consider the inverse betting game~$\mathcal{B}$
on the structure given by $T_{s_0,\vecsup{p}{0}}$ where $V_\exists = \statesprob$ are the states controlled by \eve, $V_\forall = \statesone \cup \statestwo$ are controlled by \adam, $w((s,\vec{p}) , (s',\vecprime{p})) = \tfunction(s,s')$ is a weight on edges and the initial configuration is $((s_0,\vecsup{p}{0}),\varepsilon_0)$.
Let $U_{s_0,\vecsup{p}{0}}$ the set of terminal states and of stochastic states that have only one successor in $T_{s_0,\vecsup{p}{0}}$.
We show that in the inverse betting game obtained from a stopping game \game, \eve can ensure visiting $U_{s_0,\vecsup{p}{0}}$.

\begin{lemma}\label{lem:stopping-implies-one-successor}
  If $\game$ is stopping, there is a strategy for \eve in $\mathcal{B}$ such that from every vertex $v\in V$, all outcomes visit $U_{s_0,\vecsup{p}{0}}$.
\end{lemma}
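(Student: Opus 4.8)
The plan is to reduce the statement to Theorem~\ref{thm:inverse-betting} (the inverse betting game theorem) applied to $\mathcal{B}$, with target set $T = U_{s_0,\vecsup{p}{0}}$. The theorem gives us, provided \eve can force a visit to $U_{s_0,\vecsup{p}{0}}$ from every vertex, a strategy that forces a configuration in $(U_{s_0,\vecsup{p}{0}} \times [\varepsilon_0,+\infty)) \cup (V \times [B,+\infty))$ for any bound $B$ we choose. So the only real content to establish is the hypothesis of the theorem: \emph{from every vertex $v$ of $\mathcal{B}$, \eve has a strategy ensuring a visit to $U_{s_0,\vecsup{p}{0}}$}. The key observation that connects stopping-ness of $\game$ to this reachability property is that the vertices of $T_{s_0,\vecsup{p}{0}}$ that lie \emph{outside} $U_{s_0,\vecsup{p}{0}}$ are exactly the non-terminal \PONE/\PTWO states and the stochastic states that have \emph{two} successors in $T_{s_0,\vecsup{p}{0}}$; once \eve (who controls the stochastic states in $\mathcal{B}$) can always ``escape'' from the sub-arena consisting only of such vertices, she reaches $U_{s_0,\vecsup{p}{0}}$.

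First I would set up the correspondence: a play in $\mathcal{B}$ projects, via the first coordinate, onto a path in $T_{s_0,\vecsup{p}{0}}$, which in turn projects onto a path in $\game$ (by Lem.~\ref{lem:slope-convex-union}, \ref{lem:slope-intersection}, and \ref{lem:accumulation-stochastic-two} every reachable configuration has at least one successor, so paths are genuinely infinite). Second, I would argue that if, for contradiction, \eve had \emph{no} strategy from some vertex $v$ to reach $U_{s_0,\vecsup{p}{0}}$, then \adam would have a strategy to confine the play forever within the complement $V \setminus U_{s_0,\vecsup{p}{0}}$. Within this complement, every stochastic state of $\game$ that is visited has \emph{both} of its $\game$-successors present in $T_{s_0,\vecsup{p}{0}}$ (otherwise it would be in $U_{s_0,\vecsup{p}{0}}$), and no terminal state of $\game$ is ever visited (terminal states are in $U_{s_0,\vecsup{p}{0}}$). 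Third — and this is the crux — I would turn \adam's confining strategy together with \eve's (now arbitrary, say memoryless) behaviour into a pair of strategies $(\stratone,\strattwo)$ in $\game$: \adam controls the \PONE and \PTWO states in $\mathcal{B}$, so his confining choices directly give the moves of both players at $\statesone$ and $\statestwo$ states, and at stochastic states of $\game$ the full probability distribution is reinstated because \emph{both} successors are available in the confined sub-arena. Under this strategy pair, the induced Markov chain in $\game$ never reaches a terminal state along \emph{any} path (the confinement holds for all of \eve's choices, hence for all branches of the stochastic states), so $\Pr_{\game,s}^{\stratone,\strattwo}(\Diamond \terminal) = 0 < 1$, contradicting the stopping assumption.

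The main obstacle I expect is the third step: carefully checking that the complement sub-arena $V \setminus U_{s_0,\vecsup{p}{0}}$ is genuinely closed under \emph{all} stochastic branching, i.e. that if a stochastic configuration $(s,\vec p)$ with $s\in\statesprob$ is reachable while avoiding $U_{s_0,\vecsup{p}{0}}$ then \emph{both} of its successors in $T_{s_0,\vecsup{p}{0}}$ stay in the complement — which is immediate once $(s,\vec p)\notin U_{s_0,\vecsup{p}{0}}$ forces $(s,\vec p)$ to have two successors in $T_{s_0,\vecsup{p}{0}}$, but one must make sure there is no asymmetry between the ``$\vec q$-branch'' and the ``$\vec r$-branch'' of Lem.~\ref{lem:accumulation-stochastic-two} that secretly makes only one of them reachable. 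A minor additional point is handling the credit/bound bookkeeping: the lemma statement only asks for all outcomes to visit $U_{s_0,\vecsup{p}{0}}$, which is exactly the reachability hypothesis of Thm.~\ref{thm:inverse-betting}, so strictly speaking this lemma is \emph{the hypothesis} feeding into the theorem rather than a consequence of it — I would phrase the proof as directly establishing the reachability claim from stopping-ness, so that Thm.~\ref{thm:inverse-betting} can subsequently be invoked with any bound $B$ in the step that derives the final contradiction in Sec.~\ref{sec:contra}.
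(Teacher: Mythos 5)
Your proposal is correct and follows essentially the same route as the paper: assume for contradiction that \eve cannot force a visit to $U_{s_0,\vecsup{p}{0}}$ from some vertex, extract a confining strategy for \adam (the paper makes this explicit via memoryless determinacy of reachability games), lift it to a strategy pair $(\stratone,\strattwo)$ in \game, and use the key closure observation — that every stochastic configuration outside $U_{s_0,\vecsup{p}{0}}$ retains \emph{both} successors in $T_{s_0,\vecsup{p}{0}}$ — to conclude that no outcome reaches a terminal state, contradicting the stopping assumption. Your closing remark correctly identifies that this lemma supplies the hypothesis of Thm.~\ref{thm:inverse-betting} rather than following from it, which is exactly how the paper structures the argument.
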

\begin{proof}
  Assume towards a contradiction that this is not the case, then by memoryless determinacy of turn-based reachability games (see e.g.~\cite{GTW03}) there is a vertex~$v$ and a memoryless deterministic strategy $\sigma_\adam$ of \adam, such that no outcomes of $\sigma_\adam$ from $v$ visit $U_{s_0,\vecsup{p}{0}}$.
  Let $\stratone$ and $\strattwo$ be the strategies of \PONE and \PTWO respectively corresponding to $\sigma_\adam$.
  Formally, if $h \in \paths{\game}^\Box$ then $\stratone(h) = \sigma_\adam(h)$ and if $h \in \paths{\game}^\Diamond$ then $\strattwo(h) = \sigma_\adam(h)$.
  We prove that all outcomes~$\path$ in \game of $\stratone,\strattwo$ from $v$ are outcomes of $\sigma_\adam$ in $\mathcal{B}$.
  This is by induction on the prefixes $\path_{\le i}$ of the outcomes.
  It is clear when $\path_{\le i}$ ends with states that are controlled by \PONE and \PTWO by the way we defined $\stratone$ and $\strattwo$, that $\path_{\le i+1}$ is also compatible with $\sigma_\adam$ in $\mathcal{B}$.
  For a finite path $\path_{\le i}$ ending with a stochastic state~$s$ in \game, two successors are possible.
  With the induction hypothesis that $\path_{\le i}$ is compatible with $\sigma_\adam$, and by the assumption on $\sigma_\adam$, $s$ does not belong to $U_{s_0,\vecsup{p}{0}}$.
  Therefore, both successors of $s$ are also in $T_{s_0,\vecsup{p}{0}}$, and $\path_{\le i+1}$ is compatible with $\sigma_\adam$ in $\mathcal{B}$.
  This shows that outcomes in \game of $(\stratone,\strattwo)$ are also outcomes of $\sigma_\adam$ in $\mathcal{B}$.
  Therefore, \stratone and \strattwo ensure that from $v$, we visit no state of $U_{s_0,\vecsup{p}{0}}$ and thus no terminal state.
  This contradicts that the game is stopping.
\qed
\end{proof}
Putting Thm.~\ref{thm:inverse-betting} and Lem.~\ref{lem:stopping-implies-one-successor} together we can conclude the following:
\begin{corollary}\label{cor:inverse-betting}
  If \game is stopping then in $\mathcal{B}$, for any bound $B$, \eve has a strategy to ensure visiting $U_{s_0,\vecsup{p}{0}}$ with a credit $c \ge 1$ or making $c$ exceed $B$.
\end{corollary}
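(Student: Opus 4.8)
The plan is to obtain \colref{cor:inverse-betting} directly, by feeding \lemref{lem:stopping-implies-one-successor} into \thmref{thm:inverse-betting}; no new combinatorial argument is required. Concretely, I would first verify that the hypothesis of \thmref{thm:inverse-betting} holds for the inverse betting game $\mathcal{B}$ built on $T_{s_0,\vecsup{p}{0}}$, taking as target set $T := U_{s_0,\vecsup{p}{0}}$ and as bound the given $B$. Since \game is stopping, \lemref{lem:stopping-implies-one-successor} supplies a single strategy of \eve all of whose outcomes, from \emph{every} vertex $v \in V$, visit $U_{s_0,\vecsup{p}{0}}$; a fortiori, for each $v$ there is a strategy of \eve that ensures a visit to $T$, which is exactly the assumption \thmref{thm:inverse-betting} needs. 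Applying that theorem then produces a strategy of \eve forcing a configuration in $\bigl(U_{s_0,\vecsup{p}{0}} \times [c_0,+\infty\openend\bigr) \cup \bigl(V \times [B,+\infty\openend\bigr)$, where $c_0 = \varepsilon_0$ is the initial credit in $\mathcal{B}$. That is precisely the assertion of the corollary: \eve can guarantee either to reach $U_{s_0,\vecsup{p}{0}}$ with credit at least $c_0$, or to make the credit exceed $B$.

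The only delicate point is reconciling the ``$c \ge 1$'' in the statement with the initial credit $c_0 = \varepsilon_0$. I would deal with this by a rescaling argument: multiplying the initial credit, every bet $d(v,v')$ proposed by \adam, and the bound $B$ by the common positive factor $1/\varepsilon_0$ gives an isomorphic inverse betting game, since the constraint $\sum_{v'} w(v,v')\,d(v,v') = c$ and the moves available to both players are unchanged under a uniform scaling of all credits; hence we may assume $c_0 = 1$, and the rescaling of $B$ is harmless because $B$ is universally quantified in the corollary. I do not anticipate any real obstacle in this final step — the substantive work lies entirely in the two ingredients we are allowed to assume, namely \thmref{thm:inverse-betting} (an energy/reachability style analysis on the finite transition system $T_{s_0,\vecsup{p}{0}}$) and \lemref{lem:stopping-implies-one-successor} (which transfers a spoiling memoryless strategy of \adam in $\mathcal{B}$ back to strategies of \PONE and \PTWO in \game and contradicts the stopping property).
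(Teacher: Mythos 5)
Your proposal is correct and matches the paper exactly: the paper derives the corollary by "putting Thm.~\ref{thm:inverse-betting} and Lem.~\ref{lem:stopping-implies-one-successor} together," which is precisely your instantiation of the theorem with $T := U_{s_0,\vecsup{p}{0}}$ and the reachability hypothesis supplied by the lemma. Your rescaling remark correctly resolves the paper's own slight mismatch between the ``$c\ge 1$'' phrasing and the formal condition $T\times[c_0,+\infty\openend$ (what is actually used later is ``credit at least the initial credit $\varepsilon_0$''), so nothing further is needed.
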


\subsection{Contradicting sequence}\label{sec:contra}
We define $\theta({s_0,\vecsup{p}{0}}) = \min \{ \eta(s,\vec{p}) \mid (s,\vec{p}) \text{ reachable in } T_{s_0,\vecsup{p}{0}} \}$, and
consider a sequence of points that are $\theta({s_0,\vecsup{p}{0}})$ close to $\vecsup{p}{0}$ and with a right slope that is decreasing at least as fast as that of their predecessors.

\begin{lemma}\label{lem:following-close}
  For stopping games, given $s_0\in \states$, $\vecsup{p}{0}\in \Rset^2$, and $\varepsilon_0>0$, such that $\varepsilon_0 < \theta({s_0,\vecsup{p}{0}})$, there is 
a finite sequence $\pi(s_0,\vecsup{p}{0},\varepsilon_0) = (s_i,\vecsup{p}{i},\varepsilon_i)_{i \le j}$ such that:
  \begin{itemize}
  \item $(s_i,\vecsup{p}{i})_{i \le j}$ is a path in $T_{s_0,\vecsup{p}{0}}$;
  \item for all $i\le j$, $\rslope(f_{s_i},\firstdim{\vecsup{p}{i}} - \varepsilon_i) \ge \rslope(f_{s_{i+1}},\firstdim{\vecsup{p}{i+1}} - \varepsilon_{i+1})$.
  \item either $\varepsilon_j \ge \theta({s_0,\vecsup{p}{0}})$ or $s_j \in U_{s_0,\vecsup{p}{0}}$ and $\varepsilon_j \ge \varepsilon_0$.
  \end{itemize}
\end{lemma}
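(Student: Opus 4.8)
The plan is to reduce the statement to the inverse betting game of Section~\ref{subsec:inverse-betting}, and to read off the sequence from a single outcome of \eve's guaranteed strategy. I would set $B := \theta(s_0,\vecsup{p}{0})$ and fix the strategy $\sigma_\eve$ of \eve in the game $\mathcal{B}$ provided by Cor.~\ref{cor:inverse-betting}: starting from the configuration $((s_0,\vecsup{p}{0}),\varepsilon_0)$ it forces reaching a configuration in $(U_{s_0,\vecsup{p}{0}}\times[\varepsilon_0,+\infty)) \cup (V\times[B,+\infty))$. The conceptual heart of the reduction is that the credit of $\mathcal{B}$ is made to track the perturbation $\varepsilon_i$: an \adam-vertex (a \PONE{} or \PTWO{} state) keeps the credit unchanged, which matches the fact that Lem.~\ref{lem:slope-convex-union} and Lem.~\ref{lem:slope-intersection} preserve $\varepsilon$; an \eve-vertex (a stochastic state) lets \adam split the credit $\varepsilon_i$ as $\tfunction(s_i,s_1)\varepsilon_1 + \tfunction(s_i,s_2)\varepsilon_2$ with $\varepsilon_1,\varepsilon_2\ge 0$ and then \eve chooses a successor, which is exactly the split and the choice offered by Lem.~\ref{lem:accumulation-stochastic-two}.\ref{lem:accumulation-stochastic-two-eps}.

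I would then build the sequence by induction, playing out one outcome of $\sigma_\eve$ and using the lemmas of Section~\ref{sec:mapping} to fill in \adam's bets and vertex-moves. Given $(s_i,\vecsup{p}{i},\varepsilon_i)$ with $(s_i,\vecsup{p}{i})$ reachable in $T_{s_0,\vecsup{p}{0}}$ and the partial play consistent with $\sigma_\eve$: if $\varepsilon_i \ge \theta(s_0,\vecsup{p}{0})$ or $s_i \in U_{s_0,\vecsup{p}{0}}$, stop with $j := i$; otherwise $\varepsilon_i < \theta(s_0,\vecsup{p}{0}) \le \eta(s_i,\vecsup{p}{i})$ because $\theta$ is the minimum of $\eta$ over reachable configurations, so the relevant lemma applies with $\varepsilon := \varepsilon_i$. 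If $s_i \in \statesone$ (resp.\ $s_i \in \statestwo$), Lem.~\ref{lem:slope-convex-union} (resp.\ Lem.~\ref{lem:slope-intersection}) gives a successor $s'$ for which $(s',\vecsup{p}{i})$ (resp.\ $(s',\vecsup{p}{i}-\reward(s_i))$) is a successor of $(s_i,\vecsup{p}{i})$ in $T_{s_0,\vecsup{p}{0}}$; let \adam move there, keep $\varepsilon_{i+1}:=\varepsilon_i$, and the required slope inequality is Lem.~\ref{lem:slope-convex-union}.\ref{lem:slope-convex-union-c} (resp.\ the equality of Lem.~\ref{lem:slope-intersection}.\ref{lem:slope-intersection-c}). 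If $s_i \in \statesprob$, then $s_i \notin U_{s_0,\vecsup{p}{0}}$ forces both $(s_1,\vec q)$ and $(s_2,\vec r)$ (with $\vec q,\vec r$ as in Lem.~\ref{lem:accumulation-stochastic-two}) to be successors of $(s_i,\vecsup{p}{i})$ in $T_{s_0,\vecsup{p}{0}}$; pick $\varepsilon_1,\varepsilon_2\ge 0$ as in Lem.~\ref{lem:accumulation-stochastic-two}.\ref{lem:accumulation-stochastic-two-eps}, let \adam bet $\varepsilon_1$ on $(s_1,\vec q)$ and $\varepsilon_2$ on $(s_2,\vec r)$ (legal, since the weighted sum is $\varepsilon_i$), let $\sigma_\eve$ pick the successor $(s_{i+1},\vecsup{p}{i+1})$ with the corresponding new credit $\varepsilon_{i+1}$, and the slope inequality is again immediate from that clause. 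Since $T_{s_0,\vecsup{p}{0}}$ has finitely many reachable configurations (Lem.~\ref{lem:finite-transition-system}) and $\sigma_\eve$ forces the target configuration set in finitely many steps, the construction halts at some $j$, where either $\varepsilon_j \ge \theta(s_0,\vecsup{p}{0})$, or $s_j \in U_{s_0,\vecsup{p}{0}}$ and $\varepsilon_j \ge \varepsilon_0$ (the reached configuration lying in $U_{s_0,\vecsup{p}{0}}\times[\varepsilon_0,+\infty)$). The first two bullets of the statement hold by construction.

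The hard part is not this bookkeeping but seeing the reduction: recognising that the perturbation $\varepsilon_i$ behaves exactly like a credit that \adam splits and \eve inherits, and that the stopping assumption --- channelled through Lem.~\ref{lem:stopping-implies-one-successor} and Thm.~\ref{thm:inverse-betting} into Cor.~\ref{cor:inverse-betting} --- is precisely what lets \eve reach a one-successor stochastic (or terminal) state without the credit dropping below $\varepsilon_0$, and otherwise pushes the credit past $\theta(s_0,\vecsup{p}{0})$. The remaining points, all routine given Section~\ref{sec:mapping}, are that $\varepsilon_i$ stays strictly positive and below $\theta(s_0,\vecsup{p}{0})$ while the lemmas are still being applied (one may take $\sigma_\eve$ so that the credit never decreases at an \eve-vertex, hence $\varepsilon_i \ge \varepsilon_0 > 0$ throughout), that the point updates $\vecsup{p}{i+1}\in\{\vecsup{p}{i},\,\vecsup{p}{i}-\reward(s_i),\,\vec q,\,\vec r\}$ keep the sequence on a path of $T_{s_0,\vecsup{p}{0}}$, and that the right-slope inequalities chain together along it.
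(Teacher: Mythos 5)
Your proposal is correct and follows essentially the same route as the paper's own proof: both construct \adam's bets and moves from Lem.~\ref{lem:slope-convex-union}, \ref{lem:slope-intersection} and the stochastic-state lemma, play them against \eve's strategy from Cor.~\ref{cor:inverse-betting} with bound $B=\theta(s_0,\vecsup{p}{0})$, and truncate the resulting outcome at the first configuration that either exceeds the bound or lands in $U_{s_0,\vecsup{p}{0}}$ with credit at least $\varepsilon_0$. The only cosmetic difference is that the paper packages \adam's choices as an explicit strategy $\sigma_\forall$ and then takes the joint outcome, whereas you unfold the same play inductively.
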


The idea of the proof is that in $\mathcal{B}$, thanks to Lem.~\ref{lem:slope-convex-union}, \ref{lem:slope-intersection}, and \ref{lem:accumulation-stochastic-two}, \adam can always choose a successor such that $\rslope(f_{s_i},\firstdim{\vecsup{p}{i}} - \varepsilon_i) \ge \rslope(f_{s_{i+1}},\firstdim{\vecsup{p}{i+1}} - \varepsilon_{i+1})$.
Then thanks to Cor.~\ref{cor:inverse-betting}, there is a strategy for \eve to reach $(U_{s_0,\vecsup{p}{0}} \times [c_0,+\infty\openend) \cup (V \times [B, +\infty\openend)$.
By combining the two strategies, we obtain an outcome that satisfies the desired properties.

We use the path obtained from this lemma to show that no matter how small $\varepsilon_0$ we choose, $\varepsilon_i$ can grow to reach $\theta({s_0,\vecsup{p}{0}})$.

\begin{lemma}\label{lem:constant-slope}
  For all states $s$ with a left accumulation point $\vec{p}$ and for all $0<\varepsilon < \theta({s,\vec{p}})$, there is some $(s', \vecprime{p})$ reachable in $T_{s,\vec{p}}$
  such that $\rslope(f_{s'},\firstdim{\vecprime{p}} - \theta({s,\vec{p}})) \le \rslope(f_s,\firstdim{\vec{p}} - \varepsilon)$.
\end{lemma}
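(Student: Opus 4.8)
Write $\theta \mydef \theta(s,\vec{p})$ and let $p_{min}$ be the least positive transition probability, so that every stochastic state sends each of its two successors with probability in $[p_{min},1-p_{min}]$. The plan is to iterate Lem.~\ref{lem:following-close}, gluing its paths together through the single forced step out of the ``bottleneck'' stochastic states of $U_{s,\vec{p}}$, and to observe that each such forced step multiplies the current $\varepsilon$-parameter by at least $1/(1-p_{min})>1$; the parameter therefore grows geometrically and eventually reaches $\theta$.

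Concretely, I would construct greedily a finite sequence $(s_i,\vecsup{p}{i},\delta_i)_{i\le m}$ with $(s_0,\vecsup{p}{0},\delta_0)=(s,\vec{p},\varepsilon)$ such that $(s_i,\vecsup{p}{i})_{i\le m}$ is a path in $T_{s,\vec{p}}$, the quantity $\rslope(f_{s_i},\firstdim{\vecsup{p}{i}}-\delta_i)$ is non-increasing in $i$, and $\delta_m\ge\theta$. Suppose the construction has reached some configuration $(\hat{s},\hat{\vec{p}},\hat\delta)$ with $\hat\delta<\theta$. Since $(\hat{s},\hat{\vec{p}})$ is reachable in $T_{s,\vec{p}}$, the transition system $T_{\hat{s},\hat{\vec{p}}}$ is a subsystem of $T_{s,\vec{p}}$, hence $\theta(\hat{s},\hat{\vec{p}})\ge\theta>\hat\delta$ and Lem.~\ref{lem:following-close} applies at $(\hat{s},\hat{\vec{p}},\hat\delta)$. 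It returns a path with non-increasing right slopes ending in some $(t,\vec{u},\zeta)$, where either $\zeta\ge\theta(\hat{s},\hat{\vec{p}})\ge\theta$, in which case I append this path and stop, or $t\in U_{\hat{s},\hat{\vec{p}}}$ and $\zeta\ge\hat\delta$. In the latter case $t$ cannot be terminal (a terminal state has a one-point Pareto curve and hence no accumulation point), so $t$ is a stochastic state with a unique successor $(s_1,\vec{q})$ in the transition system; writing $\vec{u}=\tfunction(t,s_1)\vec{q}+\tfunction(t,s_2)\vec{r}$ with $\vec{r}$ on $f_{s_2}$, the point $\vec{r}$ is either not a left accumulation point of $f_{s_2}$ or has left slope distinct from $\lslope(f_t,\firstdim{\vec{u}})$. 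The matching clause of Lem.~\ref{lem:accumulation-stochastic-two}.\ref{lem:accumulation-stochastic-two-eps}, applied with $\zeta$ as the $\varepsilon$-parameter (legitimate since $\zeta<\theta\le\eta(t,\vec{u})$), then pins $\vec{r}$ down, so that in the decomposition of its first item we may take $\varepsilon_2=0$ and $\varepsilon_1=\zeta/\tfunction(t,s_1)$, yielding $\rslope(f_t,\firstdim{\vec{u}}-\zeta)\ge\rslope(f_{s_1},\firstdim{\vec{q}}-\varepsilon_1)$. I append the step to $(s_1,\vec{q},\varepsilon_1)$ and continue greedily from there.

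This process terminates: each execution of the latter case multiplies the running parameter by $1/\tfunction(t,s_1)\ge 1/(1-p_{min})>1$, while Lem.~\ref{lem:following-close} never decreases it, so after at most $\lceil\log_{1/(1-p_{min})}(\theta/\varepsilon)\rceil$ rounds it reaches $\theta$ and the construction halts with $\delta_m\ge\theta$. Right-slope monotonicity is preserved at every gluing point, as both Lem.~\ref{lem:following-close} and the forced stochastic step only pass to configurations whose right slope is no larger. Finally, set $(s',\vecprime{p})\mydef(s_m,\vecsup{p}{m})$; it is reachable in $T_{s,\vec{p}}$, and since $\delta_m\ge\theta$ we have $\firstdim{\vecprime{p}}-\theta\ge\firstdim{\vecprime{p}}-\delta_m$, so by Lem.~\ref{lem:general-lemma-bis}.\ref{lem:slope-decrease-bis} (the right slope is non-increasing in its argument) together with the chain of right-slope inequalities along the path,
\[
\rslope(f_{s'},\firstdim{\vecprime{p}}-\theta)\ \le\ \rslope(f_{s'},\firstdim{\vecprime{p}}-\delta_m)\ \le\ \rslope(f_s,\firstdim{\vec{p}}-\varepsilon),
\]
which is the claim.

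I expect the only genuinely delicate ingredient --- already isolated in the lemmas invoked --- to be that passing through a bottleneck stochastic state strictly expands the $\varepsilon$-parameter by a multiplicative factor: this is exactly where Lem.~\ref{lem:accumulation-stochastic-two}.\ref{lem:accumulation-stochastic-two-eps} is used with one successor carrying no matching accumulation point, so $\vec{r}$ is fixed and $\varepsilon_1=\zeta/\tfunction(t,s_1)$ is forced. That such a state is ever reached is the content of Lem.~\ref{lem:following-close}, which uses the stopping assumption through the inverse betting game. The remainder is bookkeeping, the main point to watch being the monotonicity $\theta(\hat{s},\hat{\vec{p}})\ge\theta(s,\vec{p})$ that keeps the precondition of Lem.~\ref{lem:following-close} satisfied throughout the recursion.
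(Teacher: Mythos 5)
Your proposal is correct and follows essentially the same route as the paper's own proof: iterate Lem.~\ref{lem:following-close}, and each time the returned path ends in a bottleneck stochastic state of $U$, use Lem.~\ref{lem:accumulation-stochastic-two}.\ref{lem:accumulation-stochastic-two-eps} to pass to the unique successor while multiplying the $\varepsilon$-parameter by at least $1/\delta>1$ (the paper takes $\delta=\max$ of the non-unit transition probabilities, you take $1-p_{min}$ — same effect), so that geometric growth forces the parameter past $\theta$ after finitely many rounds. Your additional bookkeeping (ruling out terminal states in $U$, the monotonicity $\theta(\hat s,\hat{\vec p})\ge\theta(s,\vec p)$, and the final appeal to slope monotonicity) only makes explicit steps the paper leaves implicit.
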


Thanks to this lemma, we can now prove Thm.~\ref{thm:stopping-no-accum}.
  Assume towards a contradiction that there is a left accumulation point $\vec{p}$ in the state $s$.
  Let $m = \min \{ \lslope(f_{s'},\firstdim{\vecprime{p}} - \theta({s,\vec{p}})) \mid (s',\vecprime{p}) \text{ reachable in } T_{s,\vec{p}}\}$ and $(s',\vecprime{p})$ the configuration of $T_{s,\vec{p}}$ for which this minimum is reached (it is reached because the number of reachable configurations is finite: this is a corollary of Lem.~\ref{lem:finite-transition-system}).
  Because of Lem.~\ref{lem:constant-slope}, $\rslope(f_s,\firstdim{\vec{p}} - \varepsilon)$ is greater than $m$.
  By Lem.~\ref{lem:general-lemma-bis}.\ref{lem:limit-slope-bis}, when $\varepsilon$ goes towards $0$, $\rslope(f_s,\firstdim{\vec{p}} - \varepsilon)$ converges to $\lslope(f_s,\firstdim{\vec{p}} )$.
  This means that $\lslope(f_s,\firstdim{\vec{p}} ) \ge m$.
  Moreover, by construction of $T_{s,\vec{p}}$, we also have that $\lslope(f_{s'},\firstdim{\vecprime{p}} ) = \lslope(f_s,\firstdim{\vec{p}} )$, so $\lslope(f_{s'},\firstdim{\vecprime{p}} ) \ge m$.
Because the slopes are decreasing (Lem.~\ref{lem:general-lemma-bis}.\ref{lem:slope-decrease-bis}), $ m = \rslope(f_{s'},\firstdim{\vecprime{p}} - \theta({s,\vec{p}})) \ge \lslope(f_{s'},\firstdim{\vecprime{p}} ) \ge m$.
Hence, the left and right slopes of $f_{s'}$ are constant on $[\firstdim{\vecprime{p}} - \theta({s,\vec{p}}), \firstdim{\vecprime{p}} ]$, and Lem.~\ref{lem:general-lemma-bis}.\ref{lem:extremal-slopes-bis} implies that there are no extremal point in $\openbegin \firstdim{\vecprime{p}} - \theta({s,\vec{p}}) , \firstdim{\vecprime{p}} \openend$.
  This contradicts the fact that $\vecprime{p}$ is a left accumulation point: there should be an extremal point in any neighbourhood on the left of $\vecprime{p}$.
  Hence, $f_s$ contains no accumulation point.

\begin{remark}
  One might attempt to extend the proof of Thm.~\ref{thm:stopping-no-accum} to three or more objectives, but this does not seem to be easily doable.
  Although it is possible to use directional derivative (or pick a subgradient) instead of using left and right slope in such setting,
  an analogue of Lem.~\ref{lem:general-lemma-bis}.\ref{lem:extremal-slopes-bis}
  cannot be proved because in multiple dimensions, two accumulation points can share
  the same directional derivative, for a fixed direction. 
  It is also not easily possible to avoid this problem by following several directional
  derivatives instead of just one.
  This is because the slope in one direction may be inherited from one successor while the slope in another direction comes from another successor.
  We give more details and example of convex sets that would contradict generalisations of Lem.~\ref{lem:general-lemma-bis}.\ref{lem:extremal-slopes-bis} and Lem.~\ref{lem:slope-intersection} in Appendix~\ref{sec:difficulties}.
\end{remark}

\section{Conclusions}
We have studied stochastic games under multiple objectives, and have provided
decidability results for determined games and for stopping games
with two objectives.
Our results for non-determined games provide an important milestone towards
obtaining decidability for the
general case, which is a major task which will require further novel insights into the
problem. Another research direction concerns establishing an upper bound on the number
of extremal points of a Pareto curve; such result would allow us to give upper complexity
bounds for the problem.

\startpara{Acknowledgements} The authors would like to thank Aistis \v{S}imaitis and Clemens Wiltsche for their useful discussions on the topic.

\bibliographystyle{abbrv}
\bibliography{biblio}

\clearpage
\appendix
\section{Proofs for determined games}
\subsection{Proof of Lem.~\ref{lemma:identify-md}}
\def\MD{\textsf{MD}\xspace}
We write $\allstratstwo_\MD$ for the set of memoryless deterministic strategies of \PTWO. 
\begin{align*}
 &&  \forall \varepsilon > 0.\exists \stratone \in \allstratsone.\ \forall \strattwo\in \allstratstwo.\  & \bigwedge_i \Exp^{\stratone,\strattwo}(\reward_i) \ge \bound_i -\varepsilon \\
 \Leftrightarrow &&  \forall \varepsilon > 0. \neg \exists \strattwo \in \allstratstwo.\ \forall \stratone\in \allstratsone.\  & \bigvee_i \Exp^{\stratone,\strattwo}(\reward_i) < \bound_i -\varepsilon \tag{determinacy}\\
 \Leftrightarrow &&  \forall \varepsilon > 0. \neg \exists \strattwo \in \allstratstwo_\MD.\ \forall \stratone\in \allstratsone.\  & \bigvee_i \Exp^{\stratone,\strattwo}(\reward_i) < \bound_i -\varepsilon \tag{Thm.~\ref{theorem:spoiler-md}}\\
 \Leftrightarrow &&  \forall \varepsilon > 0. \forall \strattwo \in \allstratstwo_\MD.\ \exists \stratone\in \allstratsone.\  & \bigwedge_i \Exp^{\stratone,\strattwo}(\reward_i) \ge \bound_i -\varepsilon \tag{logical equivalences}\\
 \Leftrightarrow &&  \forall \strattwo \in \allstratstwo_{\MD}.\ \forall \varepsilon > 0.\ \exists \stratone\in \allstratsone.\ & \bigwedge_i \Exp^{\stratone,\strattwo}(\reward_i) \ge \bound_i - \varepsilon\tag{logical equivalences}\\
 \Leftrightarrow &&  \forall \strattwo \in \allstratstwo_{\MD}.\ \exists \stratone \in \allstratsone.\ & \bigwedge_i \Exp^{\stratone,\strattwo}(\reward_i) \ge \bound_i\tag{property of MDPs~\cite{EKVY08}}
\end{align*}

\section{Proof for non-determined case}
\begin{lemma}\label{lem:f-lemma}
  Let $s$ be a state of \game.
  \begin{enumerate}
  \item\label{lem:f-defined} The domain of definition of $f_s$ is an interval of $\mathbb{R}$.
  \item\label{lem:f-concave} $f_s$ is decreasing and concave.
  \item\label{lem:f-continuity} If $f_s$ is defined on $\openbegin a,b \openend$ then $f_s$ is continuous on $\openbegin a,b\openend$.
  \end{enumerate}
\end{lemma}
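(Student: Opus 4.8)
The plan is to read off all three properties from the geometry of the closed set $C_s := \overline{\achievable_s}$, the closure of the achievable set. Recall that $\achievable_s$ is convex and downward closed (this is how it is built by the equations of Sec.~\ref{sec:eqns}), hence so is $C_s$, and --- the point where I use that \game is stopping --- $C_s$ is bounded above in each coordinate by the constant $M$ from the outline of Thm.~\ref{thm:stopping-no-accum}; also $\dwc(f_s) = C_s$ as noted in the preliminaries, so $f_s$ is exactly the set of $\le$-maximal elements of $C_s$. Let $D_s \subseteq \mathbb{R}$ be the projection of $C_s$ onto the first coordinate, which is convex hence an interval, and for $x \in D_s$ set $g(x) := \max\{\, y : (x,y)\in C_s\,\}$, well defined since $C_s$ is closed and bounded above. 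Then $g$ is concave (convexity of $C_s$: interpolate $(x_1,g(x_1))$ and $(x_2,g(x_2))$) and non-increasing (downward closure: $x \le x'$ implies $(x,g(x')) \in C_s$, so $g(x) \ge g(x')$).

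I would then pin down $f_s$ as the graph of a restriction of $g$, which also settles the well-definedness of $f_s$ as a function implicit in the statement. If $(x,y) \in f_s$ then $(x,y) \le (x,g(x)) \in C_s$, so maximality forces $y = g(x)$; hence $f_s = \{\,(x,g(x)) : x \in \mathrm{dom}(f_s)\,\}$ with $\mathrm{dom}(f_s) = \{\, x\in D_s : (x,g(x)) \text{ is } \le\text{-maximal in } C_s\,\}$. A dominating point of $(x,g(x))$ would have to be some $(x',y')$ with $x' > x$ and $y' \ge g(x)$, and since $y' \le g(x') \le g(x)$ this forces $g(x') = g(x)$, which by concavity and monotonicity makes $g$ constant on $[x,x']$; so $x \in \mathrm{dom}(f_s)$ iff $g$ is not constant on any $[x,x']$ with $x < x' \in D_s$.

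For item~\ref{lem:f-defined}, I would show $\mathrm{dom}(f_s)$ is upward closed in $D_s$ --- hence $\mathrm{dom}(f_s) = D_s \cap [\inf \mathrm{dom}(f_s), +\infty)$, an intersection of intervals. The key fact, and the one place where concavity and monotonicity of $g$ must genuinely be combined, is: if $g(x_1) = g(x_2)$ with $x_1 < x_2$ in $D_s$, then $g$ is constant on $D_s \cap (-\infty, x_2]$. This is because for $x_0 \in D_s$ with $x_0 < x_1$ the secant slope of $g$ over $[x_0,x_1]$ is $\ge$ the secant slope over $[x_1,x_2]$ by concavity (secant slopes are monotone, as recalled in the preliminaries), the latter being $0$, while monotonicity makes the former $\le 0$, so $g(x_0) = g(x_1)$; this argument then gives $g$ constant on the whole left segment up to $x_2$. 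It follows that the complement of $\mathrm{dom}(f_s)$ in $D_s$ is downward closed, i.e.\ $\mathrm{dom}(f_s)$ is upward closed, as wanted.

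Items~\ref{lem:f-concave} and~\ref{lem:f-continuity} are then routine. Since $\mathrm{dom}(f_s)$ is an interval, $f_s$ is the restriction of the concave function $g$ to it, hence concave; and it is strictly decreasing there, because for $x < x''$ in $\mathrm{dom}(f_s)$ the characterisation above gives $g(x'') \ne g(x)$ while monotonicity gives $g(x'') \le g(x)$. Finally, a real-valued concave function is continuous on the interior of its domain, so $g$ --- and with it $f_s$ --- is continuous on the interior of $\mathrm{dom}(f_s)$; continuity up to any endpoint of $\mathrm{dom}(f_s)$ that lies in $\mathrm{dom}(f_s)$ follows as well, from monotonicity and concavity at a left endpoint and from monotonicity together with closedness of $C_s$ (the one-sided limit point stays in $C_s$) at a right endpoint. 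The only delicate step is item~\ref{lem:f-defined}; the rest is elementary convex analysis.
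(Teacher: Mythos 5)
Your proof is correct, and it rests on the same two structural facts as the paper's (convexity and downward closure of the achievable set, plus boundedness from the stopping assumption), but it is organised along a genuinely different route. The paper argues pointwise on the curve: for item~1 it takes $x<x''<x'$ with $f_s$ defined at $x$ and $x'$, sets $y=\sup\{y:(x'',y)\text{ achievable}\}$, and refutes a hypothetical dominating point $\vec{p}$ by noting that the chord from $(x,f_s(x))$ to $\vec{p}$ would pass strictly above $(x'',y)$; items~2 and~3 are then essentially the interpolation and ``concave implies continuous'' arguments you also give. You instead factor everything through the upper-boundary function $g$ of the closed set $C_s=\overline{\achievable_s}$ and characterise $\mathrm{dom}(f_s)$ as the locus where $g$ is not locally constant to the right; your key step --- that $g(x_1)=g(x_2)$ with $x_1<x_2$ forces $g$ to be constant on all of $D_s\cap(-\infty,x_2]$ --- shows the domain is \emph{upward closed} inside the projection $D_s$, which is slightly more information than the paper's betweenness statement and also yields strict (rather than weak) monotonicity of $f_s$. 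Your route is additionally tighter on one point: by working with the closure $C_s$ from the outset you never need the paper's unproved assertion that the supremum of achievable second coordinates at $x''$ is itself attained (``by the properties of stopping games $(x'',y)$ is achievable''), since the Pareto curve is by definition the set of maximal points of the closure and your $g(x)$ is a genuine maximum over a closed bounded section. Both arguments are sound; yours is somewhat longer but cleaner on the closure issue, while the paper's is shorter and more local.
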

\begin{proof}
  \point{\ref{lem:f-defined}} Assume that $f_s$ is defined for some $x,x'$.
  For all $x'' \in \openbegin x,x' \openend$, by convexity $(x'',\frac{x'' - x'}{x - x'}\cdot f_s(x) + \frac{x''-x}{x'-x} \cdot f_s(x'))$ can be achieved.
  We define $y = \sup \{ y \mid (x'',y) \text{ achievable} \}$, it is defined because of the previous remark, and by the properties of stopping games $(x'',y)$ is achievable and we now show that $(x'',y)\in f_s$.
  
  Assume there is an achievable $\vec{p}$ with $\firstdim{\vec{p}} > x''$.
  Then by convexity of the set of achievable points~\cite{CFKSW13},
  any point in $[(x,f_s(x)),\vec{p}\,]$ is achievable.
  Because $(x,f_s(x))$ lies on the Pareto curve, we have $f_s(x) > \seconddim{\vec{p}}$ (otherwise $(x,f_s(x))$ is not maximal point satisfying the defining properties of the Pareto curve), therefore $(x'', \frac{x''-\firstdim{\vec{p}}}{x - \firstdim{\vec{p}}} \cdot f_s(x) + \frac{x''-x}{\firstdim{\vec{p}} - x} \cdot f_s(\firstdim{\vec{p}}))$ is achievable which contradicts the definition of $y$. Therefore, $f_s$ is defined in $x''$ and equals $\seconddim{\vec{p}}$.

  \point{\ref{lem:f-concave}}
  The fact that if $x' \ge x$, then $f_s(x) \ge f_s(x')$ comes by the definition of Pareto curves and the fact that $(x,f_s(x))$ lies on a Pareto curve. 
  We now prove that $f_s$ is concave.
  Let $x < x'$ in the domain of $f_s$ and $t \in [0,1]$.
  By convexity of the set $\achievable_s$~\cite{CFKSW13}, $t \cdot (x,f_s(x)) + (1 - t) \cdot (x',f_s(x'))$ can be ensured.
  Hence, there is a point $\vec{p}$ that is strictly greater than $t \cdot (x,f_s(x)) + (1 - t) \cdot (x',f_s(x'))$ that belongs to $f_s$.
  Since $f_s$ is defined in $t \cdot x + (1 - t) \cdot x'$ and it is decreasing this means that $f_s(t \cdot x + (1 - t) \cdot x') \ge t \cdot f_s(x) + (1 - t) \cdot f_s(x')$.
  This shows concavity.

  \point{\ref{lem:f-continuity}} 
  Since $f_s$ is concave, it is the negation of a convex function.
  A convex function defined on an open interval is continuous on this interval.
  Therefore, if $f_s$ is defined on $\openbegin a,b\openend$ it continuous on $\openbegin a,b\openend$.
\qed
\end{proof}

%
%
%
%

\subsection{Proof of Lem.~\ref{lem:general-lemma-bis}}
We prove a bit more than what Lem.~\ref{lem:general-lemma-bis} gives in the main part of the paper:
\begin{lemma}\label{lem:general-lemma}
  \begin{enumerate}
  \item \label{lem:intermediary-slope}
    Let $f$ be a function (not necessarily concave) whose left slope is well defined on the interval $[a,b]$, then there exists $x \in [a,b]$ such that $\rslope(f,x) \ge \frac{f(a) - f(b)}{a - b}$. 
    Similarly, there exists $x' \in [a,b]$ such that $\lslope(f,x') \le \frac{f(a) - f(b)}{a - b}$.
  \item \label{lem:slope-domain}If $f$ is concave, then $x \mapsto \lslope(f,x)$ is defined on the same interval as $f$ except the left most point.
  \item \label{lem:slope-decrease} If $f$ is concave and $x < x'$ are two reals for which $\lslope(f)$ is defined, 
    then $\lslope(f,x) \ge \rslope(f,x) \ge \lslope(f,x')$.
  \item \label{lem:extremal-slopes}
  If $f$ is concave and $\openbegin x, x' \openend$ contains an extremal point of $f$, then $\lslope(f,x) \ne \lslope(f,x')$.
\item \label{lem:slope-extremals} If $f$ is concave and $\lslope(f,x) \ne \lslope(f,x')$ then $[x,x']$ contains an extremal point.
  \item  \label{lem:limit-slope}  If $f$ is a concave function whose left slope is defined in $x$, then $\lim_{x' \to x^-} \lslope(f,x') = \lim_{x' \to x^-} \rslope(f,x') = \lslope(f,x)$. 
  \end{enumerate}
\end{lemma}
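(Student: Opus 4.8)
The whole lemma flows from one elementary fact: a function $f$ is concave on an interval exactly when the secant slope $s(y,z)\mydef\frac{f(z)-f(y)}{z-y}$ is non-increasing in each of $y,z$ with the other held fixed. Consequently $\lslope(f,x)=\lim_{y\to x^-}s(y,x)$ and $\rslope(f,x)=\lim_{z\to x^+}s(x,z)$ are monotone limits, so they always exist in $[-\infty,+\infty]$, and whenever $\lslope(f,x)$ is finite $f$ is left-continuous at $x$; I will also use that a concave function is continuous on the interior of its domain. I would establish the items roughly in the order \ref{lem:slope-decrease}, \ref{lem:slope-domain}, \ref{lem:extremal-slopes}, \ref{lem:slope-extremals}, \ref{lem:limit-slope}, \ref{lem:intermediary-slope}.

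For \ref{lem:slope-decrease}: if $y<x<z$ then concavity gives $s(y,x)\ge s(x,z)$, and letting $y\to x^-$, $z\to x^+$ yields $\lslope(f,x)\ge\rslope(f,x)$; for $x<x'$, pick any $z\in(x,x')$ and combine concavity on the triple $x<z<x'$ with the monotonicity of $s$ in each argument to obtain $\rslope(f,x)\ge s(x,z)\ge s(z,x')\ge\lslope(f,x')$. Item \ref{lem:slope-domain} is the same squeeze: at any point $x$ other than the leftmost one, $s(y,x)$ is non-increasing in $y<x$ and bounded above by $s(x'',x)$ for any fixed domain point $x''<x$ (apply concavity to $x''<y<x$), so the defining limit exists; at an interior point it is moreover bounded below by $s(x,x''')$ for a fixed $x'''>x$, so it is finite.

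The conceptual core is \ref{lem:extremal-slopes} and \ref{lem:slope-extremals}. For \ref{lem:extremal-slopes}, suppose $x<x'$ with $\lslope(f,x)=\lslope(f,x')$; stringing together the inequalities of \ref{lem:slope-decrease} forces $s(y,z)=\lslope(f,x)$ for all $x<y<z<x'$, i.e. $f$ is affine on $(x,x')$, so no point $p\in(x,x')$ can be extremal because $(p,f(p))=\tfrac12(p-\delta,f(p-\delta))+\tfrac12(p+\delta,f(p+\delta))$ for all small $\delta$. For \ref{lem:slope-extremals} I would argue contrapositively: if $[x,x']$ contains no extremal point, then every point of $[x,x']$ lies in the relative interior of a maximal interval on which $f$ is affine, and a ``maximal $c$'' argument (let $m$ be the affine slope at $x$, put $c=\sup\{w\in[x,x']\mid f\text{ is affine of slope }m\text{ on }[x,w]\}$, use left-continuity to extend the affinity to $[x,c]$, and use the overlapping maximal affine interval through $c$ to push past $c$ when $c<x'$) shows $f$ is affine of slope $m$ on all of $[x,x']$; finally $x$ and $x'$, being non-extremal, cannot be corners, so $\lslope(f,x)=\rslope(f,x)=m=\lslope(f,x')$.

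For \ref{lem:limit-slope}: by \ref{lem:slope-decrease}, $\lslope(f,x')\ge\rslope(f,x')\ge\lslope(f,x)$ for $x'<x$, with $\lslope(f,\cdot)$ and $\rslope(f,\cdot)$ non-increasing, so both converge as $x'\to x^-$ to limits $L\ge R\ge\lslope(f,x)$; for the reverse bound, fix $y<x$, use $\lslope(f,x')\le s(y,x')$ for $x'\in(y,x)$, let $x'\to x^-$ (left-continuity of $f$ at $x$ gives $s(y,x')\to s(y,x)$) to get $L\le s(y,x)$, and then let $y\to x^-$ to get $L\le\lslope(f,x)$. Finally, \ref{lem:intermediary-slope} is a mean-value-type statement: subtracting the affine function through $(a,f(a))$ and $(b,f(b))$ reduces it to $f(a)=f(b)=0$, where one must exhibit a point with right slope $\ge 0$ and one with left slope $\le 0$; if $f$ takes both signs on $(a,b)$ this is immediate near a sign change, and otherwise $f$ keeps a single sign there, so a point at which $f$ realises its extreme value over $[a,b]$ has the required one-sided slope, since $f$ stays on one side of that value on the relevant half. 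I expect \ref{lem:intermediary-slope} to be the main obstacle: $f$ is not assumed concave, and the extreme value need not be attained at a genuine point, so one has to take a supremum or infimum and invoke the left-continuity provided by the hypothesis, treating separately the degenerate case where the extreme value is attained only as a one-sided limit (there the relevant one-sided slope is $-\infty$ and a different point must be used).
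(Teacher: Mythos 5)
Your proposal is correct in substance and covers all six items, but it organizes the logical dependencies differently from the paper. The paper makes item (1) the workhorse: it proves (1) first, via a ``right-most minimizer of $f(c)-f(a)-(c-a)\cdot d$'' argument, and then derives both (4) and (6) by contradiction from (1). You instead prove (4) directly by chaining the inequalities of (3) to conclude that all secant slopes $s(y,z)=\frac{f(z)-f(y)}{z-y}$ on $(x,x')$ coincide, hence $f$ is affine there and every interior point is the midpoint of two nearby graph points; and you prove (6) directly by monotone convergence of the one-sided slopes together with the bound $\lslope(f,x')\le s(y,x')$ and left-continuity. Both routes are valid, and arguably cleaner than the paper's detours through (1); your proof of (5) (contrapositive via maximal affine intervals, plus the observation that a non-extremal point of a concave graph cannot be a corner) matches the paper's argument in spirit. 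The place where your write-up is weakest is exactly where the paper's is also weakest, namely item (1): your extremum argument needs the extreme value of the normalized function to be attained on $[a,b]$, and you correctly note that mere well-definedness of the one-sided slopes does not force continuity when a slope is infinite, but you leave that degenerate case unresolved (``a different point must be used''). The paper's own proof has the mirror-image gap, since it takes a right-most minimizer without justifying its existence. Because every $f$ to which the lemma is applied is a concave Pareto curve, hence continuous on the interior of its domain, neither gap is consequential; still, to make item (1) airtight as stated you should either assume finiteness of the one-sided slopes (which yields continuity and lets the infimum be attained) or treat the infinite-slope case explicitly.
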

\begin{proof}
 \point{\ref{lem:intermediary-slope}} 
 First we show that if $\rslope(f,x) < d$ for all $x\in [a,b]$ then for all $x\in \openbegin a,b]$, $x < f(a) + (x - a) \cdot d$.
 To prove this, let $c$ be the right-most point of $[a,b]$ which minimizes $f(c) - f(a) - (c - a) \cdot d$.
 Assume towards a contradiction that $c \ne b$.
 Since $\rslope(f,c) < d$, there exists $x \in \openbegin c,b \openend$ such that $\frac{f(x) - f(c)}{x - c} < d$.
 Since $(x - c)$ is positive this implies that:
 \begin{align*} 
   f(x) & < f(c) + (x - c) \cdot d \\
   f(x) - f(a) - (x - a) \cdot d & < f(c) - f(a) - (x - a - x + c) \cdot d \\
   f(x) - f(a) - (x - a) \cdot d & < f(c) - f(a) - (c - a) \cdot d
 \end{align*}
 Which contradicts the fact that $c$ is the right-most point of $[a,b]$ maximizing $f(c) - f(a) - (c - a) \cdot d$. 
 Hence, $b$ is the only point that maximizes this quantity, and it equals $0$ in $b$.
 We therefore have that for all $x \in [a,b\openend$, $f(x) < f(a) + (x - a) \cdot d$.

 In particular, assuming towards a contradiction that for all $x \in [a,b]$, $\rslope(f,x) < \frac{f(a) - f(b)}{a - b}$, we would have $f(a) < f(b) + (a - b) \cdot \frac{f(a) - f(b)}{a - b} = f(a)$.
 Hence, there exists some $x \in [a,b]$ such that $\rslope(f,x) \ge \frac{f(a) - f(b)}{a - b}$.

 Consider now the function $-f$.
 As a consequence of what we just proved, there is $x \in [a,b]$ such that $\lslope(-f,x) \ge \rslope(-f,x) \ge - \frac{f(a) - f(b)}{a - b}$.
 Since $\lslope(-f,x) = -\lslope(f,x)$, this implies that $\lslope(f,x) \le \frac{f(a) - f(b)}{a-b}$.

  \point{\ref{lem:slope-domain}}
  Assume that $f$ is defined on $[a,b]$ and let $x\in \openbegin a , b]$.
    Define the function $g$ by $g(x') = \frac{f(x') - f(x)}{x' - x}$, it is defined for $x'$ smaller than $x$ and close enough to $x$. 
  Moreover by concavity of $f$, $g$ is decreasing, therefore its left limit in $x$ is well defined, and equals the left slope of $f$ in $x$.

 \point{\ref{lem:slope-decrease}} 
 By concavity, for $x < x'$ and $0 < \varepsilon < \frac{x' - x}{2}$:
 \[
 \frac{f(x) - f(x-\varepsilon)}{\varepsilon} \ge \frac{f(x) - f(x+\varepsilon)}{\varepsilon} \ge \frac{f(x') - f(x'-\varepsilon)}{\varepsilon}\]
 Hence, it is the same for the limit when $\varepsilon$ moves towards $0$ and $\lslope(f,x) \ge \lslope(f,x')$.

 \point{\ref{lem:extremal-slopes}}
 Assume towards a contradiction that $\lslope(f,x) = \lslope(f,x')$, and $\vec{p}$ is an extremal point with $x < \firstdim{\vec{p}} < x'$.
 By item \ref{lem:slope-decrease} this means the slope is constant on $[x,x']$ and as a consequence of item \ref{lem:intermediary-slope}, it is equal to $\frac{f(x') - f(x)}{x' - x}$.
 Since it is an extremal point, $f(\firstdim{\vec{p}}) > f(x) + (\firstdim{\vec{p}} - x) \cdot \frac{f(x') - f(x)}{x' - x}$.
 By item \ref{lem:intermediary-slope}, there is $x'' \in [x,\firstdim{\vec{p}}]$ such that:
 \begin{align*} 
   \lslope(f,x'') & \ge \frac{f(\firstdim{\vec{p}}) - f(x)}{\firstdim{\vec{p}} - x} \\
   & > \frac{f(x) + (\firstdim{\vec{p}} - x) \cdot \frac{f(x') - f(x)}{x' - x} - f(x)}{\firstdim{\vec{p}} - x} \\
   & > \frac{(\firstdim{\vec{p}} - x) \cdot \frac{f(x') - f(x)}{x' - x}}{\firstdim{\vec{p}} - x} \\
   & > \frac{f(x') - f(x)}{x' - x} 
 \end{align*}
 Which is in contradiction with the fact that on $[x,x']$ the slope is constant and equal to $\frac{f(x') - f(x)}{x' - x}$.

 \point{\ref{lem:slope-extremals}}
 Assume towards a contradiction that $[x,x']$ contains no extremal point, then for all $x'' \in [x,x']$, $(x'',f(x''))$ is a convex combination of $(x,f(x))$ and $(x',f(x'))$.
 Hence, if $x'' \in \openbegin x,x']$, $\lslope(f,x'') = \frac{f(x) - f(x'')}{x - x''}$.
 This is in particular the case for $x'$.
 Since $\lslope(f,x) \ne \lslope(f,x')$, $x$ has a left slope different from all the points in $\openbegin x,x']$; we will prove that $(x,f(x))$ is an extremal point.
 If $x$ is a convex combination of two points of the curve given by $f$, then as all the points on its right are below the line segment $[(x,f(x)), (x',f(x'))]$, one point $\vec{p}\in f$
 with $\firstdim{\vec{p}} < x$ must be above the corresponding line.
 By concavity, this is also the case for all the $x'' \in \openbegin \firstdim{\vec{p}},x\openend$.
 This means that $\frac{f(x'') - f(x)}{x'' - x} \le \frac{f(x') - f(x)}{x' - x}$.
 The limit when $x''$ goes towards $x$ is also smaller than $\frac{f(x') - f(x)}{x' - x}$.
 It cannot be greater because by item~\ref{lem:slope-decrease} the slope decreases.
 This contradicts that $\lslope(f,x) \ne \lslope(f,x')$.
 This contradiction proves that $[x,x']$ contains an extremal point.

 \point{\ref{lem:limit-slope}}
 The fact that $\lim_{x' \to x^-} \lslope(f,x') \ge \lim_{x' \to x^-} \rslope(f,x') \ge \lslope(f,x)$ comes from the fact that the slope is decreasing (point~\ref{lem:slope-decrease}).
 We now show $\lslope(f,x) \le \lim_{x' \to x^-} \lslope(f,x')$ which implies the result.
 Assume towards a contradiction that $\lim_{x' \to x^+} \lslope(f,x') > \lslope(f,x)$. 
 Then there is $\varepsilon > 0$ such that $\frac{f(x) - f(x - \varepsilon)}{\varepsilon} < \lim_{x' \to x^-} \lslope(f,x')$.
 By point~\ref{lem:intermediary-slope}, there is $x'' \in [x - \varepsilon, x]$ such that $\lslope(f,x'') \le \frac{f(x) - f(x - \varepsilon)}{\varepsilon}  < \lim_{x' \to x^-} \lslope(f,x')$.
 Since the slope is decreasing, $\lim_{x' \to x^-} \lslope(f,x') \ge \lslope(f,x'')$ which gives a contradiction.
\qed
\end{proof}

\subsection{Evolution of the slope in \PONE{} states (Proof of Lem.~\ref{lem:slope-convex-union})}

\begin{lemma}\label{lem:bound-slope-convex-union}
  Let $s_0$ be a \PONE{} state with two successors $s_1$ and $s_2$.
  For all $x$ where the slopes of $f_{s_1}$ and $f_{s_2}$ are defined, $f_{s_0}$ and its slope are defined and:
  \[
  \begin{array}{r @{~\le~} c @{~\le~} l}
    \min(\lslope(f_{s_1},x), \lslope(f_{s_2},x)) & \lslope(f_{s_0},x) &  \max(\lslope(f_{s_1},x), \lslope(f_{s_2},x)) \\
    \min(\rslope(f_{s_1},x), \rslope(f_{s_2},x)) & \rslope(f_{s_0},x) & \max(\rslope(f_{s_1},x), \rslope(f_{s_2},x)) 
  \end{array}
  \]
\end{lemma}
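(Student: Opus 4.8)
The plan is to exploit the characterisation of $f_{s_0}$ coming from the equations of Section~\ref{sec:eqns}. Since $s_0\in\statesone$ carries zero reward and the sets $X_{s_i}$ are downward-closed and convex (Lem.~\ref{lem:f-lemma}), we have $X_{s_0}=\dwc(\conv(X_{s_1}\cup X_{s_2}))=\conv(X_{s_1}\cup X_{s_2})$, so $f_{s_0}$ is exactly the smallest concave function dominating both $f_{s_1}$ and $f_{s_2}$. In particular $f_{s_0}$ is defined at $x$ with $f_{s_0}(x)\ge\max(f_{s_1}(x),f_{s_2}(x))$, and since $x$ is not the leftmost point of $\mathrm{dom}(f_{s_1})\subseteq\mathrm{dom}(f_{s_0})$, Lem.~\ref{lem:general-lemma}.\ref{lem:slope-domain} gives that the slopes of $f_{s_0}$ at $x$ are defined as well. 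Below I write $h:=f_{s_0}$ and $f_i:=f_{s_i}$, and I use that, because $X_{s_1},X_{s_2}$ are convex, every point of $X_{s_0}$ equals $t\cdot p+(1-t)\cdot q$ with $p$ on the graph of $f_1$, $q$ on the graph of $f_2$, and $t\in[0,1]$.

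The main idea is to describe the supergradients of $h$ at $x$. For any concave $g$, the set $\{c\mid g(x)+c\cdot(x'-x)\ge g(x')\text{ for all }x'\}$ equals $[\rslope(g,x),\lslope(g,x)]$. A line dominates $h$ if and only if it dominates both $f_1$ and $f_2$ (because $h$ is the least concave majorant of $f_1,f_2$), hence the supergradient set of $h$ at $x$ equals $D_1\cap D_2$, where $D_i:=\{c\mid h(x)+c\cdot(x'-x)\ge f_i(x')\text{ for all }x'\}$ is an interval. As $h(x)\ge f_i(x)$, each $D_i$ contains the supergradient set $[\rslope(f_i,x),\lslope(f_i,x)]$ of $f_i$ at $x$, so $\min D_i\le\rslope(f_i,x)$ and $\max D_i\ge\lslope(f_i,x)$. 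Since $\rslope(h,x)=\max(\min D_1,\min D_2)$ and $\lslope(h,x)=\min(\max D_1,\max D_2)$, this already gives $\rslope(h,x)\le\max(\rslope(f_1,x),\rslope(f_2,x))$ and $\lslope(h,x)\ge\min(\lslope(f_1,x),\lslope(f_2,x))$, i.e.\ two of the four required inequalities, with no case distinction.

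For the two remaining inequalities I would split on how $(x,h(x))$ decomposes. If $h(x)=f_i(x)$ for some $i$ (in particular whenever $h(x)=\max(f_1(x),f_2(x))$), then $D_i$ is precisely the supergradient set of $f_i$ at $x$, so $\rslope(h,x)\ge\min D_i=\rslope(f_i,x)\ge\min(\rslope(f_1,x),\rslope(f_2,x))$ and $\lslope(h,x)\le\max D_i=\lslope(f_i,x)\le\max(\lslope(f_1,x),\lslope(f_2,x))$. Otherwise $h(x)>\max(f_1(x),f_2(x))$; then in a decomposition $(x,h(x))=t^*\cdot(a^*,f_1(a^*))+(1-t^*)\cdot(b^*,f_2(b^*))$ one must have $t^*\in(0,1)$ and $a^*\ne b^*$, so $x$ lies strictly between $a^*$ and $b^*$, say $a^*<x<b^*$. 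Perturbing the decomposition shows $(x+\varepsilon,\,t^*f_1(a^*+\tfrac{\varepsilon}{t^*})+(1-t^*)f_2(b^*))\in X_{s_0}$ and $(x-\varepsilon,\,t^*f_1(a^*)+(1-t^*)f_2(b^*-\tfrac{\varepsilon}{1-t^*}))\in X_{s_0}$ for small $\varepsilon>0$; comparing these with $h(x)$ and letting $\varepsilon\to0^+$ yields $\rslope(h,x)\ge\rslope(f_1,a^*)$ and $\lslope(h,x)\le\lslope(f_2,b^*)$. Finally $a^*<x<b^*$ together with monotonicity of slopes (Lem.~\ref{lem:general-lemma}.\ref{lem:slope-decrease}) gives $\rslope(f_1,a^*)\ge\rslope(f_1,x)$ and $\lslope(f_2,b^*)\le\rslope(f_2,x)$, closing the argument.

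The step I expect to be the main obstacle is this last case analysis: verifying that whenever $h(x)$ strictly exceeds $\max(f_1(x),f_2(x))$ a decomposition with $t^*\in(0,1)$ and $a^*\ne b^*$ exists, and handling the boundary-of-domain subtleties (that the perturbed abscissae still lie in $\mathrm{dom}(f_1),\mathrm{dom}(f_2)$ and that all invoked one-sided slopes are defined), which is mild but relies on Lem.~\ref{lem:f-lemma} and Lem.~\ref{lem:general-lemma}. A secondary point to spell out is the equivalence ``a line dominates $h$ iff it dominates $f_1$ and $f_2$'', which is immediate from $h$ being the smallest concave majorant of $f_1$ and $f_2$.
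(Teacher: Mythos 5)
Your proposal is correct in substance but takes a genuinely different route from the paper's. The paper works directly with difference quotients and dot products: it first establishes definedness of $f_{s_0}$ at $x$ by a contradiction argument on achievable points; it then proves $\lslope(f_{s_0},x)\ge\min(\lslope(f_{s_1},x),\lslope(f_{s_2},x))$ by assuming the contrary, exhibiting $x'<x$ whose dot product with the normal $(-m,1)$ exceeds that of $(x,f_{s_0}(x))$, transferring the excess to one of the two decomposition points, and splitting on whether that point lies left or right of $x$; the $\le\max$ bound is a separate difference-quotient computation, and the right-slope inequalities are deduced by the reflection $x\mapsto f_{s_0}(1-x)$. Your superdifferential identity (the supergradient set of the least concave majorant at $x$ equals $D_1\cap D_2$) replaces the first half of this wholesale, yielding $\lslope(f_{s_0},x)\ge\min(\lslope(f_{s_1},x),\lslope(f_{s_2},x))$ and $\rslope(f_{s_0},x)\le\max(\rslope(f_{s_1},x),\rslope(f_{s_2},x))$ in one stroke, symmetrically in left and right slopes, with no reflection trick and no left/right case split --- this is cleaner and makes the mechanism more transparent. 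For the other two inequalities your perturbation of the cross decomposition is essentially the paper's own computation (its proof that $\lslope(f_{s_0},x)\le\lslope(f',\firstdim{\vec{q}})$ is the same limit of difference quotients), so the approaches converge there. Two points you flag do need to be written out, and both are genuine obligations rather than routine: (i) definedness of $f_{s_0}$ at $x$ does not follow merely from the least concave majorant being finite there --- one must rule out that the majorant is constant on an interval to the right of $x$, which is precisely what the paper's opening contradiction argument does; (ii) in the case $h(x)>\max(f_{s_1}(x),f_{s_2}(x))$, the existence of a decomposition with one point on each curve follows from convexity of $X_{s_1}$ and $X_{s_2}$ (two points in the same set would give $h(x)\le f_{s_i}(x)$) together with maximality of $(x,h(x))$. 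Both are short arguments and neither undermines the proof.
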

\begin{proof}
  To lighten the notation we will write $f_i$ instead of $f_{s_i}$ in the following.

  Assume towards a contradiction that there is $x$ where $f_{1}$ and $f_{2}$ are defined and where $f_0$ is not defined.
  Let $y$ be the maximum such that $\vec{p}=(x,y)$ can be ensured from $s_0$.
  Since $f_0$ is not defined in $x$, there exists $(x',y')\in \achievable_{s_0}$ such that $x < x'$ and $y\le y'$.
  By downward closure of the set of achievable points, $(x,y')$ can be ensured from $s_0$, thus by definition of $y$, $y' \le y$ and therefore $y = y'$.
  Then there is $(x',y) \in \achievable_{s_0}$ with $x' > x$.
  By the characterisation of the set of values that can be ensured, we have $(x',y) \in \conv(\achievable_{s_1} \cup \achievable_{s_2})$, so there are $\lambda_1,\lambda_2 \in [0,1]$ and 
  $\vec{q},\vec{r} \in \achievable_{s_1} \cup \achievable_{s_2}$ such that $\lambda_1 + \lambda_2 = 1$ and $(x',y) = \lambda_1 \cdot \vec{q} + \lambda_2 \cdot \vec{r}$.
  If $\firstdim{\vec{q}} > y$ then there is some $\vecprime{p}$ which can be ensured with $\firstdim{\vecprime{p}} > x$ and $\seconddim{\vecprime{p}} > y'$ which by downward closedness contradicts the fact that $y$ is the maximum such that $(x,y)$ can be ensured.
  The case $\seconddim{\vec{r}} > y$ is similar.
  In the remaining cases $y = \seconddim{\vec{q}} = \seconddim{\vec{r}}$. 
  Assume w.l.o.g. $\firstdim{\vec{q}} \ge x$, and therefore $\vec{q} \ge (x,y)$.
  Since $f_{1}$ is defined in $x$, $f_{1}(x) > \seconddim{\vec{q}}$.
  By characterisation of the set of values that can be ensured $y \ge f_{1}(x) > \seconddim{\vec{q}}$ which is a contradiction.

  This shows that if $f_{1}$ and $f_{2}$ are defined then $f_{0}$ also is.
  Since $\lslope(f_{0},x)$ is defined for all $x$ where $f_{0}$ except the left-most point, it is defined where $f_{1}$ and $f_{2}$ are defined except the left-most point of one of the two.
  In this left-most point one of the two slopes is not defined.
  This shows that if both left slopes are defined then $\lslope(f_{0},x)$ also is.
  The proof proceeds similarly for the right slopes.
  
  \medskip

  Let $m= \min(\lslope(f_1,x),\lslope(f_2,x))$ and $\vec{n} = (-m,1)$.
  Assume towards a contradiction that $\lslope(f_0,x) < m$.
  Then by definition of $\lslope$ there exists $x' < x$ such that $\frac{f_0(x) - f_0(x')}{x - x'} < m$.
  Since $x - x' > 0$, we have $f_0(x') > f_0(x) + (x' - x) \cdot m$.
  Therefore, $(x',f_0(x')) \cdot \vec{n} = f_0(x') - m x' > f_0(x) - m x = (x,f_0(x)) \cdot \vec{n}$.
  By the characterization of the Pareto curve, there are $\vecprime{q}, \vecprime{r} \in \achievable_{s_1} \cup \achievable_{s_2}$, such that $(x',f_0(x'))$ is a convex combination of $\vecprime{q}$ and $\vecprime{r}$.
  Because their convex combination has greater dot product with $\vec{n}$ than $(x,f_0(x))$, it is also the case of one of the points $\vecprime{q}$ or $\vecprime{r}$.

  Without loss of generality, we assume it is $\vecprime{q}$.
  So we have $f_1(\firstdim{\vecprime{q}}) > f_0(x) + (\firstdim{\vecprime{q}} - x) \cdot m$.
  \begin{itemize}
  \item If $\firstdim{\vecprime{q}} < x$, then
    $\frac{f_0(x) - f_1(\firstdim{\vecprime{q}})}{x - \firstdim{\vecprime{q}}} < m \le \lslope(f_1,x)$.
    By the characterization of the Pareto curve $f_1(x) \le f_0(x)$, therefore 
    $\frac{f_1(x) - f_1(\firstdim{\vecprime{q}})}{x - \firstdim{\vecprime{q}}} < \lslope(f_1,x)$.
    By Lem.~\ref{lem:general-lemma}.\ref{lem:intermediary-slope}, there is $x'' \in [\firstdim{\vecprime{q}},x]$ such that $\lslope(f_1,x'') \le \frac{f_1(x) - f_1(\firstdim{\vecprime{q}})}{x - \firstdim{\vecprime{q}}} < \lslope(f_1,x)$.
    By Lem.~\ref{lem:general-lemma}.\ref{lem:slope-decrease}, $\lslope(f_1,x'') \ge \lslope(f_1,x)$ which is a contradiction.
  \item If $\firstdim{\vecprime{q}} \ge x$, then $x \in [x',\firstdim{\vecprime{q}}]$ and by the characterization of the Pareto curve $f_0(\firstdim{\vecprime{q}}) \ge f_1(\firstdim{\vecprime{q}})$, so $f_0(x'_1) > f_0(x) + (\firstdim{\vecprime{q}} - x) \cdot m$.
    This implies $(\firstdim{\vecprime{q}},f_0(\firstdim{\vecprime{q}})) \cdot \vec{n} > (x,f_0(x)) \cdot \vec{n}$.
    By concavity of $f_0$:
    \begin{align*}
      f_0(x) & \ge f_0(x') + (x - x') \cdot \frac{f_0(\firstdim{\vecprime{q}}) - f_0(x')}{\firstdim{\vecprime{q}} - x'} \\
      (x,f_0(x)) \cdot \vec{n} & \ge \left(1 - \frac{x-x'}{\firstdim{\vecprime{q}} - x'}\right) (x',f_0(x')) \cdot \vec{n} + \frac{x - x'}{\firstdim{\vecprime{q}} - x'} (\firstdim{\vecprime{q}},f_0(\firstdim{\vecprime{q}})) \cdot \vec{n}
    \end{align*}
    We obtain $(x,f_0(x)) \cdot \vec{n} > \left(1 - \frac{x-x'}{\firstdim{\vecprime{q}} - x'}\right)  (x,f_0(x)) \cdot \vec{n} + \frac{x - x'}{\firstdim{\vecprime{q}} - x'} (x,f_0(x)) \cdot \vec{n}  = (x,f_0(x)) \cdot \vec{n}$, which is a contradiction.
  \end{itemize}

  \medskip

  We now turn to the case of $\max$.
  By the characterization of the Pareto curve, there are $\vec{q}, \vec{r} \in \achievable_{s_1} \cup \achievable_{s_2}$ 
  such that $(x,f_0(x))$ is a convex combination of $\vec{q}$ and $\vec{r}$.
  Let $f',f'' \in \{f_1,f_2\}$, be such that $\seconddim{\vec{q}} \le f'(\firstdim{\vec{q}})$ and $\seconddim{\vec{r}} \le f''(\firstdim{\vec{r}})$.
  Let also $\lambda_1 \in [0,1]$ be such that $(x,f_0(x)) = \lambda_1 \cdot \vec{q} + (1 - \lambda_1) \cdot \vec{r}$.
  We first prove $\lslope(f_0,x) \le \lslope(f',\firstdim{\vec{q}})$ (and obtain similarly $\lslope(f_0,x) \le \lslope(f'',\firstdim{\vec{r}})$).
  By the characterisation of the Pareto curve, for all $\varepsilon \ne 0$, $f_0(x-\varepsilon) \ge \lambda_1 \cdot f'(\firstdim{\vec{q}} - \frac{\varepsilon}{\lambda_1}) + (1 - \lambda_1) \cdot f''(\firstdim{\vec{r}})$.
  So for all $\varepsilon > 0$, $\frac{f_0(x) - f_0(x-\varepsilon)}{\varepsilon} \le \lambda_1 \cdot \frac{f'(\firstdim{\vec{q}}) - f'(\firstdim{\vec{q}} - \frac{\varepsilon}{\lambda_1})}{\varepsilon}$.
  The limit when $\varepsilon$ goes towards $0$ is smaller than that of $\frac{f'(\firstdim{\vec{q}}) - f'(\firstdim{\vec{q}} - \varepsilon)}{\varepsilon}$.
  Hence, $\lslope(f_0,x) \le \lslope(f',\firstdim{\vec{q}})$.

  Now, since their convex combination contains $x$, one of $\firstdim{\vec{q}}$ and $\firstdim{\vec{r}}$ is greater than $x$.
  Assume without loss of generality that it is $\firstdim{\vec{q}}$.
  Then $\firstdim{\vec{q}} \ge x$ and by Lem.~\ref{lem:slope-decrease}, $\lslope(f',x) \ge \lslope(f',\firstdim{\vec{q}}) \ge \lslope(f_0,x)$.
  This shows $\lslope(f_0,x) \le \max(\lslope(f_1,x),\lslope(f_2,x))$.

  \medskip

  We now turn to the case of $\rslope$.
  This is in fact the same proof if we consider the function $g_0 \colon x \mapsto f_0(1 - x)$, which is also concave and for which $\lslope(g_0,x) = -\rslope(f_0,1-x)$: at no point did we use the fact that the slope was negative.
  So what we proved for $f_0$ is also valid for $g_0$, which means $\min(\rslope(f_{s_1},x), \rslope(f_{s_2},x)) \le \rslope(f_{s_0},x) \le \max(\rslope(f_{s_1},x), \rslope(f_{s_2},x)) $.
\qed
\end{proof}

We are now ready to prove Lem.~\ref{lem:slope-convex-union}.
\begin{reflemma}{lem:slope-convex-union}
  Let $s_0$ be a \PONE{} state with two successors $s_1$ and $s_2$, and let $\vec{p}$ be a left \mbox{accumulation} point of $f_{s_0}$.
  Then there is  $\eta(s_0,\vec{p}) > 0$ such that
  for all $\varepsilon \in (0, \eta(s_0,\vec{p}\ )\openend$,
  there is $s' \in \{ s_1, s_2 \}$ such that:
  \begin{inparaenum}
  \item $\vec{p}$ is a left accumulation point in $f_{s'}$; 
  \item $\lslope(s_0,\firstdim{\vec{p}}) = \lslope(s',\firstdim{\vec{p}})$;
  \item $f_{s_0}(\firstdim{\vec{p}} - \varepsilon) \ge f_{s'}(\firstdim{\vec{p}} - \varepsilon)$ and $\rslope(f_{s_0},\firstdim{\vec{p}} - \varepsilon)  \ge \rslope(f_{s'},\firstdim{\vec{p}} - \varepsilon)$. 
  \end{inparaenum}
\end{reflemma}
\begin{proof}
  By the characterisation of the Pareto curve, the extremal points of the Pareto curve $f_{s_0}$ are included in the extremal points of $f_{s_1}$ and $f_{s_2}$.
  Let $(\vecsup{p}{i})_{i\in \mathbb{N}}$ be a sequence of extremal points of the Pareto curve in $s_0$ which converges to $\vec{p}$.
  We assume that the 
  first coordinate of the sequence is increasing (note that we can always extract a sub-sequence which satisfies that).
  Each $\vecsup{p}{i}$ is either an extremal point of $s_1$ or of $s_2$, therefore the Pareto curve of one of the two contains infinitely many points $\vecsup{p}{i}$.
  We first show that for any $s'$ for which this holds, the first two points are true for $s'$.
  We will then show that the third point is satisfied by one such $s'$.

  \point{\ref{lem:slope-convex-union-a}} Since there is a subsequence of $\vecsup{p}{i}$ which are extremal points of $s'$ and which converges to $\vec{p}$ from the left, $\vec{p}$ is a left accumulation point of $s'$.
  
  \point{\ref{lem:slope-convex-union-b}} Moreover since $\lim_{x' \to \firstdim{\vec{p}}^-} \frac{f_{s_0}(x') - f_{s_0}(\firstdim{\vec{p}})}{x' - \firstdim{\vec{p}}}$ is well defined and $(\firstdim{p^i})_{i \in \mathbb{N}}$ converges to $\firstdim{\vec{p}}$, we have that:
  \begin{align*}
    \lslope(s_0,\firstdim{\vec{p}}) & = 
    \lim_{x' \to \firstdim{\vec{p}}^-} \frac{f_{s_0}(x') - f_{s_0}(\firstdim{\vec{p}})}{x' - \firstdim{\vec{p}}} \\
    & = \lim_{i \to \infty} \frac{f_{s_0}(\firstdim{p^i}) - f_{s_0}(\firstdim{\vec{p}})}{\firstdim{p^i} - \firstdim{\vec{p}}} \\ 
    & = \lim_{i \to \infty} \frac{f_{s'}(\firstdim{p^i}) - f_{s'}(\firstdim{\vec{p}})}{\firstdim{p^i} - \firstdim{\vec{p}}} \\ 
    & = \lslope(s',\firstdim{\vec{p}})
  \end{align*}

  \point{\ref{lem:slope-convex-union-c}}
  Let $x' < \firstdim{\vec{p}}$.
  If $\vec{p}$ is not a left accumulation point for $s_1$, let $x_1 = \sup\{ x_1 < \firstdim{\vec{p}} \mid (x_1,f_{s_1}(x_1)) \text{ extremal in } f_{s_1} \}$ and let $\vec{q}$ be an extremal point of $s_0$ with $\firstdim{\vec{q}} \in \openbegin x_1, \firstdim{\vec{p}}\openend$.
  Since extremal points of $s_0$ are included in those of $s_1$ and $s_2$, the extremal points on $[\firstdim{\vec{q}} , x]$ and $x$ are the same.
  Since moreover $(\firstdim{\vec{q}},f_0(\firstdim{\vec{q}}))$ and $(x,f_0(x))$ are extremal points of both $f_0$ and $f_2$, the curves $f_0$ and $f_{2}$ coincide on $[\firstdim{\vec{q}} , x]$.
  So the property we want is satisfied by $s' = s_2$ and $\eta(s_0,\vec{p}) = x - \firstdim{\vec{q}}$.
  
  Similarly, if $\vec{p}$ is not a left accumulation point for $s_2$, then $s' = s_1$ and some $\eta(s_0,\vec{p})$ witnesses the property.

  If $\vec{p}$ is a left accumulation point for $s_1$ but $\lslope(f_{s_1},\vec{p}) \ne \lslope(f_0,\vec{p})$, then by what was proven in point~\ref{lem:slope-convex-union-b}, we conclude that there is no infinite sequence of extremal points of $s_1$ that are also extremal points of $s_0$ and converge to $\vec{p}$.
  There is a left neighbourhood of $x$ where $f_1$ is below $f_2$ (otherwise some extremal point would be above and also be an extremal point of $f_0$).
  Hence, the curves $f_0$ and $f_{s_2}$ coincide on some neighbourhood of $x$.
  So the property we want is satisfied by $s' = s_2$ and some $\eta(s_0,\vec{p})$.
  
  Similarly, if  $\lslope(f_{s_1},\vec{p}) \ne \lslope(f_s,\vec{p})$, then $s' = s_1$ and some $\eta(s_0,\vec{p})$ witnesses the property.

  In the other cases, both $s_1$ and $s_2$ satisfy points~\ref{lem:slope-convex-union-a} and~\ref{lem:slope-convex-union-b}.
  By Lem.~\ref{lem:bound-slope-convex-union}, there is $s' \in \{s_1,s_2\}$, such that $\rslope(f_{s'},x') \le \rslope(f_{s_0},x')$.
  So the property is satisfied for any $\eta(s_0,\vec{p})$ such that $x - \eta(s_0,\vec{p})$ is still in the domain of $f_{s_0}$.
\qed
\end{proof}

\subsection{Evolution of the slope in \PTWO{} states (proof of Lem.~\ref{lem:slope-intersection})}

\begin{lemma}\label{lem:bound-slope-intersection}
  Let $s_0$ be a \PTWO{} state with two successors $s_1$ and $s_2$.
  For all $x$ where $f_{s_1}$ is defined and $f_{s_2}$ is defined, $f_{s_0}$ is defined in $x + \reward_1(s_0)$ and:
  \begin{inparaenum}
  \item
    if $f_{s_1}(x) < f_{s_2}(x)$ then $\lslope(f_{s_0},x+\reward_1(s_0)) = \lslope(f_{s_1},x)$ and $\rslope(f_{s_0},x+\reward_1(s_0)) = \rslope(f_{s_1},x)$; \label{lem:bound-slope-intersection-a}
  \item 
    if $f_{s_1}(x) = f_{s_2}(x)$ then $\lslope(f_{s_0},x+\reward_1(s_0)) = \max \{ \lslope(f_{s_1},x), \lslope(f_{s_2},x) \}$ and $\rslope(f_{s_0},x) = \min \{ \rslope(f_{s_1},x), \rslope(f_{s_2},x) \}$.\label{lem:bound-slope-intersection-b}
  \end{inparaenum}
\end{lemma}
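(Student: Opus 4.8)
The plan is to distil from the equations of Sec.~\ref{sec:eqns} a single functional identity expressing $f_{s_0}$ in terms of $f_{s_1}$ and $f_{s_2}$, and then to read off all four slope equalities by elementary one‑sided–limit manipulations, splitting according to whether $f_{s_1}(x)<f_{s_2}(x)$ or $f_{s_1}(x)=f_{s_2}(x)$.

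For the functional identity I would start from the equation for a \PTWO{} state, which under our two‑successor assumption reads $\achievable_{s_0}=\reward(s_0)+\dwc(\achievable_{s_1}\cap\achievable_{s_2})$; since $\achievable_{s_1}$ and $\achievable_{s_2}$ are downward closed, so is their intersection and the $\dwc$ is redundant. Recalling (as in the proof of Lem.~\ref{lem:f-lemma}) that for a state $s$ the value $f_s(z)$ is the largest $y$ with $(z,y)$ achievable and that $(z,f_s(z))$ is itself achievable in stopping games, I would observe that for any $x$ in the common domain of $f_{s_1}$ and $f_{s_2}$ the point $(x,\min(f_{s_1}(x),f_{s_2}(x)))$ lies in $\achievable_{s_1}\cap\achievable_{s_2}$ (by downward closure it lies in each $\achievable_{s_i}$), whereas no point $(x,y)$ with $y>\min(f_{s_1}(x),f_{s_2}(x))$ does. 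Hence $f_{s_0}$ is defined at $x+\reward_1(s_0)$ and
\[
  f_{s_0}(x+\reward_1(s_0)) \;=\; \reward_2(s_0)+\min\bigl(f_{s_1}(x),f_{s_2}(x)\bigr)
\]
on the common domain (as a sanity check, $\min$ of two concave decreasing functions is again concave and decreasing, consistent with Lem.~\ref{lem:f-lemma}). Since adding a constant to the argument and to the value of a function changes neither its left nor its right slope at any point, it then suffices to prove the claimed equalities with $f_{s_0}(\cdot+\reward_1(s_0))$ replaced by $g:=\min(f_{s_1},f_{s_2})$.

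The case $f_{s_1}(x)<f_{s_2}(x)$ is immediate: by continuity of $f_{s_1}$ and $f_{s_2}$ (Lem.~\ref{lem:f-lemma}) there is a neighbourhood of $x$ on which $f_{s_1}<f_{s_2}$, so $g$ coincides with $f_{s_1}$ there and inherits both one‑sided slopes from $f_{s_1}$. For the case $f_{s_1}(x)=f_{s_2}(x)=:c$, for every $x'$ close to $x$ with both $f_{s_1},f_{s_2}$ defined at $x'$ I would use $c-\min(a,b)=\max(c-a,c-b)$ to write
\[
  \frac{g(x)-g(x')}{x-x'} \;=\; \frac{\max\bigl(c-f_{s_1}(x'),\,c-f_{s_2}(x')\bigr)}{x-x'}.
\]
For $x'<x$ the denominator is positive, so the right‑hand side equals $\max\bigl(\frac{f_{s_1}(x)-f_{s_1}(x')}{x-x'},\frac{f_{s_2}(x)-f_{s_2}(x')}{x-x'}\bigr)$; letting $x'\to x^{-}$ and using continuity of $\max$ together with the existence of the left slopes of $f_{s_1},f_{s_2}$ (by concavity) gives $\lslope(g,x)=\max(\lslope(f_{s_1},x),\lslope(f_{s_2},x))$. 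For $x'>x$ the denominator is negative, which turns the outer $\max$ into a $\min$, and letting $x'\to x^{+}$ gives $\rslope(g,x)=\min(\rslope(f_{s_1},x),\rslope(f_{s_2},x))$.

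I expect the only delicate point to be the verification of the functional identity itself — in particular making sure that the ``largest achievable $y$'' characterisation of $f_s$ from Lem.~\ref{lem:f-lemma}, together with downward closure of the $\achievable_{s_i}$, really pins down $f_{s_0}$ as $\reward_2(s_0)+\min(f_{s_1},f_{s_2})$ on the \emph{whole} common domain, endpoints included, rather than merely on its interior. Once that identity is in hand, the slope bookkeeping above is routine.
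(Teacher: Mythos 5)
Your proposal is correct, and its skeleton matches the paper's: both reduce everything to the identity $f_{s_0}(x+\reward_1(s_0))=\reward_2(s_0)+\min(f_{s_1}(x),f_{s_2}(x))$ and then compute one-sided limits, and case~(a) is handled identically (continuity gives a neighbourhood where the $\min$ is realised by $f_{s_1}$ alone). Where you genuinely diverge is case~(b): the paper first shows $\lslope(f_{s_0},x+\reward_1(s_0))$ lies in $\{\lslope(f_{s_1},x),\lslope(f_{s_2},x)\}$ by extracting a subsequence $(x_i)$ along which $f_{s_0}(x_i+\reward_1(s_0))$ is realised by the same successor infinitely often, and then does a separate case split on whether the two left slopes agree, arguing that if they differ then near $x$ the $\min$ is realised by the successor with the larger left slope. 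Your route replaces all of this with the single identity $c-\min(a,b)=\max(c-a,c-b)$ applied to the difference quotient, with the sign of the denominator turning the outer $\max$ into a $\min$ on the right; this yields both the $\max$ for $\lslope$ and the $\min$ for $\rslope$ in one uniform computation, using only existence of the one-sided slopes (concavity) and continuity of $\max$. That is shorter and arguably more robust than the paper's argument. You also do more than the paper on the functional identity itself: the paper simply invokes ``the characterisation of the Pareto curve in \PTWO{} states'', whereas you derive it from the fixpoint equation $\reward(s_0)+\dwc(\achievable_{s_1}\cap\achievable_{s_2})$ via downward closure and Pareto-maximality --- and you correctly identify that verifying definedness of $f_{s_0}$ at $x+\reward_1(s_0)$ (no achievable point to the right dominating $(x,\min(f_{s_1}(x),f_{s_2}(x)))$) is the only delicate step there. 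The one caveat, shared with the paper's own proof, is that the stated slope equalities implicitly presuppose that the relevant one-sided slopes of $f_{s_1}$ and $f_{s_2}$ at $x$ exist, i.e.\ that $x$ is not an endpoint of the common domain.
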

\begin{proof}
  \point{\ref{lem:bound-slope-intersection-a}} Assume $f_{s_1}(x) < f_{s_2}(x)$.
  By continuity of the curves (Lem.~\ref{lem:f-lemma}.\ref{lem:f-continuity}), $f_{s_1}(x') < f_{s_2}(x')$ holds for all points $x'$ of some neighbourhood $[x- \eta,x+\eta]$.
  By the characterization of the Pareto curve in \PTWO{} states (see Sec.~\ref{sec:eqns}), we have $f_{s_0}(x'+\reward_1(s_0)) = \reward_2(s_0) + \min \{ f_{s_1}(x') , f_{s_2}(x')\}$.
  Hence, for all $x' \in [x - \eta, x+\eta]$, $f_{s_0}(x'+\reward_1(s_0)) = \reward_2(s_0) + f_{s_1}(x')$.
  So,
  $\lslope(f_{s_0},x + \reward_1(s_0)) = \lim_{x' \to x^-} \frac{\reward_2(s_0) + f_{s_0}(x') - f_{s_0}(x) - \reward_2(s_0)}{x' - x} = \lim_{x' \to x^-} \frac{f_{s_1}(x') - f_{s_1}(x)}{x' - x} = \lslope(f_{s_1},x)$.
  Similarly, $\rslope(f_{s_0},x + \reward_1(s_0)) = \lim_{x' \to x^+} \frac{f_{s_0}(x') - f_{s_0}(x)}{x' - x} = \lim_{x' \to x^+} \frac{f_{s_1}(x') - f_{s_1}(x)}{x' - x} = \rslope(f_{s_1},x)$.

  \point{\ref{lem:bound-slope-intersection-b}} 
  If  $f_{s_1}(x) = f_{s_2}(x)$ and $\lslope(f_{s_1},x) = \lslope(f_{s_2},x)$ then for a sequence $(x_i)_{i \in \mathbb{N}}$ that converges to $x$ from the left, $\lim_{i \to \infty} \frac{f_{s_0}(x_i+\reward_1(s_0)) - f_{s_0}(x+\reward_1(s_0))}{x_i - x} = \lslope(f_{s_0},x+\reward_1(s_0))$.
  By the characterisation of the Pareto curve, $f_{s_0}(x_i+\reward_1(s_0)) \in \{\reward_2(s_0) + f_{s_1}(x_i), \reward_2(s_0) + f_{s_2}(x_i)\}$, so there are infinitely many $x_i$ in the sequence for which $\reward_2(s_0) + f_{s_1}(x_i) = f_{s_0}(x_i+\reward_1(s_0))$ or there are infinitely many $x_i$ in the sequence for which $\reward_2(s_0) + f_{s_2}(x_i) = f_{s_0}(x_i+\reward_1(s_0))$.
  Without loss of generality we assume it is for $s_1$.
  $\lslope(f_1,x) = \lim_{i \to \infty} \frac{f_{s_1}(x_i) - f_{s_1}(x)}{x_i - x} =$ 
  $\lim_{i \to \infty} \frac{f_{s_0}(x_i + \reward_1(s_0)) - f_{s_0}(x+\reward_1(s_0))}{x_i - x} = \lslope(f_{s_0},x+\reward_1(s_0))$.
  Hence, we proved that $ \lslope(f_{s_0},x+\reward_1(s_0)) \in \{ \reward_2(s_0) + \lslope(f_{s_1},x), \reward_2(s_0) + \lslope(f_{s_2},x) \}$.
  We could prove similarly, by considering a sequence $x_i$ that converges from the right, that $\rslope(f_{s_0},x+\reward_1(s_0)) \in \{ \reward_2(s_0) + \rslope(f_{s_1},x), \reward_2(s_0) + \rslope(f_{s_2},x) \}$.

  We will now show that if $f_{s_1}(x) = f_{s_2}(x)$ and $\lslope(f_{s_1},x) < \lslope(f_{s_2},x)$, then $\lslope(f_{s_0},x+\reward_1(s_0)) = \lslope(f_{s_2},x)$, which shows the property.
  Since:
  \[ \lim_{x' \to x^-} \frac{f_{s_1}(x') - f_{s_1}(x)}{x' - x} = \lslope(f_{s_1},x) < \lslope(f_{s_2},x) = \lim_{x' \to x^-} \frac{f_{s_2}(x') - f_{s_2}(x)}{x' - x},
  \]
  there is $\varepsilon > 0$, such that for all $x' \in [x - \varepsilon, x \openend$, $\frac{f_{s_1}(x') - f_{s_1}(x)}{x' - x} < \frac{f_{s_2}(x') - f_{s_2}(x)}{x' - x}$.
    This implies that $f_{s_1}(x') > f_{s_2}(x')$, because $x' - x < 0$ and therefore $f_{s_0}(x'+\reward_1(s_0)) = \reward_2(s_0) + f_{s_2}(x')$ (by characterization of the Pareto curve).
    Hence,
    \begin{align*}
      \lslope(f_{s_0},x+\reward_1(s_0)) & = \lim_{x' \to x^-} \frac{f_{s_0}(x'+\reward_1(s_0)) - f_{s_0}(x+\reward_1(s_0))}{x' - x} \\
      & = \lim_{x' \to x^-} \frac{f_{s_2}(x') - f_{s_2}(x)}{x' - x} = \lslope(f_{s_2},x).
    \end{align*}

    The proof is quite similar for the right slope.
    Assume $\rslope(f_{s_1}(x)) < \rslope(f_{s_2},x)$.
    Since:
    \begin{align*}
      \lim_{x' \to x^+} \frac{f_{s_1}(x') - f_{s_1}(x)}{x' - x} & = \rslope(f_{s_1}(x)) < \rslope(f_{s_2},x) \\
      & = \lim_{x' \to x^+} \frac{f_{s_2}(x') - f_{s_2}(x)}{x' - x},
    \end{align*}
    there is $\varepsilon > 0$, such that for all $x' \in \openbegin x, x + \varepsilon \openend$, $\frac{f_{s_1}(x') - f_{s_1}(x)}{x' - x} < \frac{f_{s_2}(x') - f_{s_2}(x)}{x' - x}$.
    This implies that $f_{s_1}(x') < f_{s_2}(x')$, and therefore $f_{s_0}(x'+\reward_1(s_0)) = \reward_2(s_0) + f_{s_1}(x')$ (by characterisation of the Pareto curve).
    Hence,
    \begin{align*}
      \rslope(f_{s_0},x+\reward_1(s_0)) & = \lim_{x' \to x^+} \frac{f_{s_0}(x'+\reward_1(s_0)) - f_{s_0}(x+\reward_1(s_0))}{x' - x} \\
      & = \lim_{x' \to x^+} \frac{f_{s_2}(x') - f_{s_2}(x)}{x' - x} \\ 
      & = \rslope(f_{s_2},x).
    \end{align*}
\qed
\end{proof}

\begin{reflemma}{lem:slope-intersection}
  Let $s_0$ be a \PTWO{} state with two successors $s_1$ and $s_2$, and let $\vec{p}$ be a left \mbox{accumulation} point of $f_{s_0}$.
  There is $\eta(s_0,\vec{p}) > 0$ such that for all $\varepsilon \in (0,\eta(s_0,\vec{p}\ )\openend$, there is $s' \in \{s_1,s_2\}$, such that:
  \begin{inparaenum}
  \item $\vec{p} - \reward(s_0)$ is a left accumulation point in $f_{s'}$; 
  \item $\lslope(s_0,x) = \lslope(s',\firstdim{\vec{p}}-\reward_1(s_0))$; 
  \item  $f_{s_0}(\firstdim{\vec{p}}-\varepsilon) = f_{s'}(\firstdim{\vec{p}}-\varepsilon-\reward_1(s_0))$ and $\rslope(f_{s_0},\firstdim{\vec{p}}-\varepsilon) = \rslope(f_{s'},\firstdim{\vec{p}}-\varepsilon-\reward_1(s_0))$.
  \end{inparaenum}
\end{reflemma}
\begin{proof}  
  We first show that for each $x$ where $f_{s_0}$ is defined, there is $s' \in \{s_1,s_2\}$ such that $\lslope(s_0,x) = \lslope(s',x-\reward_1(s_0))$.
  Let $(x_i)_{i\in \mathbb{N}}$ be a sequence converging towards $x$.
  By the characterization of the Pareto curve, for each $i\in \mathbb{N}$, $f_{s_0}(x_i) \in \{ \reward_2(s_0) + f_{s_1}(x_i-\reward_1(s_0)) , \reward_2(s_0) + f_{s_2}(x_i-\reward_1(s_0)) \}$.
  It will be either $f_{s_1}$ or $f_{s_2}$ infinitely often. 
  We write $s'$ for a state such that $f_{s_0}(x_i) = \reward_2(s_0) + f_{s'}(x_i-\reward_1(s_0))$ infinitely often.
  We can extract a subsequence of $x_i$ so that we can assume $f_{s_0}(x_i) = \reward_2(s_0) + f_{s'}(x_i-\reward_1(s_0))$ for all $i$.
  Since this sequence converges to $x$, by continuity of $f_{s'}$ (Lem.~\ref{lem:f-continuity}) and $f_{s_0}$, we have $\reward_2(s_0) + f_{s'}(x-\reward_1(s_0)) = f_{s_0}(x)$.
  Moreover:
  $\lslope(s_0,x)  = \lim_{x' \to x^-} \frac{f_{s_0}(x') - f_{s_0}(x)}{x' - x} 
     = \lim_{i \to \infty}\frac{f_{s_0}(x_i) - f_{s_0}(x)}{x_i - x} 
     = \lim_{i \to \infty}\frac{f_{s'}(x_i-\reward_1(s_0)) - f_{s'}(x-\reward_1(s_0))}{x_i - x} 
     = \lslope(s',x)$.

     We now show we can extract $(x_{\tau(i)})_{i\in \mathbb{N}}$, an infinite subsequence of $(x_i)_{i\in\mathbb{N}}$, whose elements all have different slopes.
     Since $\vec{p}$ is a left accumulation point for $f_{s_0}$, for each $x_{\tau(i)}$, there are two extremal points for $f_{s_0}$ in $\openbegin x_{\tau(i)}, x\openend$.
  Because of Lem.~\ref{lem:general-lemma}.\ref{lem:extremal-slopes}, no more than two extremal points can have the same slope, the slope of second extremal point~$\vecprime{p}$ is strictly less than that of $x_{\tau(i)}$.
  Thus, by choosing $\tau(i+1)$ such that $x_{\tau(i+1)} > \firstdim{\vecprime{p}}$, we ensure that the slope of all $x_{\tau(i)}$ is different.
  This shows we can extract an infinite subsequence of $(x_i)_{i\in\mathbb{N}}$ whose all elements have different slopes.
  To simplify notations, we will still write $(x_i)_{i\in \mathbb{N}}$ for this subsequence, and assume all slopes $\lslope(f_{s_0},x_i)$ are different.

  Because we showed $\lslope(s_0,x) \in \{\lslope(s_1,x-\reward_1(s_0)), \lslope(s_2,x-\reward_1(s_0)) \}$ for all points~$x$ where $f_{s_0}$ is defined, this is also the case for each $x_i$.
  We can extract an infinite subsequence of $(x_i)_{i\in \mathbb{N}}$ such that it always corresponds to the same state.
  We can then assume that there is $s'\in\{s_1,s_2\}$, such that for all $i\in\mathbb{N}$, $\lslope(s_0,x_i) = \lslope(s',x_i-\reward_1(s_0))$.
  Since the slopes of every $x_i$ is different, 
  by Lem.~\ref{lem:general-lemma}.\ref{lem:slope-extremals}, this implies that there is an infinite number of extremal points in the neighbourhood of $\vec{p}$.
  This means that $\vec{p}$ is a left accumulation point for either $s_1$ or $s_2$.


  We now prove point~\ref{lem:slope-intersection-c}.
  If $\vec{p}$ is not a left accumulation point for $s_1$, let $x_1 = \sup\{ x_1 < x \mid (x_1,f_{s_1}(x_1)) \text{ extremal in } f_{s_1} \}$.
  Since $f_{s_1}$ and $f_{s_2}$ are concave their curves can only intersect twice in $[x_1,x]$.
  Since $f_{s_0}$ has an infinite number of extremal points in this interval (shifted by $\reward_1(s_0)$) and $f_{s_1}$ has none, we can deduce from Lem.~\ref{lem:general-lemma}.\ref{lem:slope-extremals} and Lem.~\ref{lem:bound-slope-intersection}
  that $f_{s_2}$ is below $f_{s_1}$ in a neighbourhood $[\firstdim{\vec{q}},x]$ of $x$.
  Hence, for all $x' \in [\firstdim{\vec{q}} , x]$, $f_{s_0}(x') = \reward_2(s_0) + f_{s_2}(x' - \reward_1(s_0))$.
  So the property we want is satisfied by $s' = s_2$ and $\eta(s_0,\vec{p}) = x - \firstdim{\vec{q}}$.
  
  Similarly, if $\vec{p}$ is not a left accumulation point for $s_2$, then $s' = s_1$ and some $\eta(s_0,\vec{p})$ witness the property.

  If $\vec{p}$ is a left accumulation point for $s_1$ but $\lslope(f_{s_1},\firstdim{\vec{p}}-\reward_1(s_0)) \ne \lslope(f_{s_0},\firstdim{\vec{p}})$, then we can conclude from Lem.~\ref{lem:bound-slope-intersection} that $\lslope(f_{s_1},\firstdim{\vec{p}}) <  \lslope(f_s,\firstdim{\vec{p}}) = \lslope(f_{s_2},\firstdim{\vec{p}})$.
  Hence, on some left neighbourhood of $x$, $f_{s_2}$ is strictly below $f_{s_1}$.
  This means that the curves $f_{s_0}$ and $f_{s_2}$ (shifted by $\reward(s_0)$) coincide on some neighbourhood of $x$.
  So the property we want is satisfied by $s' = s_2$ and some $\eta(s_0,\vec{p})$.
  
  Similarly, if  $\lslope(f_{s_1},\firstdim{\vec{p}}) \ne \lslope(f_{s_0},\firstdim{\vec{p}})$, then $s' = s_1$ and some $\eta(s_0,\vec{p})$ witnesses the property.

  In the other cases, both $s_1$ satisfies point \eqref{lem:slope-intersection-a} and \eqref{lem:slope-intersection-b}.
  By Lem.~\ref{lem:bound-slope-intersection}, there is $s' \in \{s_1,s_2\}$, such that $\rslope(f_{s'},x'-\reward_1(s_0)) \le \rslope(f_{s_0},x')$.
  So the property is satisfied for $s'$ and any $\eta(s_0,\vec{p})$ such that $x - \eta(s_0,\vec{p})$ is still in the domain of $f_{s_0}$.
\qed
\end{proof}

\section{Evolution of the slope in stochastic states (proof of Lem.~\ref{lem:accumulation-stochastic-two})}\label{sec:pareto-stochastic}

\begin{lemma}\label{lem:optimal-stochastic}
  Let $s_0$ be a stochastic state with two successors $s_1$ and $s_2$.
  If $\vec{p}$ is a point of $f_{s_0}$
  then there are $\vec{q} \in f_{s_1}$ and $\vec{r} \in f_{s_2}$ such that $\vec{p} = \tfunction(s_0,s_1) \cdot \vec{q} + \tfunction(s_0,s_2) \cdot \vec{r}$.
  Moreover if $\vecprime{p}$ is another point of $f_{s_0}$ with $\firstdim{\vec{p}} \le \firstdim{\vecprime{p}}$, then we can chose $\vec{q},\vecprime{q}\in f_{s_1}$, $\vec{r}, \vecprime{r}\in f_{s_2}$ such that 
  $\vec{p} = \tfunction(s_0,s_1) \cdot \vec{q} + \tfunction(s_0,s_2) \cdot \vec{r}$, 
  $\vecprime{p} = \tfunction(s_0,s_1) \cdot \vecprime{q} + \tfunction(s_0,s_2) \cdot \vecprime{r}$,
  $\firstdim{\vec{q}}\le \firstdim{\vecprime{q}}$ and $\firstdim{\vec{r}} \le \firstdim{\vecprime{r}}$.
\end{lemma}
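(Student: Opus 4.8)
The plan is to first turn the fixpoint equation for stochastic states from Sec.~\ref{sec:eqns} into a pointwise description of $f_{s_0}$, and then to run an exchange argument making the witnessing decomposition monotone. Write $p_i=\tfunction(s_0,s_i)$. From $\achievable_{s_0}=\dwc(p_1\achievable_{s_1}+p_2\achievable_{s_2})$, using that \game is stopping (so the sets $\achievable_{s_i}$ are compact and every point of $\achievable_{s_i}$ is dominated by a point of $f_{s_i}$; cf.\ Lem.~\ref{lem:f-lemma}), I would derive that for every $x$ in the domain of $f_{s_0}$,
\[ f_{s_0}(x)=\max\Set{\,p_1 f_{s_1}(u)+p_2 f_{s_2}(v)\ \mid\ p_1u+p_2v=x,\ u\in\mathrm{dom}\,f_{s_1},\ v\in\mathrm{dom}\,f_{s_2}\,}, \]
with the maximum attained (the only slightly delicate point is that one may always tighten the inequality $p_1u+p_2v\ge x$ coming from the downward closure to an equality, since decreasing $u$ or $v$ only increases the values of the decreasing functions $f_{s_i}$). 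The first assertion is then immediate: given $\vec{p}\in f_{s_0}$, take a maximising pair $(u,v)$ and set $\vec{q}=(u,f_{s_1}(u))\in f_{s_1}$, $\vec{r}=(v,f_{s_2}(v))\in f_{s_2}$, so $\vec{p}=p_1\vec{q}+p_2\vec{r}$. (Without the formula, one can instead pick $\vec{q}_0\in\achievable_{s_1},\vec{r}_0\in\achievable_{s_2}$ with $\vec{p}\le p_1\vec{q}_0+p_2\vec{r}_0$, dominate $\vec{q}_0,\vec{r}_0$ by $\vec{q}\in f_{s_1},\vec{r}\in f_{s_2}$, and use maximality of $\vec{p}$ on the Pareto curve to force equality.)

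For the ``moreover'' part, fix $x=\firstdim{\vec{p}}\le\firstdim{\vecprime{p}}=x'$. For $z$ in the domain of $f_{s_0}$ let $G_z(u)=p_1f_{s_1}(u)+p_2f_{s_2}\bigl((z-p_1u)/p_2\bigr)$; since $f_{s_1},f_{s_2}$ are concave (Lem.~\ref{lem:f-lemma}) and $u\mapsto(z-p_1u)/p_2$ is affine, $G_z$ is concave on an interval, so its set of maximisers is a closed interval $[u^-_z,u^+_z]$, of value $f_{s_0}(z)$ by the formula above. The key step is the monotone comparative statics $u^-_x\le u^-_{x'}$ and $u^+_x\le u^+_{x'}$: given maximisers $u$ for $x$ and $u'$ for $x'$ with $u>u'$, the associated $v=(x-p_1u)/p_2$ and $v'=(x'-p_1u')/p_2$ satisfy $v<v'$ and $p_1(u-u')\le p_2(v'-v)$ (from $x\le x'$), so $\hat v:=v+\tfrac{p_1}{p_2}(u-u')$ and $\hat v':=v'-\tfrac{p_1}{p_2}(u-u')$ lie in $[v,v']$ with $\hat v+\hat v'=v+v'$; concavity of $f_{s_2}$ gives $f_{s_2}(\hat v)+f_{s_2}(\hat v')\ge f_{s_2}(v)+f_{s_2}(v')$, hence the swapped pairs $(u',\hat v)$ at $x$ and $(u,\hat v')$ at $x'$ are feasible and have total value at least $f_{s_0}(x)+f_{s_0}(x')$, so each is again a maximiser. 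Thus maximiser sets are closed under taking the minimum at $x$ and the maximum at $x'$; applying this to $(u^-_x,u^-_{x'})$ and to $(u^+_x,u^+_{x'})$ gives the claim. The symmetric argument in the $v$-parametrisation (using concavity of $f_{s_1}$) gives $v^-_x\le v^-_{x'}$, where $v^-_z=(z-p_1u^+_z)/p_2$ is the smallest optimal $v$ at $z$.

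To conclude, I would take $\vec{q}=(u^+_x,f_{s_1}(u^+_x))$, $\vec{r}=(v^-_x,f_{s_2}(v^-_x))$, $\vecprime{q}=(u^+_{x'},f_{s_1}(u^+_{x'}))$, $\vecprime{r}=(v^-_{x'},f_{s_2}(v^-_{x'}))$. All four lie on the respective Pareto curves; $p_1\vec{q}+p_2\vec{r}=\vec{p}$ and $p_1\vecprime{q}+p_2\vecprime{r}=\vecprime{p}$ because $u^+_z$ is a maximiser for $z$ and $p_1u^+_z+p_2v^-_z=z$; and $\firstdim{\vec{q}}=u^+_x\le u^+_{x'}=\firstdim{\vecprime{q}}$ and $\firstdim{\vec{r}}=v^-_x\le v^-_{x'}=\firstdim{\vecprime{r}}$ by the previous paragraph.

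The routine part is Step~1 (massaging the fixpoint equation, plus a compactness/domination argument). The real obstacle is the monotonicity: the value $f_{s_0}(z)$ alone does not pin down the decomposition, and one genuinely needs concavity of \emph{both} successor curves together with the exchange (``mean-preserving contraction'') inequality to guarantee a decomposition that is coordinatewise monotone in \emph{both} successors simultaneously; the care is in selecting the canonical representative $u^+_z$ of the maximiser interval, chosen so that monotonicity holds on the $s_1$-side and the $s_2$-side at once.
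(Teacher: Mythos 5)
Your proof is correct and follows essentially the same route as the paper's: the first part is the same domination-plus-maximality argument, and the ``moreover'' part rests on the same concavity-based exchange (the mean-preserving contraction inequality applied to a successor's Pareto curve). The only difference is packaging --- the paper performs one exchange directly on the two given decompositions and concludes via a two-case disjunction, whereas you phrase it as monotone comparative statics for the canonical maximisers $u^+_z$ and $v^-_z$, which is a slightly cleaner but not substantively different argument.
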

\begin{proof}
  Because $\vec{p}$ is achievable in $s_0$, by characterization of the Pareto curve, there are $\vec{q}$ achievable in $s_1$ and $\vec{r}$ achievable in $s_2$ such that $\vec{p} = \tfunction(s_0,s_1) \cdot \vec{q} + \tfunction(s_0,s_2) \cdot \vec{r}$.
  We assume towards a contradiction that $\vec{q}$ is not from $f_{s_1}$ (i.e. it lies strictly below the Pareto curve); the proof works in the same way for $\vec{r}$ in $s_2$.
  Then there would be some $\vecprime{q} > \vec{q}$ that is achievable in $s_1$.
  By characterization of the Pareto curve $\vecprime{p} = \tfunction(s_0,s_1) \cdot \vecprime{q} + \tfunction(s_0,s_2) \cdot \vec{r}$ is also achievable. 
  However, we have $\vecprime{p} > \vec{p}$ which is a contradiction with the fact that $\vec{p}\in f_{s_0}$.
  This shows the first part of the lemma.

  Now for the second part of the lemma, assume that $\firstdim{\vec{p}} \le \firstdim{\vecprime{p}}$ and $\firstdim{\vec{q}} > \firstdim{\vecprime{q}}$; the proof would work the same way if $\firstdim{\vec{r}} > \firstdim{\vecprime{r}}$.
  Then we must have $\firstdim{\vec{r}} < \firstdim{\vecprime{r}}$ since $\tfunction(s_0,s_1) \cdot \firstdim{\vec{q}} + \tfunction(s_0,s_2) \cdot \firstdim{\vec{r}} = \firstdim{\vec{p}} \le \firstdim{\vecprime{p}} = \tfunction(s_0,s_1) \cdot \firstdim{\vecprime{q}} + \tfunction(s_0,s_2) \cdot \firstdim{\vecprime{r}} $.
  Let us write $m(\vec{q},\vec{r}) = \tfunction(s_0,s_1) \cdot \vec{q} + \tfunction(s_0,s_2) \cdot \vec{r}$ to simplify notation.
  By monotonicity of $m$, $m(\firstdim{\vec{q}},\firstdim{\vecprime{r}}) > m(\firstdim{\vecprime{q}},\firstdim{\vecprime{r}}) = \firstdim{\vecprime{p}} \ge \firstdim{\vec{p}} = m(\firstdim{\vec{q}},\firstdim{\vec{r}})$.
  By continuity of $m$, there is $x' \in [\firstdim{\vec{r}},\firstdim{\vecprime{r}}]$ such that $m(\firstdim{\vec{q}},x') = \firstdim{\vecprime{p}}$ and similarly $x \in [\firstdim{\vec{r}},\firstdim{\vecprime{r}}]$ such that $m(\firstdim{\vecprime{q}},x) = \firstdim{\vec{p}}$.

  Let us show $x - \firstdim{\vec{r}} = \firstdim{\vecprime{r}} - x'$.
  \begin{align*}
    \tfunction(s_0,s_1) \cdot \firstdim{\vecprime{q}} + \tfunction(s_0,s_2) \cdot x & = m(\firstdim{\vec{q}},x') = \firstdim{\vec{p}} = \tfunction(s_0,s_1) \cdot \firstdim{\vec{q}} + \tfunction(s_0,s_2) \cdot \firstdim{\vec{r}}\\
     \tfunction(s_0,s_2) \cdot (x - \firstdim{\vec{r}}) & = \tfunction(s_0,s_1) \cdot (\firstdim{\vec{q}} - \firstdim{\vecprime{q}})
  \end{align*}
  Similarly: 
  \begin{align*}
    \tfunction(s_0,s_1) \cdot \firstdim{\vec{q}} + \tfunction(s_0,s_2) \cdot x' & = \firstdim{\vecprime{p}} 
    = \tfunction(s_0,s_1) \cdot \firstdim{\vecprime{q}} + \tfunction(s_0,s_2) \cdot \firstdim{\vecprime{r}}\\
    \tfunction(s_0,s_2) \cdot (x' - \firstdim{\vecprime{r}}) & = \tfunction(s_0,s_1) \cdot (\firstdim{\vecprime{q}} - \firstdim{\vec{q}})
  \end{align*}
  Let us write $\alpha = \frac{\tfunction(s_0,s_1)}{\tfunction(s_0,s_2)} \cdot (\firstdim{\vec{q}} - \firstdim{\vecprime{q}}) = x - \firstdim{\vec{r}} = \firstdim{\vecprime{r}} - x'$.
  By concavity of $f_{s_2}$:
  \begin{align*} 
    f_{s_2}(x') & \ge f_{s_2}(\firstdim{\vecprime{r}}) + (x' - \firstdim{\vecprime{r}}) \cdot \frac{f_{s_2}(\firstdim{\vec{r}}) - f_{s_2}(\firstdim{\vecprime{r}})}{\firstdim{\vec{r}}-\firstdim{\vecprime{r}}} \\
    f_{s_2}(x') - f_{s_2}(\firstdim{\vecprime{r}}) & \ge - \alpha \cdot \frac{f_{s_2}(\firstdim{\vec{r}}) - f_{s_2}(\firstdim{\vecprime{r}})}{\firstdim{\vec{r}}-\firstdim{\vecprime{r}}} \\
    m\left(f_{s_1}(\firstdim{\vec{q}}) - f_{s_1}(\firstdim{\vecprime{q}}), f_{s_2}(x') - f_{s_2}(\firstdim{\vecprime{r}})\right)
      & \ge m\left(f_{s_1}(\firstdim{\vec{q}}) - f_{s_1}(\firstdim{\vecprime{q}}), - \alpha \cdot \frac{f_{s_2}(\firstdim{\vec{r}}) - f_{s_2}(\firstdim{\vecprime{r}})}{\firstdim{\vec{r}}-\firstdim{\vecprime{r}}}\right)
  \end{align*}
  Similarly:
  \begin{align*} 
    f_{s_2}(x) & \ge f_{s_2}(\firstdim{\vec{r}}) + (x - \firstdim{\vec{r}}) \cdot \frac{f_{s_2}(\firstdim{\vec{r}}) - f_{s_2}(\firstdim{\vecprime{r}})}{\firstdim{\vec{r}}-\firstdim{\vecprime{r}}} \\
    f_{s_2}(x) - f_{s_2}(\firstdim{\vec{r}}) & \ge \alpha \cdot \frac{f_{s_2}(\firstdim{\vec{r}}) - f_{s_2}(\firstdim{\vecprime{r}})}{\firstdim{\vec{r}}-\firstdim{\vecprime{r}}} \\
      m\left(f_{s_1}(\firstdim{\vecprime{q}}) - f_{s_1}(\firstdim{\vec{q}}), f_{s_2}(x) - f_{s_2}(\firstdim{\vec{r}})\right)
      & \ge m\left(f_{s_1}(\firstdim{\vecprime{q}}) - f_{s_1}(\firstdim{\vec{q}}), \alpha \cdot \frac{f_{s_2}(\firstdim{\vec{r}}) - f_{s_2}(\firstdim{\vecprime{r}})}{\firstdim{\vec{r}}-\firstdim{\vecprime{r}}}\right)\\
      & \ge -m\left(f_{s_1}(\firstdim{\vec{q}}) - f_{s_1}(\firstdim{\vecprime{q}}), - \alpha \cdot \frac{f_{s_2}(\firstdim{\vec{r}}) - f_{s_2}(\firstdim{\vecprime{r}})}{\firstdim{\vec{r}}-\firstdim{\vecprime{r}}}\right)
  \end{align*}
  Hence, either $m\left(f_{s_1}(\firstdim{\vec{q}}) - f_{s_1}(\firstdim{\vecprime{q}}), f_{s_2}(x') - f_{s_2}(\firstdim{\vecprime{r}})\right) \ge 0$ or $m(f_{s_1}(\firstdim{\vecprime{q}}) - f_{s_1}(\firstdim{\vec{q}}),$ $f_{s_2}(x) - f_{s_2}(\firstdim{\vec{r}})) \ge 0$.
  In the first case $ m\left(f_{s_1}(\firstdim{\vec{q}}), f_{s_2}(x')\right) \ge m\left( f_{s_1}(\firstdim{\vecprime{q}}), f_{s_2}(\firstdim{\vecprime{r}})\right) = \seconddim{\vecprime{p}}$.
  So we could have chosen $\vec{q}$ and $\vecsup{r}{\prime\prime}=(x',f_{s_2}(x'))$ instead of $\vecprime{q}$ and $\vecprime{r}$ respectively, and all the properties required in the lemma are satisfied:
  $\vec{p}=m(\vec{q},\vec{r})$, $\vecprime{p} = m(\vec{q}, \vecsup{r}{\prime\prime})$, $\firstdim{\vec{q}} \le \firstdim{\vec{q}}$ and $\firstdim{\vec{r}}\le \firstdim{\vecsup{r}{\prime\prime}}$.

  In the second case
  $m\left(f_{s_1}(\firstdim{\vecprime{q}}), f_{s_2}(x) \right) \ge m\left(f_{s_1}(\firstdim{\vec{q}}), f_{s_2}(\firstdim{\vec{r}})\right) = \seconddim{\vec{p}}$.
  So we could have chosen $\vecprime{q}$ and $\vecsup{r}{\prime\prime}=(x,f_{s_2}(x))$ instead of $\vec{q}$ and $\vec{r}$ respectively, and all the properties required in the lemma are satisfied:
  $\vec{p}=m(\vecprime{q},\vecsup{r}{\prime\prime})$, $\vecprime{p} = m(\vecprime{q}, \vecprime{r})$, $\firstdim{\vecprime{q}} \le \firstdim{\vecprime{q}}$ and $\firstdim{\vecsup{r}{\prime\prime}}\le \firstdim{\vecprime{r}}$.
\qed
\end{proof}

\begin{lemma} \label{lem:slope-optimal-stochastic}
  Let $s_0$ be a stochastic state with two successors $s_1$ and $s_2$.
  If $\vec{p}$ is a point of $f_{s_0}$
  and $f_{s_1}$ has a point $\vec{q}$ and $f_{s_2}$ a point $\vec{r}$ such that $\vec{p} = \tfunction(s_0,s_1) \cdot \vec{q} + \tfunction(s_0,s_2) \cdot \vec{r}$, 
  then if $\lslope(s_1,\firstdim{\vec{q}})$ and $\lslope(s_2,\firstdim{\vec{r}})$ are defined then $\lslope(s_0,\firstdim{\vec{p}}) = \min(\lslope(s_1,\firstdim{\vec{q}}), \lslope(s_2,\firstdim{\vec{r}}))$.
\end{lemma}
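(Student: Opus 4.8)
The plan is to prove the two inequalities $\lslope(s_0,\firstdim{\vec p}) \le \min(\lslope(s_1,\firstdim{\vec q}),\lslope(s_2,\firstdim{\vec r}))$ and $\lslope(s_0,\firstdim{\vec p}) \ge \min(\lslope(s_1,\firstdim{\vec q}),\lslope(s_2,\firstdim{\vec r}))$ separately. Throughout I would write $\alpha_i = \tfunction(s_0,s_i)$, so that $\alpha_1+\alpha_2=1$ and, since $\vec q\in f_{s_1}$, $\vec r\in f_{s_2}$, $\vec p\in f_{s_0}$, one has $f_{s_0}(\firstdim{\vec p}) = \alpha_1 f_{s_1}(\firstdim{\vec q}) + \alpha_2 f_{s_2}(\firstdim{\vec r})$. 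For the first inequality I would argue exactly as in the proof of Lem.~\ref{lem:bound-slope-convex-union}: for small $\varepsilon>0$, the characterisation of the Pareto curve in stochastic states (Sec.~\ref{sec:eqns}) applied to the decomposition $\firstdim{\vec p}-\varepsilon = \alpha_1\bigl(\firstdim{\vec q}-\frac{\varepsilon}{\alpha_1}\bigr) + \alpha_2\firstdim{\vec r}$ gives $f_{s_0}(\firstdim{\vec p}-\varepsilon) \ge \alpha_1 f_{s_1}\bigl(\firstdim{\vec q}-\frac{\varepsilon}{\alpha_1}\bigr) + \alpha_2 f_{s_2}(\firstdim{\vec r})$; subtracting from $f_{s_0}(\firstdim{\vec p})$, dividing by $\varepsilon$ and letting $\varepsilon\to 0^+$ yields $\lslope(f_{s_0},\firstdim{\vec p}) \le \lslope(f_{s_1},\firstdim{\vec q})$, and the symmetric computation gives $\le \lslope(f_{s_2},\firstdim{\vec r})$.

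For the second inequality I would set $m = \min(\lslope(s_1,\firstdim{\vec q}),\lslope(s_2,\firstdim{\vec r}))$ and $\vec n = (-m,1)$, and assume towards a contradiction that $\lslope(f_{s_0},\firstdim{\vec p}) < m$. By definition of the left slope there is $x'<\firstdim{\vec p}$ with $\frac{f_{s_0}(\firstdim{\vec p}) - f_{s_0}(x')}{\firstdim{\vec p}-x'} < m$, which rearranges to $(x',f_{s_0}(x'))\cdot\vec n > \vec p\cdot\vec n$. I would then apply Lem.~\ref{lem:optimal-stochastic} to the curve points $(x',f_{s_0}(x'))$ and $\vec p$ (with $x'\le\firstdim{\vec p}$) to obtain aligned decompositions $\vec p = \alpha_1\vecsup{q}{*}+\alpha_2\vecsup{r}{*}$ and $(x',f_{s_0}(x')) = \alpha_1\vecprime{q}+\alpha_2\vecprime{r}$ with $\vecsup{q}{*},\vecprime{q}\in f_{s_1}$, $\vecsup{r}{*},\vecprime{r}\in f_{s_2}$, $\firstdim{\vecprime{q}}\le\firstdim{\vecsup{q}{*}}$ and $\firstdim{\vecprime{r}}\le\firstdim{\vecsup{r}{*}}$. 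From $\alpha_1\vecprime{q}\cdot\vec n + \alpha_2\vecprime{r}\cdot\vec n > \alpha_1\vecsup{q}{*}\cdot\vec n + \alpha_2\vecsup{r}{*}\cdot\vec n$ one of the successors, say $s_1$, satisfies $\vecprime{q}\cdot\vec n > \vecsup{q}{*}\cdot\vec n$, i.e. $f_{s_1}(\firstdim{\vecprime{q}}) - f_{s_1}(\firstdim{\vecsup{q}{*}}) > m\,(\firstdim{\vecprime{q}} - \firstdim{\vecsup{q}{*}})$. The case $\firstdim{\vecprime{q}}=\firstdim{\vecsup{q}{*}}$ is impossible, and if $\firstdim{\vecprime{q}}<\firstdim{\vecsup{q}{*}}$ then $\frac{f_{s_1}(\firstdim{\vecsup{q}{*}}) - f_{s_1}(\firstdim{\vecprime{q}})}{\firstdim{\vecsup{q}{*}} - \firstdim{\vecprime{q}}} < m$, so Lem.~\ref{lem:general-lemma}.\ref{lem:intermediary-slope} supplies $x''\in[\firstdim{\vecprime{q}},\firstdim{\vecsup{q}{*}}]$ with $\lslope(f_{s_1},x'') < m \le \lslope(f_{s_1},\firstdim{\vecsup{q}{*}})$ and $x''\le\firstdim{\vecsup{q}{*}}$, contradicting monotonicity of the slope (Lem.~\ref{lem:general-lemma}.\ref{lem:slope-decrease}); the argument for $s_2$ is symmetric.

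The delicate point, which I expect to be the real work, is that this argument compares slopes at $\vecsup{q}{*}$ and $\vecsup{r}{*}$, whereas $m$ is defined from the prescribed decomposition $(\vec q,\vec r)$, so I need $m \le \lslope(f_{s_1},\firstdim{\vecsup{q}{*}})$. I would close this gap by first showing that $\min(\lslope(s_1,\firstdim{\vec q}),\lslope(s_2,\firstdim{\vec r}))$ is independent of the chosen decomposition of the Pareto point $\vec p$: given two decompositions $(\vec q,\vec r)$ and $(\vecsup{q}{*},\vecsup{r}{*})$ of $\vec p$, feeding the midpoint decomposition $\vec p = \alpha_1\frac{\vec q+\vecsup{q}{*}}{2} + \alpha_2\frac{\vec r+\vecsup{r}{*}}{2}$ into the Pareto characterisation and using concavity forces $f_{s_1}$ to be linear on the segment spanned by $\firstdim{\vec q},\firstdim{\vecsup{q}{*}}$ and $f_{s_2}$ to be linear on the segment spanned by $\firstdim{\vec r},\firstdim{\vecsup{r}{*}}$, both with a common slope $\mu$ (these two chord directions are positive multiples of each other by $\alpha_1(\vecsup{q}{*}-\vec q)=\alpha_2(\vec r-\vecsup{r}{*})$). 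A short case analysis on which endpoint of its common linear piece each of $\vec q,\vec r,\vecsup{q}{*},\vecsup{r}{*}$ occupies, together with Lem.~\ref{lem:general-lemma}.\ref{lem:slope-decrease}, then shows the minimum equals $\mu$ for both decompositions. With $m=\mu\le\lslope(f_{s_1},\firstdim{\vecsup{q}{*}})$ in hand, the contradiction above goes through, completing the proof.
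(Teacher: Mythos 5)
Your proof is correct, and for the substantive direction it takes a genuinely different route from the paper's. The inequality $\lslope(f_{s_0},\firstdim{\vec p}) \le \min(\lslope(f_{s_1},\firstdim{\vec q}),\lslope(f_{s_2},\firstdim{\vec r}))$ is argued exactly as in the paper. For the reverse inequality the paper works directly with chord slopes: it takes $x'<\firstdim{\vec p}$ witnessing a too-small difference quotient, decomposes $(x',f_{s_0}(x'))$ via Lem.~\ref{lem:optimal-stochastic}, bounds each successor's chord $\frac{\seconddim{\vecprime{q}}-\seconddim{\vec q}}{\firstdim{\vecprime{q}}-\firstdim{\vec q}}$ from below by the corresponding left slope using concavity, and concludes by a weighted-mediant computation that the chord of $f_{s_0}$ is at least the minimum of the two left slopes. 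You instead reformulate the too-small difference quotient as a dot-product inequality against $\vec n=(-m,1)$, push it through the decomposition to a single successor, and derive the contradiction from Lem.~\ref{lem:general-lemma}.\ref{lem:intermediary-slope} together with monotonicity of the slope --- essentially the same machinery the paper deploys for \PONE{} states in Lem.~\ref{lem:bound-slope-convex-union} and, later, in the dot-product lemmas of the stochastic-state section. The two arguments are of comparable length; yours has the advantage of reusing a template already present elsewhere, while the paper's mediant computation avoids introducing the normal vector. Notably, the ``delicate point'' you isolate --- that Lem.~\ref{lem:optimal-stochastic} may realign the decomposition of $\vec p$ itself, so that $m$ must be shown independent of the chosen decomposition --- is a genuine issue that the paper's own proof silently elides (it simply asserts $\firstdim{\vecprime{q}}<\firstdim{\vec q}$ and $\firstdim{\vecprime{r}}<\firstdim{\vec r}$ for the \emph{given} $\vec q,\vec r$). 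Your repair via the midpoint decomposition, which forces both successor curves to be affine with a common slope $\mu$ on the segments between the two decompositions and identifies $m=\mu$ for either choice, is sound and in fact strengthens the write-up relative to the paper.
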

\begin{proof}
  We first show that $\lslope(s_0,\firstdim{\vec{p}}) \le \lslope(s_1,\firstdim{\vec{q}})$.
  We have that for all $\varepsilon$: $f_{s_0}(\firstdim{\vec{p}} - \varepsilon) \ge \tfunction(s_0,s_1) \cdot f_{s_1}(\firstdim{\vec{q}} - \frac{\varepsilon}{\tfunction(s_0,s_1)}) + \tfunction(s_0,s_2) \cdot f_{s_2}(\firstdim{\vec{r}})$ because of the characterisation of the Pareto curve and the fact that $\firstdim{\vec{p}} - \varepsilon = \tfunction(s_0,s_1) \cdot (\firstdim{\vec{q}} - \frac{\varepsilon}{\tfunction(s_0,s_1)}) + \tfunction(s_0,s_2) \cdot \firstdim{\vec{r}}$.
  Therefore:
  \begin{align*}
    \lim_{\varepsilon \to 0^+} \frac{f_{s_0}(\firstdim{\vec{p}})- f_{s_0}(\firstdim{\vec{p}} - \varepsilon)}{\varepsilon} 
    & = \lim_{\varepsilon \to 0^+} \frac{\tfunction(s_0,s_1) \cdot f_{s_1}(\firstdim{\vec{q}}) + \tfunction(s_0,s_2) \cdot f_{s_2}(\firstdim{\vec{r}}) - f_{s_0}(\firstdim{\vec{p}} - \varepsilon)}{\varepsilon} \\
    & \le \lim_{\varepsilon \to 0^+} \frac{\tfunction(s_0,s_1) \cdot (f_{s_1}(\firstdim{\vec{q}}) - f_{s_1}(\firstdim{\vec{q}} - \frac{\varepsilon}{\tfunction(s_0,s_1)}))}{\varepsilon}\\
    & \le \lim_{\varepsilon \to 0^+} \frac{ (f_{s_1}(\firstdim{\vec{q}}) - f_{s_1}(\firstdim{\vec{q}} - \frac{\varepsilon}{\tfunction(s_0,s_1)}))}{\frac{\varepsilon}{\tfunction(s_0,s_1)} }\\
    & \le \lim_{\varepsilon \to 0^+} \frac{f_{s_1}(\firstdim{\vec{q}}) - f_{s_1}(\firstdim{\vec{q}} - \varepsilon)}{\varepsilon}
  \end{align*}
  Therefore, $\lslope(s_0,\firstdim{\vec{p}}) \le \lslope(s_1,\firstdim{\vec{q}})$.
  The proof also works for $\vec{r}$ and therefore $\lslope(s_0,\firstdim{\vec{p}}) \le \min(\lslope(s_1,\firstdim{\vec{q}}),\lslope(s_2,\firstdim{\vec{r}}))$.

  We now show that $\lslope(s_0,\firstdim{\vec{p}}) \ge \min(\lslope(s_1,\firstdim{\vec{q}}),  \lslope(s_2,\firstdim{\vec{r}}))$.
  Assume towards a contradiction that $\lslope(s_0,\firstdim{\vec{p}}) < \min(\lslope(s_1,\firstdim{\vec{q}}),  \lslope(s_2,\firstdim{\vec{r}}))$.
  By definition of $\lslope$ there exists $x' < \firstdim{\vec{p}}$ such that:
  \[ \frac{f_{s_0}(x') - f_{s_0}(\firstdim{\vec{p}})}{x' - \firstdim{\vec{p}}} < \min(\lslope(s_1,\firstdim{\vec{q}}),  \lslope(s_2,\firstdim{\vec{r}}))\]
  By Lem.~\ref{lem:optimal-stochastic}, there are $\vecprime{q}\in f_{s_1}$ and $\vecprime{r}\in f_{s_2}$ such that $(x',f_{s_0}(x')) = \tfunction(s_0,s_1) \cdot \vecprime{q} + \tfunction(s_0,s_2) \cdot \vecprime{r}$ and $\firstdim{\vecprime{q}} < \firstdim{\vec{q}}, \firstdim{\vecprime{r}} < \firstdim{\vec{r}}$.
  This gives:
  \begin{align*} 
    \min(\lslope(s_1,\firstdim{\vec{q}}),  \lslope(s_2,\firstdim{\vec{r}})) & > \frac{f_{s_0}(x') - f_{s_0}(\firstdim{\vec{p}})}{x' - \firstdim{\vec{p}}} \\ 
    & > \frac{\tfunction(s_0,s_1) \cdot \seconddim{\vecprime{q}} + \tfunction(s_0,s_2) \cdot \seconddim{\vecprime{r}} - \tfunction(s_0,s_1) \cdot \seconddim{\vec{q}} - \tfunction(s_0,s_2) \cdot \seconddim{\vec{r}}}{\tfunction(s_0,s_1) \cdot \firstdim{\vecprime{q}} + \tfunction(s_0,s_2) \cdot \firstdim{\vecprime{r}} - \tfunction(s_0,s_1) \cdot \firstdim{\vec{q}} - \tfunction(s_0,s_2) \cdot \firstdim{\vec{r}}} \\
    & > \frac{\tfunction(s_0,s_1) \cdot (\seconddim{\vecprime{q}} - \seconddim{\vec{q}}) + \tfunction(s_0,s_2) \cdot (\seconddim{\vecprime{r}} - \seconddim{\vec{r}})}{\tfunction(s_0,s_1) \cdot (\firstdim{\vecprime{q}} - \firstdim{\vec{q}}) + \tfunction(s_0,s_2) \cdot (\firstdim{\vecprime{r}} - \firstdim{\vec{r}})} 
  \end{align*}
  By concavity of the Pareto curves:
  \begin{align*}
    \frac{\seconddim{\vecprime{q}} - \seconddim{\vec{q}}}{\firstdim{\vecprime{q}} - \firstdim{\vec{q}}} & \ge \lslope(s_1,\firstdim{\vec{q}}) \\
    \seconddim{\vecprime{q}} - \seconddim{\vec{q}} & \le \lslope(s_1,\firstdim{\vec{q}}) \cdot (\firstdim{\vecprime{q}} - \firstdim{\vec{q}}) & \text{because $\firstdim{\vecprime{q}} - \firstdim{\vec{q}}$ is negative} \\
    \frac{\seconddim{\vecprime{r}} - \seconddim{\vec{r}}}{\firstdim{\vecprime{r}} - \firstdim{\vec{r}}} & \ge \lslope(s_2,\firstdim{\vec{r}}) \\
    \seconddim{\vecprime{r}} - \seconddim{\vec{r}} & \le \lslope(s_2,\firstdim{\vec{r}}) \cdot (\firstdim{\vecprime{r}} - \firstdim{\vec{r}})
  \end{align*}
  And since $\tfunction(s_0,s_1) \cdot (\firstdim{\vecprime{q}} - \firstdim{\vec{q}}) + \tfunction(s_0,s_2) \cdot (\firstdim{\vecprime{r}} - \firstdim{\vec{r}})$ is negative:
  \begin{align*}
    & \frac{\tfunction(s_0,s_1) \cdot (\seconddim{\vecprime{q}} - \seconddim{\vec{q}}) + \tfunction(s_0,s_2) \cdot (\seconddim{\vecprime{r}} - \seconddim{\vec{r}})}{\tfunction(s_0,s_1) \cdot (\firstdim{\vecprime{q}} - \firstdim{\vec{q}}) + \tfunction(s_0,s_2) \cdot (\firstdim{\vecprime{r}} - \firstdim{\vec{r}})}\\
    &\ge 
    \frac{\tfunction(s_0,s_1) \cdot \lslope(s_1,\firstdim{\vec{q}}) \cdot (\firstdim{\vecprime{q}} - \firstdim{\vec{q}}) + \tfunction(s_0,s_2) \cdot \lslope(s_2,\firstdim{\vec{r}}) \cdot (\seconddim{\vecprime{r}} - \seconddim{\vec{r}})}{\tfunction(s_0,s_1) \cdot (\firstdim{\vecprime{q}} - \firstdim{\vec{q}}) + \tfunction(s_0,s_2) \cdot (\firstdim{\vecprime{r}} - \firstdim{\vec{r}})} \\
    & \ge 
    \frac{\tfunction(s_0,s_1) \cdot \min(\lslope(s_1,\firstdim{\vec{q}}),\lslope(s_2,\firstdim{\vec{r}})) \cdot (\firstdim{\vecprime{q}} - \firstdim{\vec{q}})}{\tfunction(s_0,s_1) \cdot (\firstdim{\vecprime{q}} - \firstdim{\vec{q}}) + \tfunction(s_0,s_2) \cdot (\firstdim{\vecprime{r}} - \firstdim{\vec{r}})} \\ & ~~~ + \frac{\tfunction(s_0,s_2) \cdot \min(\lslope(s_1,\firstdim{\vec{q}}),\lslope(s_2,\firstdim{\vec{r}})) \cdot (\seconddim{\vecprime{r}} - \seconddim{\vec{r}})}{\tfunction(s_0,s_1) \cdot (\firstdim{\vecprime{q}} - \firstdim{\vec{q}}) + \tfunction(s_0,s_2) \cdot (\firstdim{\vecprime{r}} - \firstdim{\vec{r}})} \\
    & \ge  \min(\lslope(s_1,\firstdim{\vec{q}}),\lslope(s_2,\firstdim{\vec{r}}))
  \end{align*}
  This would imply 
  $\min(\lslope(s_1,\firstdim{\vec{q}}),  \lslope(s_2,\firstdim{\vec{r}})) > \min(\lslope(s_1,\firstdim{\vec{q}}),  \lslope(s_2,\firstdim{\vec{r}})) $, hence a contradiction.
  Therefore, $\lslope(s_0,\firstdim{\vec{p}}) \ge \min(\lslope(s_1,\firstdim{\vec{q}}),  \lslope(s_2,\firstdim{\vec{r}}))$, which concludes the proof.
\qed
\end{proof}

\begin{lemma}\label{lem:slope-stochastic}
  Let $s_0$ be a stochastic state with two successors $s_1$ and $s_2$.
  If $\vec{p}$ is an extremal point in ${s_0}$
  then $s_1$ has an extremal point $\vec{q}$ and $s_2$ has an extremal point $\vec{r}$ such that $\vec{p} = \tfunction(s_0,s_1) \cdot \vec{q} + \tfunction(s_0,s_2) \cdot \vec{r}$.
  Moreover $\vec{q}$ and $\vec{r}$ are uniquely defined and if $\lslope(s_1,\firstdim{\vec{q}})$ and $\lslope(s_2,\firstdim{\vec{r}})$ are defined then $\lslope(s_0,\firstdim{\vec{p}}) = \min(\lslope(s_1,\firstdim{\vec{q}}), \lslope(s_2,\firstdim{\vec{r}}))$.
  If only one of the two slopes is defined then $\lslope(s_0,\firstdim{\vec{p}})$ is equal to that slope, and if none of the two is defined then $\lslope(s_0,\firstdim{\vec{p}})$ is undefined.
\end{lemma}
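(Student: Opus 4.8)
The plan is to reduce the statement to results already available. The existence of $\vec q\in f_{s_1}$ and $\vec r\in f_{s_2}$ with $\vec p=\tfunction(s_0,s_1)\cdot\vec q+\tfunction(s_0,s_2)\cdot\vec r$ is the first half of Lem.~\ref{lem:optimal-stochastic}, and, once these points are known to be extremal and unique, the equality $\lslope(f_{s_0},\firstdim{\vec p})=\min(\lslope(f_{s_1},\firstdim{\vec q}),\lslope(f_{s_2},\firstdim{\vec r}))$ whenever both right-hand slopes are defined is exactly Lem.~\ref{lem:slope-optimal-stochastic}. Hence it remains to show (i) $\vec q$ and $\vec r$ are extremal, (ii) they are unique, and (iii) the claims on $\lslope(f_{s_0},\firstdim{\vec p})$ when one or both of $\lslope(f_{s_1},\firstdim{\vec q}),\lslope(f_{s_2},\firstdim{\vec r})$ is undefined. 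I will use freely that each $f_{s_i}$ is continuous and concave on an interval (Lem.~\ref{lem:f-lemma}), that $\lslope(f_{s_i},x)$ is defined exactly at the non-leftmost points of the domain (Lem.~\ref{lem:general-lemma}.\ref{lem:slope-domain}) with the monotonicity and one-sided limit properties of Lem.~\ref{lem:general-lemma}.\ref{lem:slope-decrease} and Lem.~\ref{lem:general-lemma}.\ref{lem:limit-slope}, and the characterisation $X_{s_0}=\dwc(\tfunction(s_0,s_1)\cdot X_{s_1}+\tfunction(s_0,s_2)\cdot X_{s_2})$ of Sec.~\ref{sec:eqns}; the degenerate configurations in which $\vec p$, $\vec q$ or $\vec r$ sits at an end of its domain are automatically extremal there and are handled by the obvious one-sided variants of the arguments below.

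For (i), assume $\firstdim{\vec q}$ is interior to the domain of $f_{s_1}$ and, towards a contradiction, that $f_{s_1}$ is affine with slope $m$ on $[\firstdim{\vec q}-\delta,\firstdim{\vec q}+\delta]$. For every $x$ with $|x-\firstdim{\vec p}|\le\tfunction(s_0,s_1)\cdot\delta$, the vector $\tfunction(s_0,s_1)\cdot(\firstdim{\vec q}+t,f_{s_1}(\firstdim{\vec q}+t))+\tfunction(s_0,s_2)\cdot\vec r$ with $t=(x-\firstdim{\vec p})/\tfunction(s_0,s_1)$ lies in $X_{s_0}$, has first coordinate $x$, and, since $f_{s_1}$ is affine on the relevant interval, has second coordinate $\seconddim{\vec p}+m\cdot(x-\firstdim{\vec p})$; hence $f_{s_0}(x)\ge\seconddim{\vec p}+m\cdot(x-\firstdim{\vec p})$, with equality at $x=\firstdim{\vec p}$. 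Were the inequality strict at some $x_0<\firstdim{\vec p}$, combining $(x_0,f_{s_0}(x_0))$ with a point of $f_{s_0}$ just to the right of $\firstdim{\vec p}$ would, by concavity, force $f_{s_0}(\firstdim{\vec p})>\seconddim{\vec p}$, a contradiction; so $f_{s_0}$ coincides with $x\mapsto\seconddim{\vec p}+m\cdot(x-\firstdim{\vec p})$ on a neighbourhood of $\firstdim{\vec p}$, contradicting that $\vec p$ is extremal. The argument for $\vec r$ is symmetric.

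For (iii), Lem.~\ref{lem:optimal-stochastic} provides, for each $x<\firstdim{\vec p}$ in the domain of $f_{s_0}$, a decomposition of $(x,f_{s_0}(x))$ into points $\vec q_x\in f_{s_1}$, $\vec r_x\in f_{s_2}$ whose first coordinates are at most $\firstdim{\vec q}$ and $\firstdim{\vec r}$ respectively. If $\firstdim{\vec r}$ is the leftmost point of the domain of $f_{s_2}$, then necessarily $\vec r_x=\vec r$ for all such $x$, so $f_{s_0}(x)=\tfunction(s_0,s_1)\cdot f_{s_1}\bigl(\firstdim{\vec q}-(\firstdim{\vec p}-x)/\tfunction(s_0,s_1)\bigr)+\tfunction(s_0,s_2)\cdot\seconddim{\vec r}$; dividing $f_{s_0}(\firstdim{\vec p})-f_{s_0}(x)$ by $\firstdim{\vec p}-x$ and letting $x\to\firstdim{\vec p}^-$ gives $\lslope(f_{s_0},\firstdim{\vec p})=\lslope(f_{s_1},\firstdim{\vec q})$ — after first checking, by exhibiting a common supporting slope of $\vec r$ and a point of $f_{s_1}$ slightly to the left of $\vec q$ (using slope monotonicity), that the domain of $f_{s_0}$ really extends to the left of $\firstdim{\vec p}$ as soon as $\lslope(f_{s_1},\firstdim{\vec q})$ is defined. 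If instead both $\firstdim{\vec q}$ and $\firstdim{\vec r}$ are leftmost points of their domains, then the monotonicity supplied by Lem.~\ref{lem:optimal-stochastic} cannot be met by any $x<\firstdim{\vec p}$, so $\firstdim{\vec p}$ is the leftmost point of the domain of $f_{s_0}$ and $\lslope(f_{s_0},\firstdim{\vec p})$ is undefined.

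For (ii), suppose $\vec p=\tfunction(s_0,s_1)\cdot\vec q+\tfunction(s_0,s_2)\cdot\vec r=\tfunction(s_0,s_1)\cdot\vecprime{q}+\tfunction(s_0,s_2)\cdot\vecprime{r}$ with all four points extremal and, without loss of generality, $\firstdim{\vec q}\le\firstdim{\vecprime{q}}$ (hence $\firstdim{\vec r}\ge\firstdim{\vecprime{r}}$). A trading argument shows that in any such decomposition the two pieces share a common supporting slope: if the slope ranges at $\vec q$ and $\vec r$ were disjoint, say $\lslope(f_{s_1},\firstdim{\vec q})<\rslope(f_{s_2},\firstdim{\vec r})$, then nudging $\vec q$ leftward on $f_{s_1}$ and $\vec r$ rightward on $f_{s_2}$ while holding $\tfunction(s_0,s_1)\cdot\firstdim{\vec q}+\tfunction(s_0,s_2)\cdot\firstdim{\vec r}$ fixed strictly increases $\tfunction(s_0,s_1)\cdot\seconddim{\vec q}+\tfunction(s_0,s_2)\cdot\seconddim{\vec r}$, contradicting $\vec p\in f_{s_0}$ (and this very inequality forbids $\vec q$ from being a left endpoint or $\vec r$ a right endpoint, so the nudges are admissible). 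Write $m$ and $m'$ for the common slopes of the two decompositions; slope monotonicity across $\firstdim{\vec q}\le\firstdim{\vecprime{q}}$ forces $m\ge m'$ while across $\firstdim{\vecprime{r}}\le\firstdim{\vec r}$ it forces $m'\ge m$, so $m=m'$ and, all the intermediate inequalities collapsing to equalities, $f_{s_1}$ is affine with slope $m$ on $[\firstdim{\vec q},\firstdim{\vecprime{q}}]$ and $f_{s_2}$ is affine with slope $m$ on $[\firstdim{\vecprime{r}},\firstdim{\vec r}]$. Sliding simultaneously along these two segments and recombining realises an affine segment of slope $m$ through $\vec p$ inside $X_{s_0}$, whence the concavity argument of (i) forces $f_{s_0}$ to be affine near $\firstdim{\vec p}$, again contradicting extremality of $\vec p$; so $\vec q=\vecprime{q}$ and $\vec r=\vecprime{r}$. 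I expect the trading argument, together with the careful bookkeeping of one-sided limits and of the asymmetry between left and right slopes at the ends of the domains, to be the main obstacle; no ideas beyond concavity and the lemmas above are needed.
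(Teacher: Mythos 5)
Your proposal establishes all four claims and is essentially sound, but for the two structural claims it takes a genuinely different and much heavier route than the paper, whose own argument never touches slopes at all for those parts. For extremality of $\vec{q}$ and $\vec{r}$, the paper simply pushes a convex decomposition forward through the affine map: if $\vec{q}\in\conv(\{\vecsup{q}{1},\vecsup{q}{2}\})$ for achievable $\vecsup{q}{1}\ne\vecsup{q}{2}$, then $\vec{u}\mapsto\tfunction(s_0,s_1)\cdot\vec{u}+\tfunction(s_0,s_2)\cdot\vec{r}$ exhibits $\vec{p}$ as a nontrivial convex combination of achievable points of $s_0$ --- one line, no concavity, no domain endpoints. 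For uniqueness, the paper uses a cross-swap: from $\vec{p}=\tfunction(s_0,s_1)\cdot\vecsup{q}{1}+\tfunction(s_0,s_2)\cdot\vecsup{r}{1}=\tfunction(s_0,s_1)\cdot\vecsup{q}{2}+\tfunction(s_0,s_2)\cdot\vecsup{r}{2}$ it writes $\vec{p}=\tfrac12(\tfunction(s_0,s_1)\cdot\vecsup{q}{1}+\tfunction(s_0,s_2)\cdot\vecsup{r}{2})+\tfrac12(\tfunction(s_0,s_1)\cdot\vecsup{q}{2}+\tfunction(s_0,s_2)\cdot\vecsup{r}{1})$; both cross terms are achievable and they coincide only if $\vecsup{r}{1}=\vecsup{r}{2}$ (hence only if the decompositions agree), so extremality of $\vec{p}$ is contradicted directly. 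Your trading argument reaches the same conclusion but pays for substituting differential information about the concave curves for what is really a statement about convex combinations in the achievable sets: it needs supporting slopes to exist at $\vec{q}$, $\vec{r}$, $\vecprime{q}$, $\vecprime{r}$, which is exactly what forces the endpoint caveats you repeatedly defer (``the inequality forbids $\vec{q}$ from being a left endpoint'' only makes sense after fixing a convention for $\lslope$ at a left endpoint), and your final recombination step is a detour --- once you have the two affine segments, sliding along one of them already writes $\vec{p}$ as a convex combination of two distinct achievable points, so there is no need to argue that $f_{s_0}$ is affine near $\firstdim{\vec{p}}$. The remaining parts of your proof --- reduction to Lem.~\ref{lem:slope-optimal-stochastic} when both slopes exist, and, when $f_{s_2}$ has no points to the left of $\firstdim{\vec{r}}$, using the monotone decompositions of Lem.~\ref{lem:optimal-stochastic} to force $\vec{r}_x=\vec{r}$ so that $f_{s_0}$ is locally a rescaled copy of $f_{s_1}$ --- match the paper's proof essentially step for step, and your explicit check that the domain of $f_{s_0}$ extends to the left of $\firstdim{\vec{p}}$ in that case is a small point the paper leaves implicit.
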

\begin{proof}
  By the fixpoint characterisation of the achievable points, $\vec{p} = \tfunction(s_0,s_1) \cdot \vec{q} + \tfunction(s_0,s_2) \cdot \vec{r}$ for some $\vec{r}$ achievable in $s_2$ and $\vec{q}$ achievable in $s_1$.

  We first show that $\vec{q}$ is extremal in $s_1$ and $\vec{r}$ is extremal in $s_2$.
  Assume towards a contradiction that there are $\vecsup{q}{1}, \vecsup{q}{2}$ achievable from $s_1$ such that $\vec{q} \in \conv(\vecsup{q}{1},\vecsup{q}{2})$, then $\vec{p} \in \conv( \tfunction(s_0,s_1) \cdot \vecsup{q}{1} + \tfunction(s_0,s_2) \cdot \vec{r},  \tfunction(s_0,s_1) \cdot \vecsup{q}{2} + \tfunction(s_0,s_2) \cdot \vec{r})$ and both these points are achievable from $s_0$ because of the characterization of the Pareto curve.
  This contradicts that $\vec{p}$ is an extremal point.
  The proof works similarly for $\vec{r}$, and therefore $\vec{q}$ and $\vec{r}$ are extremal in $f_{s_1}$ and $f_{s_2}$ respectively.

  We now prove that $\vec{q}$ and $\vec{r}$ are uniquely defined. 
  Assume towards a contradiction that there are $(\vecsup{q}{1},\vecsup{r}{1}) \ne (\vecsup{q}{2},\vecsup{r}{2})$ 
  with $\vecsup{q}{1}$ and $\vecsup{q}{2}$ extremal in ${s_1}$ and $\vecsup{r}{1}$ and $\vecsup{r}{2}$ extremal in ${s_2}$ such that $\vec{p} = \tfunction(s_0,s_1) \cdot \vecsup{q}{1} + \tfunction(s_0,s_2) \cdot \vecsup{r}{1}$ and $\vec{p} = \tfunction(s_0,s_1) \cdot \vecsup{q}{2} + \tfunction(s_0,s_2) \cdot \vecsup{r}{2}$.
  Note that $\vecsup{q}{1}\ne \vecsup{q}{2}$ and $\vecsup{r}{1}\ne \vecsup{r}{2}$.
  Then we would have $\vec{p} = \frac{1}{2} \cdot (\tfunction(s_0,s_1) \cdot \vecsup{q}{1} + \tfunction(s_0,s_2) \cdot \vecsup{r}{2}) + \frac{1}{2} \cdot (\tfunction(s_0,s_1) \cdot \vecsup{q}{2} + \tfunction(s_0,s_2) \cdot \vecsup{r}{1})$.
  This means that $\vec{p} \in \conv(\tfunction(s_0,s_1) \cdot \vecsup{q}{1} + \tfunction(s_0,s_2) \cdot \vecsup{r}{2}, \tfunction(s_0,s_1) \cdot \vecsup{q}{2} + \tfunction(s_0,s_2) \cdot \vecsup{r}{1})$, and both these points are achievable, which contradicts the fact that $\vec{p}$ is extremal.
  Therefore, $\vec{q}$ and $\vec{r}$ are uniquely defined.

If both the slopes of $f_1$ and $f_2$ are defined then Lem.~\ref{lem:slope-optimal-stochastic} implies the desired property of the slope.
Assume now that exactly one of them is not defined; without loss of generality we assume $\lslope(f_1)$ is defined and $\lslope(f_2)$ is not.
Then this means that $f_2$ is not defined at the left of $\firstdim{\vec{r}}$.
Let $x' < \firstdim{\vec{p}}$ such that $f_0(x')$ is defined.
By Lem.~\ref{lem:slope-optimal-stochastic}, there are $\vecprime{q},\vecprime{r}$ such that $(x',f_0(x')) = \tfunction(s_0,s_1)\cdot \vecprime{q} + \tfunction(s_0,s_2) \cdot \vecprime{r}$ and $\firstdim{\vecprime{r}} \le \firstdim{\vec{r}}$.
Since $f_2$ is not defined at the left of $\firstdim{\vec{r}}$, we have that $\firstdim{\vec{r}} = \firstdim{\vec{r}}'$ and so $\vecprime{r} = \vec{r}$.
Therefore, at the left of $x$, $f_0(x') = \tfunction(s_0,s_1) \cdot f_1\left(\firstdim{\vec{q}} + \frac{x' - x}{\tfunction(s_0,s_1)}\right) + \tfunction(s_0,s_2) \cdot \vec{r}$.
Thus, $\lslope(f_0,x)= \tfunction(s_0,s_1) \cdot \frac{1}{\tfunction(s_0,s_1)} \cdot \lslope(f_1,\firstdim{\vec{q}}) = \lslope(f_1,\firstdim{\vec{q}})$.

Now in the case where none of the two slopes are defined, this means that $f_1$ and $f_2$ are not defined at the left of $\firstdim{\vec{q}}$ and $\firstdim{\vec{r}}$ respectively.
By Lem.~\ref{lem:slope-optimal-stochastic}, we can deduce that there is no point of $f_0$ at the left of $\vec{p}$, hence $\lslope(f_0,\firstdim{\vec{p}})$ is also not defined.
\qed
\end{proof}

For the proof of Lem.~\ref{lem:slope-stochastic-two} below we will need
several properties of the dot product. These are formalised in the lemmas below.

\begin{lemma}\label{lem:maximise-scalar-implies-optimal}
  Let $\vec{n} \in \mathbb{R}^2$ be a vector with $\firstdim{\vec{n}} > 0$ and $\seconddim{\vec{n}} > 0$.
  If $\vec{p}$ maximises the dot product $\vec{n} \cdot \vec{p}$ among the achievable vectors $\achievable_s$, then $\vec{p}\in f_s$ (i.e. $\vec{p}$ is a maximal point in $\achievable_s$).
\end{lemma}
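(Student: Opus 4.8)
The plan is to prove the contrapositive: if $\vec{p}\in\achievable_s$ is \emph{not} a maximal point of $\achievable_s$, then it cannot maximise $\vec{n}\cdot\vec{p}$ over $\achievable_s$. So assume $\vec{p}$ is achievable but not maximal. By the very definition of a maximal point, there is some $\vec{q}\in\achievable_s$ with $\vec{q}>\vec{p}$, that is, $\vec{q}_k\ge\vec{p}_k$ for both coordinates $k$ and $\vec{q}\ne\vec{p}$; consequently there is a coordinate $i$ with $\vec{q}_i>\vec{p}_i$, while the other coordinate still satisfies $\vec{q}_k\ge\vec{p}_k$.

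The only computation needed is to compare the two dot products. We have $\vec{n}\cdot\vec{q}-\vec{n}\cdot\vec{p}=\sum_k\vec{n}_k(\vec{q}_k-\vec{p}_k)$, a sum of non-negative terms, since each factor $\vec{n}_k$ is positive by hypothesis and each difference $\vec{q}_k-\vec{p}_k$ is non-negative; moreover the term for coordinate $i$ is strictly positive because $\vec{n}_i>0$ and $\vec{q}_i-\vec{p}_i>0$. Hence $\vec{n}\cdot\vec{q}>\vec{n}\cdot\vec{p}$, which contradicts the assumption that $\vec{p}$ maximises $\vec{n}\cdot(\cdot)$ among the achievable vectors. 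Therefore any maximiser $\vec{p}$ of $\vec{n}\cdot\vec{p}$ over $\achievable_s$ is a maximal point, i.e.\ $\vec{p}\in f_s$, as claimed.

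There is no genuine obstacle here: the statement is essentially immediate from the observation that a componentwise strictly positive vector $\vec{n}$ turns coordinate-wise domination into a \emph{strict} increase of $\vec{n}\cdot(\cdot)$. The one point worth flagging is that the hypothesis $\firstdim{\vec{n}}>0$ and $\seconddim{\vec{n}}>0$ is used in full strength: with a merely non-negative $\vec{n}$ one could have $\vec{n}_i=0$ exactly at the coordinate $i$ where $\vec{q}$ strictly dominates, and the conclusion would fail. Note also that the argument never uses two-dimensionality and carries over verbatim to $\achievable_s\subseteq\mathbb{R}^n$ for arbitrary $n$.
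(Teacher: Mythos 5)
Your proof is correct and follows essentially the same route as the paper's: both arguments observe that if some achievable $\vec{q}$ strictly dominates $\vec{p}$ componentwise-weakly with at least one strict coordinate, then the strict positivity of both components of $\vec{n}$ forces $\vec{n}\cdot\vec{q}>\vec{n}\cdot\vec{p}$, contradicting maximality of the dot product. Phrasing it as a contrapositive rather than a proof by contradiction is an immaterial difference.
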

\begin{proof}
  Assume towards a contradiction that there is $\vecprime{p} \ge \vec{p}$ which is achievable.
  Either $\firstdim{\vecprime{p}} > \firstdim{\vec{p}}$ and $\seconddim{\vecprime{p}} \ge \seconddim{\vec{p}}$, or  $\firstdim{\vecprime{p}} \ge \firstdim{\vec{p}}$ and $\seconddim{\vecprime{p}} > \seconddim{\vec{p}}$.
  Then as $\firstdim{\vec{n}} > 0$ and $\seconddim{\vec{n}} > 0$, $\vec{n} \cdot \vecprime{p} > \vec{n} \cdot \vec{p}$.
  This contradicts that $\vec{p}$ maximises the dot product.
\qed
\end{proof}

\begin{lemma}\label{lem:scalar-gives-slope}
  Let $\vec{n} \in \mathbb{R}^2$ be a vector with $\firstdim{\vec{n}} > 0$ and $\seconddim{\vec{n}} > 0$.
  If $\vec{p}$ maximises the dot product $\vec{n} \cdot \vec{p}$ among achievable vectors, then $\lslope(f_s,\firstdim{\vec{p}}) \ge \frac{- \firstdim{\vec{n}}}{\seconddim{\vec{n}}} \ge \rslope(f_s,\firstdim{\vec{p}})$.
\end{lemma}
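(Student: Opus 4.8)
The plan is to combine Lemma~\ref{lem:maximise-scalar-implies-optimal} with an elementary estimate on the one-sided difference quotients of the concave function $f_s$. First I would invoke Lemma~\ref{lem:maximise-scalar-implies-optimal} to conclude $\vec{p}\in f_s$, so that $\seconddim{\vec{p}}=f_s(\firstdim{\vec{p}})$; set $x_0=\firstdim{\vec{p}}$. For any $x$ in the domain of $f_s$, the point $(x,f_s(x))$ lies on the Pareto curve, hence in $\dwc(f_s)$, which equals the closure $\overline{\achievable_s}$ of $\achievable_s$. Since $\vec{q}\mapsto\vec{n}\cdot\vec{q}$ is continuous and $\vec{p}$ maximises it over $\achievable_s$, it also maximises it over $\overline{\achievable_s}$, so $\vec{n}\cdot(x,f_s(x))\le\vec{n}\cdot\vec{p}$. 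Rearranging, this is exactly
\[
\firstdim{\vec{n}}\cdot(x-x_0)+\seconddim{\vec{n}}\cdot(f_s(x)-f_s(x_0))\le 0 \qquad\text{for every }x\text{ in the domain of }f_s,
\]
i.e.\ the line through $\vec{p}$ with slope $-\firstdim{\vec{n}}/\seconddim{\vec{n}}$ lies above the graph of $f_s$ and touches it at $x_0$.

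From this supporting-line inequality the two slope bounds follow by dividing through by $\seconddim{\vec{n}}>0$ and by $x-x_0$, keeping track of the sign of $x-x_0$. For $x<x_0$ we divide by the negative quantity $x-x_0$, which reverses the inequality and gives $\frac{f_s(x_0)-f_s(x)}{x_0-x}\ge-\frac{\firstdim{\vec{n}}}{\seconddim{\vec{n}}}$; letting $x\to x_0^-$ (the limit exists since $f_s$ is concave, so the difference quotient is monotone in $x$) yields $\lslope(f_s,x_0)\ge-\frac{\firstdim{\vec{n}}}{\seconddim{\vec{n}}}$. For $x>x_0$ we divide by the positive quantity $x-x_0$ and obtain $\frac{f_s(x)-f_s(x_0)}{x-x_0}\le-\frac{\firstdim{\vec{n}}}{\seconddim{\vec{n}}}$; letting $x\to x_0^+$ gives $\rslope(f_s,x_0)\le-\frac{\firstdim{\vec{n}}}{\seconddim{\vec{n}}}$. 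Chaining the two inequalities gives $\lslope(f_s,\firstdim{\vec{p}})\ge-\frac{\firstdim{\vec{n}}}{\seconddim{\vec{n}}}\ge\rslope(f_s,\firstdim{\vec{p}})$, as required. If $x_0$ happens to be the left (resp.\ right) endpoint of the domain of $f_s$, the corresponding one-sided slope is undefined and only the other inequality is asserted, consistently with Lemma~\ref{lem:general-lemma}.\ref{lem:slope-domain}.

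I do not expect a genuine obstacle here: the statement is essentially the classical fact that a maximiser of a strictly positive linear functional over a convex set is a contact point whose normal direction is the functional. The one step that deserves to be spelled out is the transfer from ``$\vec{p}$ maximises $\vec{n}\cdot\vec{q}$ over $\achievable_s$'' to ``$\vec{n}\cdot(x,f_s(x))\le\vec{n}\cdot\vec{p}$ for all $x$ in the domain of $f_s$'', which uses that $(x,f_s(x))\in\dwc(f_s)=\overline{\achievable_s}$ together with continuity of the dot product, so that the supremum over $\achievable_s$ and over its closure coincide and equal $\vec{n}\cdot\vec{p}$. Everything else is the routine one-sided limit computation above, valid because the one-sided slopes of a concave function exist throughout the interior of its domain.
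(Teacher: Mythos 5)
Your proposal is correct and is essentially the paper's argument in contrapositive form: the paper assumes $\lslope(f_s,\firstdim{\vec{p}}) < -\firstdim{\vec{n}}/\seconddim{\vec{n}}$, picks a witness $x'$, and performs exactly the same rearrangement of the difference quotient into a dot-product comparison to contradict maximality, whereas you derive the supporting-line inequality directly and pass to the one-sided limits. Your version is, if anything, slightly more complete, since it spells out why points of the graph of $f_s$ are dominated in the dot product and treats the right-slope inequality explicitly, which the paper leaves implicit.
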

\begin{proof}
  Assume towards a contradiction that $\lslope(f_s,\firstdim{\vec{p}}) < \frac{- \firstdim{\vec{n}}}{\seconddim{\vec{n}}}$.
  Then there is $x' < \firstdim{\vec{p}}$ such that:
  \begin{align*}
    \frac{f_s(x') - f_s(\firstdim{\vec{p}})}{x' - \firstdim{\vec{p}}} & < \frac{- \firstdim{\vec{n}}}{\seconddim{\vec{n}}} \\
    \seconddim{\vec{n}} \cdot (f_s(x') - f_s(\firstdim{\vec{p}})) & > - \firstdim{\vec{n}} \cdot (x' - \firstdim{\vec{p}})  \hspace{1cm}\text{because $x' - \firstdim{\vec{p}} < 0$} \\
    \firstdim{\vec{n}} \cdot x' + \seconddim{\vec{n}} \cdot f_s(x') & > \firstdim{\vec{n}} \cdot \firstdim{\vec{p}} + \seconddim{\vec{n}} \cdot f_s(\firstdim{\vec{p}}) 
  \end{align*}
  This contradicts that $\vec{p}$ maximises the dot product.
\qed
\end{proof}

\begin{lemma}\label{lem:optimal-implies-maximise}
  If $\vec{p}\in f_{s}$ then it maximises the dot product with $(-\lslope(f_s,\firstdim{\vec{p}}),1)$ and with $(-\rslope(f_s,\firstdim{\vec{p}}),1)$ among achievable vectors $\achievable_s$.
\end{lemma}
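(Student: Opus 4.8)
The plan is to reduce the statement to a ``supporting line'' inequality for the curve $f_s$ itself and then obtain that inequality from concavity together with the slope-monotonicity facts already established. Fix $\vec{p}\in f_s$ and write $x=\firstdim{\vec{p}}$ and $m=\lslope(f_s,x)$; here we may assume $x$ is not the left endpoint of the domain of $f_s$, since otherwise $\lslope(f_s,x)$ is undefined and the claim about the left slope is vacuous. Since $f_s$ is decreasing (Lem.~\ref{lem:f-lemma}.\ref{lem:f-concave}) we have $m\le 0$, so both coordinates of $\vec{n}=(-m,1)$ are nonnegative. Using Lem.~\ref{lem:f-lemma}, every $\vec{a}\in\achievable_s$ has its first coordinate in the domain of $f_s$ and satisfies $\vec{a}\le(\firstdim{\vec{a}},f_s(\firstdim{\vec{a}}))$, and as $\vec{n}$ has nonnegative coordinates this gives $\vec{n}\cdot\vec{a}\le\vec{n}\cdot(\firstdim{\vec{a}},f_s(\firstdim{\vec{a}}))$. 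Hence it suffices to show $\vec{n}\cdot(x',f_s(x'))\le\vec{n}\cdot(x,f_s(x))$ for every $x'$ in the domain of $f_s$, i.e.\ that $f_s(x')-f_s(x)\le m\,(x'-x)$.

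For this last inequality I would split on the position of $x'$. If $x'=x$ it is trivial. If $x'<x$, concavity makes the difference quotient $\frac{f_s(x)-f_s(x')}{x-x'}$ monotone in $x'$ (this is precisely the monotonicity used to define $\lslope$), so it is at least its limit as $x'\to x^-$, namely $m$; multiplying through by $x-x'>0$ yields $f_s(x)-f_s(x')\ge m\,(x-x')$, which is what we want. If $x'>x$, then by concavity and Lem.~\ref{lem:general-lemma}.\ref{lem:slope-decrease} we have $\frac{f_s(x')-f_s(x)}{x'-x}\le\rslope(f_s,x)\le\lslope(f_s,x)=m$, and multiplying by $x'-x>0$ gives the inequality again.

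The right-slope case is entirely symmetric. Put $m'=\rslope(f_s,x)$ (assuming $x$ is not the right endpoint of the domain, else that part is vacuous) and $\vec{n}'=(-m',1)$; the reduction above applies verbatim, so it remains to prove $f_s(x')-f_s(x)\le m'\,(x'-x)$. For $x'>x$ this is immediate from concavity, since $\frac{f_s(x')-f_s(x)}{x'-x}\le\rslope(f_s,x)=m'$; for $x'<x$ it follows from $\frac{f_s(x)-f_s(x')}{x-x'}\ge\lslope(f_s,x)\ge\rslope(f_s,x)=m'$, again by Lem.~\ref{lem:general-lemma}.\ref{lem:slope-decrease}.

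I do not expect a genuine obstacle in this proof. The only points requiring a little care are the reduction from an arbitrary achievable vector to a point of the curve --- which relies on the non-negativity of the coordinates of $\vec{n}$ and on the fact, extracted from Lem.~\ref{lem:f-lemma}, that an achievable vector lies weakly below the graph of $f_s$ over its own first coordinate --- and the bookkeeping of the degenerate cases in which one of the one-sided slopes is undefined because $\vec{p}$ lies at an endpoint of the domain of $f_s$.
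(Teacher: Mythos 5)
Your proof is correct and follows essentially the same route as the paper's: both arguments reduce the claim to the supporting-line inequality $f_s(x')-f_s(x)\le m\,(x'-x)$ at $\vec{p}$ and then expand the dot product. The only difference is that you spell out details the paper elides (the reduction from arbitrary achievable vectors to points on the curve, and the case split proving the supporting-line inequality from concavity), which is harmless.
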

\begin{proof}
  Let $\vecprime{p}$ be a point of $f_s$.
  Since $f_s$ is concave, $\vecprime{p}$ is below the line $\{ (\firstdim{\vec{p}},f_s(\firstdim{\vec{p}})) + t \cdot (1,\lslope(f_s,\firstdim{\vec{p}})) \mid t \in \mathbb{R} \}$.
  So $\seconddim{\vecprime{p}} \le f_s(\firstdim{\vec{p}}) + (\firstdim{\vecprime{p}} - \firstdim{\vec{p}}) \cdot \lslope(f_s,\firstdim{\vec{p}})$.
  Then, looking at the dot product:
  \begin{align*}
    (-\lslope(f_s,\firstdim{\vec{p}}),1) \cdot \vecprime{p} &= \seconddim{\vecprime{p}} - \firstdim{\vecprime{p}} \cdot \lslope(f_s,\firstdim{\vec{p}}) \\
    & \le f_s(\firstdim{\vec{p}}) + (\firstdim{\vecprime{p}} - \firstdim{\vec{p}}) \cdot \lslope(f_s,\firstdim{\vec{p}})  - \firstdim{\vecprime{p}} \cdot \lslope(f_s,\firstdim{\vec{p}}) \\
    & \le f_s(\firstdim{\vec{p}}) - \firstdim{\vec{p}} \cdot  \lslope(f_s,\firstdim{\vec{p}}) \\
    & \le f_s(\firstdim{\vec{p}}) - \firstdim{\vec{p}} \cdot  \lslope(f_s,\firstdim{\vec{p}}) \\
    & \le (-\lslope(f_s,\firstdim{\vec{p}}) , 1) \cdot (\firstdim{\vec{p}},f_s(\firstdim{\vec{p}}))\\
    & \le (-\lslope(f_s,\firstdim{\vec{p}}) , 1) \cdot p
  \end{align*}

  The proof works similarly for $(-\rslope(f_s,\firstdim{\vec{p}}),1)$.
\qed
\end{proof}

\begin{lemma}\label{lem:dot-product-sum}
  Let $\vec{n}$ be a vector in $\mathbb{R}^2$, $Y, Z \subseteq \mathbb{R}^2$, $\lambda_1, \lambda_1 \in \mathbb{R}_{> 0}$, and $X = \lambda_1 Y + \lambda_2 Z$.
  Let $(\vec{y},\vec{z}) \in Y\times Z$ and $\vec{x} = \lambda_1 \vec{y} + \lambda_2 \vec{z}$.
  Then $\vec{y}$ and $\vec{z}$ maximize the dot product with $\vec{n}$ among vectors of $Y$ and $Z$ respectively, if and only if, $\vec{x}$ maximizes the dot product with $\vec{n}$ among vectors of $X$.
\end{lemma}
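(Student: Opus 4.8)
The plan is to prove the two implications separately, each being an immediate consequence of the bilinearity of the dot product, namely $\vec{n}\cdot(\lambda_1 \vec{y} + \lambda_2 \vec{z}) = \lambda_1\,(\vec{n}\cdot\vec{y}) + \lambda_2\,(\vec{n}\cdot\vec{z})$, together with the hypothesis $\lambda_1,\lambda_2 > 0$.

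For the forward direction, I would assume $\vec{y}$ maximises $\vec{n}\cdot\vec{y}'$ over $\vec{y}'\in Y$ and $\vec{z}$ maximises $\vec{n}\cdot\vec{z}'$ over $\vec{z}'\in Z$. An arbitrary element of $X$ has the form $\vec{x}' = \lambda_1 \vec{y}' + \lambda_2 \vec{z}'$ with $\vec{y}'\in Y$ and $\vec{z}'\in Z$, and then
\[
\vec{n}\cdot\vec{x}' = \lambda_1\,(\vec{n}\cdot\vec{y}') + \lambda_2\,(\vec{n}\cdot\vec{z}') \le \lambda_1\,(\vec{n}\cdot\vec{y}) + \lambda_2\,(\vec{n}\cdot\vec{z}) = \vec{n}\cdot\vec{x},
\]
where the inequality uses $\lambda_1,\lambda_2 > 0$. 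Hence $\vec{x}$ maximises the dot product with $\vec{n}$ over $X$.

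For the converse I would argue by contraposition. Suppose $\vec{x} = \lambda_1 \vec{y} + \lambda_2 \vec{z}$ but, say, $\vec{y}$ is not a maximiser in $Y$, so there is $\vec{y}'\in Y$ with $\vec{n}\cdot\vec{y}' > \vec{n}\cdot\vec{y}$. Then $\vec{x}'' := \lambda_1 \vec{y}' + \lambda_2 \vec{z}\in X$ and, again using $\lambda_1 > 0$,
\[
\vec{n}\cdot\vec{x}'' = \lambda_1\,(\vec{n}\cdot\vec{y}') + \lambda_2\,(\vec{n}\cdot\vec{z}) > \lambda_1\,(\vec{n}\cdot\vec{y}) + \lambda_2\,(\vec{n}\cdot\vec{z}) = \vec{n}\cdot\vec{x},
\]
so $\vec{x}$ is not a maximiser in $X$. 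The case in which $\vec{z}$ fails to be a maximiser in $Z$ is symmetric. This establishes the ``if'' direction and completes the proof.

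I do not expect any genuine obstacle here: the argument is routine bilinear algebra, and the only subtlety worth flagging is that strict positivity of $\lambda_1$ and $\lambda_2$ is exactly what is needed to transfer (strict) inequalities between $X$ and $Y\times Z$ — without it the statement fails, since a vanishing coefficient would make one of the factor sets irrelevant.
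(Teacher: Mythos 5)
Your proof is correct and matches the paper's own argument essentially step for step: the forward direction by bilinearity of the dot product with $\lambda_1,\lambda_2>0$, and the converse by exhibiting a better element $\lambda_1\vec{y}'+\lambda_2\vec{z}$ of $X$ when one factor fails to be a maximiser (the paper phrases this as a contradiction rather than a contraposition, which is immaterial). Nothing to add.
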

\begin{proof}
  Assume that $\vec{y}$ and $\vec{z}$ maximise the dot product with $\vec{n}$ and let $\vec{a} \in \lambda_1  Y + \lambda_2 Z$.
  We have that $\vec{a} \cdot \vec{n} = \lambda_1 \vec{b} \cdot \vec{n} + \lambda_2 \vec{c} \cdot \vec{n}$ for some $(\vec{b},\vec{c}) \in Y \times Z$.
  This is smaller than $\lambda_1  \vec{y} \cdot \vec{n} + \lambda_2 \vec{z} \cdot \vec{n}$, since they maximize the dot product with $\vec{n}$ within their respective sets.
  It is therefore smaller than $\vec{x} \cdot \vec{n}$.

  Reciprocally, assume that $\vec{x}$ maximises the dot product with $\vec{n}$.
  If there is $\vec{u} \in Y$ such that $\vec{u} \cdot \vec{n} > \vec{y} \cdot \vec{n}$ then $\lambda_1 \vec{u} + \lambda_2 \vec{z}$ has a greater dot product with $\vec{n}$ than $\vec{x}$, which is a contradiction.
\qed
\end{proof}

\begin{lemma}\label{lem:slope-stochastic-two}
  Let $s_0$ be a stochastic state with two successors $s_1$ and $s_2$.
  Let $\vec{p}\in f_{s_0}$, $\vec{q} \in f_{s_1}$, and $\vec{r}\in f_{s_2}$ such that $\vec{p} = \tfunction(s_0,s_1) \cdot \vec{q} + \tfunction(s_0,s_2) \cdot \vec{r}$.
  For all $\varepsilon > 0$, 
  there are $\varepsilon_1, \varepsilon_2$ such that $\rslope(f_{s_0},\firstdim{\vec{p}} - \varepsilon) \ge \rslope(f_{s_1},\firstdim{\vec{q}} - \varepsilon_1)$, $\rslope(f_{s_0},\firstdim{\vec{p}} - \varepsilon) \ge \rslope(f_{s_2},\firstdim{\vec{r}} - \varepsilon_2)$, and $\varepsilon = \tfunction(s_0,s_1) \cdot \varepsilon_1 + \tfunction(s_0,s_2) \cdot \varepsilon_2$.
\end{lemma}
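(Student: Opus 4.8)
The plan is to reduce the statement to the first part of Lemma~\ref{lem:optimal-stochastic} followed by a short concavity computation; the dot-product lemmas above are not needed here. Fix $\varepsilon>0$; we may assume $\firstdim{\vec{p}}-\varepsilon$ lies in the domain of $f_{s_0}$, since otherwise $\rslope(f_{s_0},\firstdim{\vec{p}}-\varepsilon)$ is undefined and there is nothing to prove (and, as the domain of $f_{s_0}$ is an interval by Lemma~\ref{lem:f-lemma}.\ref{lem:f-defined}, in this case the whole segment $[\firstdim{\vec{p}}-\varepsilon,\firstdim{\vec{p}}]$ lies in it). Put $\vec{p}':=(\firstdim{\vec{p}}-\varepsilon,\,f_{s_0}(\firstdim{\vec{p}}-\varepsilon))$, which is a point of $f_{s_0}$. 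By the first part of Lemma~\ref{lem:optimal-stochastic} there are $\vec{q}'\in f_{s_1}$ and $\vec{r}'\in f_{s_2}$ with $\vec{p}'=\tfunction(s_0,s_1)\cdot\vec{q}'+\tfunction(s_0,s_2)\cdot\vec{r}'$; set $\varepsilon_1:=\firstdim{\vec{q}}-\firstdim{\vec{q}'}$ and $\varepsilon_2:=\firstdim{\vec{r}}-\firstdim{\vec{r}'}$. Taking first coordinates in the decompositions of $\vec{p}$ and of $\vec{p}'$ and subtracting gives $\tfunction(s_0,s_1)\varepsilon_1+\tfunction(s_0,s_2)\varepsilon_2=\firstdim{\vec{p}}-\firstdim{\vec{p}'}=\varepsilon$, which establishes the third required property.

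It then remains to prove $\rslope(f_{s_0},\firstdim{\vec{p}'})\ge\rslope(f_{s_1},\firstdim{\vec{q}'})$ (the inequality for $s_2$ being symmetric), since $\firstdim{\vec{p}'}=\firstdim{\vec{p}}-\varepsilon$ and $\firstdim{\vec{q}'}=\firstdim{\vec{q}}-\varepsilon_1$. If $\firstdim{\vec{q}'}$ is the right endpoint of the domain of $f_{s_1}$ the right slope there is undefined and the inequality is vacuous, so assume otherwise. Then for all sufficiently small $\delta>0$ the value $f_{s_1}(\firstdim{\vec{q}'}+\tfrac{\delta}{\tfunction(s_0,s_1)})$ is defined, and the point
\[
\tfunction(s_0,s_1)\cdot\bigl(\firstdim{\vec{q}'}+\tfrac{\delta}{\tfunction(s_0,s_1)},\,f_{s_1}(\firstdim{\vec{q}'}+\tfrac{\delta}{\tfunction(s_0,s_1)})\bigr)+\tfunction(s_0,s_2)\cdot\bigl(\firstdim{\vec{r}'},\,f_{s_2}(\firstdim{\vec{r}'})\bigr)
\]
belongs to $\achievable_{s_0}$ by the equations of Sec.~\ref{sec:eqns}; since its first coordinate equals $\firstdim{\vec{p}'}+\delta$, we get $f_{s_0}(\firstdim{\vec{p}'}+\delta)\ge\tfunction(s_0,s_1)\,f_{s_1}(\firstdim{\vec{q}'}+\tfrac{\delta}{\tfunction(s_0,s_1)})+\tfunction(s_0,s_2)\,f_{s_2}(\firstdim{\vec{r}'})$. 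Subtracting the identity $f_{s_0}(\firstdim{\vec{p}'})=\tfunction(s_0,s_1)\,f_{s_1}(\firstdim{\vec{q}'})+\tfunction(s_0,s_2)\,f_{s_2}(\firstdim{\vec{r}'})$ (valid because $\vec{p}'\in f_{s_0}$, $\vec{q}'\in f_{s_1}$, $\vec{r}'\in f_{s_2}$), dividing by $\delta$, and letting $\delta\to0^+$: the left-hand side tends to $\rslope(f_{s_0},\firstdim{\vec{p}'})$ while the right-hand side, after the substitution $\delta'=\delta/\tfunction(s_0,s_1)$, tends to $\rslope(f_{s_1},\firstdim{\vec{q}'})$, so $\rslope(f_{s_0},\firstdim{\vec{p}'})\ge\rslope(f_{s_1},\firstdim{\vec{q}'})$.

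I expect no serious obstacle here: the argument is essentially routine once Lemma~\ref{lem:optimal-stochastic} and concavity (Lemma~\ref{lem:f-lemma}.\ref{lem:f-concave}) are available. The only points requiring care are the bookkeeping of domains, handled by the vacuity remarks above, and the observation that the decomposition $(\vec{q},\vec{r})$ of $\vec{p}$ handed to us need not coincide with a decomposition produced by Lemma~\ref{lem:optimal-stochastic} --- this is harmless, since $\varepsilon_1,\varepsilon_2$ are defined relative to the given $\vec{q},\vec{r}$ and the weighted-sum identity only uses that $(\vec{q},\vec{r})$ decomposes $\vec{p}$. Note that the $\varepsilon_i$ obtained this way may be negative, which is permitted by the statement; when one additionally wants $\varepsilon_1,\varepsilon_2\ge0$, as is convenient for Lemma~\ref{lem:accumulation-stochastic-two}, one would instead appeal to the monotonicity in the ``moreover'' part of Lemma~\ref{lem:optimal-stochastic} to choose $\vec{q}',\vec{r}'$ with $\firstdim{\vec{q}'}\le\firstdim{\vec{q}}$ and $\firstdim{\vec{r}'}\le\firstdim{\vec{r}}$, and the same concavity computation applies verbatim.
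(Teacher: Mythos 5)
Your proof is correct, and it shares the paper's first step --- decomposing $(\firstdim{\vec{p}}-\varepsilon,\,f_{s_0}(\firstdim{\vec{p}}-\varepsilon))$ via Lemma~\ref{lem:optimal-stochastic} and reading off $\varepsilon_1,\varepsilon_2$ from the first coordinates --- but it establishes the slope inequality by a genuinely different route. The paper's proof goes through the dot-product machinery: by Lemma~\ref{lem:optimal-implies-maximise} the point $\vecprime{p}$ maximises the dot product with $(-\rslope(f_{s_0},\firstdim{\vec{p}}-\varepsilon),1)$ over $\achievable_{s_0}$, by Lemma~\ref{lem:dot-product-sum} the components $\vecprime{q},\vecprime{r}$ then maximise the same dot product over $\achievable_{s_1},\achievable_{s_2}$, and Lemma~\ref{lem:scalar-gives-slope} converts this back into the sandwich $\lslope(f_{s_i},\cdot)\ge\rslope(f_{s_0},\firstdim{\vec{p}}-\varepsilon)\ge\rslope(f_{s_i},\cdot)$. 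You instead use only the superadditivity inequality $f_{s_0}(\firstdim{\vecprime{p}}+\delta)\ge\tfunction(s_0,s_1)\,f_{s_1}(\firstdim{\vecprime{q}}+\delta/\tfunction(s_0,s_1))+\tfunction(s_0,s_2)\,f_{s_2}(\firstdim{\vecprime{r}})$, which follows from the fixpoint equations, together with the exact decomposition at $\delta=0$, and pass to the one-sided limit; this is the same elementary device the paper uses for the \emph{left}-slope upper bound inside Lemma~\ref{lem:slope-optimal-stochastic}, here applied on the right. Your route is more self-contained (it bypasses Lemmas~\ref{lem:maximise-scalar-implies-optimal}--\ref{lem:dot-product-sum} entirely), while the paper's supporting-hyperplane argument additionally yields that $\vecprime{q}$ and $\vecprime{r}$ maximise the same linear functional, a fact it reuses later in the proof of Lemma~\ref{lem:accumulation-stochastic-two}; both give the statement as claimed. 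Your closing remarks on the sign of $\varepsilon_1,\varepsilon_2$ and on recovering nonnegativity from the monotone choice in Lemma~\ref{lem:optimal-stochastic} are apt, since the version of the claim used in Lemma~\ref{lem:accumulation-stochastic-two} does require $\varepsilon_1,\varepsilon_2\ge 0$.
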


\begin{proof}
  By Lem.~\ref{lem:optimal-stochastic}, there are $\vecprime{q}$ and $\vecprime{r}$ such that $(\firstdim{\vec{p}} -\varepsilon, f_{s_0}(\firstdim{\vec{p}} - \varepsilon)) = \tfunction(s_0,s_1) \cdot \vecprime{q} + \tfunction(s_0,s_2) \cdot \vecprime{r}$ and
  $\lslope(f_{s_0},\firstdim{\vec{p}} - \varepsilon) = \min \{ \lslope(f_{s_1},\firstdim{\vecprime{q}}) , \lslope(f_{s_2}, \firstdim{\vecprime{r}}) \}$.
  We let $\varepsilon_1 = \firstdim{\vec{r}} - \firstdim{\vecprime{r}}$ and $\varepsilon_2 = \firstdim{\vec{q}} - \firstdim{\vecprime{q}}$.


  Let $n = (-\rslope(f_{s_0},\firstdim{\vec{p}} - \varepsilon), 1)$ be a vector that follows the normal to the slope in $\firstdim{\vec{p}} - \varepsilon$.
  By Lem.~\ref{lem:optimal-implies-maximise}, $\vecprime{p} = (\firstdim{\vec{p}} - \varepsilon, f_{s_0}(\firstdim{\vec{p}} - \varepsilon))$ maximises the dot product with $n$ on the curve of $f_{s_0}$.
  By Lem.~\ref{lem:dot-product-sum}, it is also the case of $\vecprime{q}$ and $\vecprime{r}$ on the curve of $f_{s_1}$ and $f_{s_2}$ respectively.
  Therefore, by Lem.~\ref{lem:scalar-gives-slope}, $\lslope(f_{s_1},\firstdim{\vec{q}} - \varepsilon_1) \ge \rslope(f_{s_0},\firstdim{\vec{p}} - \varepsilon) \ge \rslope(f_{s_1},\firstdim{\vec{q}} - \varepsilon_1)$ and similarly $\rslope(f_{s_0},\firstdim{\vec{p}} - \varepsilon) \ge \rslope(f_{s_2},\firstdim{\vec{r}} - \varepsilon_2)$.
\qed
\end{proof}

Intuitively the next lemma says that for a stochastic state $s_0$ with successors $s_1$ and $s_2$, if $s_1$ has no left accumulation point, then the slopes decrease faster in $s_2$ than in $s_0$.

\begin{reflemma}{lem:accumulation-stochastic-two}
  Let $s_0$ be a stochastic state with two successors $s_1$ and $s_2$, and $\vec{p}$ a left accumulation point of $f_{s_0}$.
  There are points $\vec{q}$ and $\vec{r}$ on $f_{s_1}$ and $f_{s_2}$ respectively such that $\vec{p} = \tfunction(s_0,s_1) \cdot \vec{q} + \tfunction(s_0,s_2) \cdot \vec{r}$.
  Moreover:
  \begin{enumerate}
  \item there is $(s',\vecprime{p}) \in \{ (s_1,\vec{q}), (s_2,\vec{r}) \}$ such that $\vecprime{p}$ is a left accumulation point of $f_{s'}$ and
    $\lslope(f_{s_0},\firstdim{\vec{p}}) = \lslope(f_{s'},\firstdim{\vecprime{p}})$;
  \item there is $\eta(s_0,\vec{p}) > 0$ such that for all $\varepsilon \in [0,\eta(s_0,\vec{p}\ ) \openend$:
    \begin{itemize}
      \item there are $\varepsilon_1, \varepsilon_2$ such that $\rslope(f_{s_0},\firstdim{\vec{p}} - \varepsilon) \ge \rslope(f_{s_1},\firstdim{\vec{q}} - \varepsilon_1)$, $\rslope(f_{s_0},\firstdim{\vec{p}} - \varepsilon) \ge \rslope(f_{s_2},\firstdim{\vec{r}} - \varepsilon_2)$, and $\varepsilon = \tfunction(s_0,s_1) \cdot \varepsilon_1 + \tfunction(s_0,s_2) \cdot \varepsilon_2$.
      \item if $\vec{r}$ is not a left accumulation point in $f_{s_2}$, or $\lslope(f_{s_0},\firstdim{\vec{p}}) \ne \lslope(f_{s_2},\firstdim{\vec{r}})$, then 
        $f_{s_0}(\firstdim{\vec{p}} - \varepsilon) = \tfunction(s_0,s_1)\cdot f_{s_1}\left(\frac{\firstdim{\vec{p}} - \varepsilon - \tfunction(s_0,s_2) \cdot \firstdim{\vec{r}}}{\tfunction(s_0,s_1)}\right) + \tfunction(s_0,s_2) \cdot \seconddim{\vec{r}}$.
      \item symmetrically, if $\vec{q}$ is not a left accumulation point in $f_{s_1}$, or $\lslope(f_{s_0},\firstdim{\vec{p}}) \ne \lslope(f_{s_1},\firstdim{\vec{q}})$, then 
        $f_{s_0}(\firstdim{\vec{p}} - \varepsilon) = \tfunction(s_0,s_1)\cdot \seconddim{\vec{q}} + \tfunction(s_0,s_2) \cdot f_{s_1}\left(\frac{\firstdim{\vec{p}} - \varepsilon - \tfunction(s_0,s_1) \cdot \firstdim{\vec{q}}}{\tfunction(s_0,s_2)}\right)$.
    \end{itemize}
  \end{enumerate}
\end{reflemma}
\begin{proof}
  Let $(\vecsup{p}{i})_{i \in \mathbb{N}}$ be a sequence of extremal points in ${s_0}$ with increasing first coordinate which converges towards $\vec{p}$.
  By Lem.~\ref{lem:slope-stochastic}, there are $\vecsup{q}{i}$ and $\vecsup{r}{i}$  extremal in $f_{s_1}$ and $f_{s_2}$ respectively, such that $\vecsup{p}{i} = \tfunction(s_0,s_1) \cdot \vecsup{q}{i} + \tfunction(s_0,s_2) \cdot \vecsup{r}{i}$.
  Lem.~\ref{lem:optimal-stochastic} tells use that for a particular index $i$, we could choose $\vecsup{q}{i}$, $\vecsup{r}{i}$, $\vecsup{q}{i+1}$ and $\vecsup{r}{i+1}$ such that $\firstdim{\vecsup{q}{i}} \le \firstdim{\vecsup{q}{i+1}}$ and $\firstdim{\vecsup{r}{i}} \le \firstdim{\vecsup{r}{i+1}}$.
  But since Lem.~\ref{lem:slope-stochastic} shows that $\vecsup{q}{i}$ and $\vecsup{r}{i}$ are uniquely defined, the sequence indeed satisfies the fact that $x$-coordinates are increasing.
  The sequences $\vecsup{q}{i}$ and $\vecsup{r}{i}$ converge because their first coordinate are increasing and bounded.
  The limits $\vec{q}$ and $\vec{r}$ are such that $\vec{p} = \tfunction(s_0,s_1) \cdot \vec{q} + \tfunction(s_0,s_2) \cdot \vec{r}$.

  \point{\ref{lem:accumulation-stochastic-two-lap}}
  Since $\vecsup{p}{i}$ contains an infinite number of different points, by Lem.~\ref{lem:general-lemma}.\ref{lem:extremal-slopes}, there should also be an infinite number of different slopes (no more than two extremal points can have the same slope).
  By Lem.~\ref{lem:slope-stochastic}, we have that for all index~$i$, $\lslope(s_0,\vecsup{p}{i}) = \min(\lslope(s_1,\vecsup{q}{i}), \lslope(s_2,\vecsup{r}{i}))$.
  This means one of $\vecsup{q}{i}$ and $\vecsup{r}{i}$ gives an infinite number of slopes, and therefore also an infinite number of points.
  Let say that it is $\vecsup{q}{i}$.
  Since the points of the sequence lie on the Pareto curve and converge to $\vec{q}$, $\vec{q}$ is a left accumulation point.
  Moreover by Lem.~\ref{lem:general-lemma}.\ref{lem:limit-slope}, $\lslope(f_{s_0},\vec{p}) = \lslope(f_{s_1},\vec{q})$.
  Similarly, if the $\vecsup{r}{i}$ contains infinitely many different points, then $\vec{r}$ is a left accumulation point and $\lslope(f_{s_0},\vec{p}) = \lslope(f_{s_2},\vec{r})$.

  \point{\ref{lem:accumulation-stochastic-two-eps}}
  Note that $\lslope(f_{s_0},\cdot)$ is defined at a neighbourhood on the left of $\firstdim{\vec{p}}$ because it is a left accumulation point.
  We let $\eta(s_0,\vec{p})$ be such that $\firstdim{\vec{p}} - \eta(s_0,\vec{p})$ is included in that neighbourhood.
  By Lem.~\ref{lem:slope-stochastic-two}, we have that for all $\varepsilon \in [0, \eta(s_0,\vec{p})\openend$, there are $\varepsilon_1, \varepsilon_2$ such that $\rslope(f_{s_0},\firstdim{\vec{p}} - \varepsilon) \ge \rslope(f_{s_1},\firstdim{\vec{q}} - \varepsilon_1)$, $\rslope(f_{s_0},\firstdim{\vec{p}} - \varepsilon) \ge \rslope(f_{s_2},\firstdim{\vec{r}} - \varepsilon_2)$, and $\varepsilon = \tfunction(s_0,s_1) \cdot \varepsilon_1 + \tfunction(s_0,s_2) \cdot \varepsilon_2$.


    We now look at the cases where $\vec{r}$ is not a left accumulation point of $s_2$ or $\lslope(f_{s_0},\firstdim{\vec{p}}) \ne \lslope(f_{s_2},\firstdim{\vec{r}})$.
    In the first case, it is clear that the sequence of $\vecsup{r}{i}$ does not contain infinitely many different points.
    In the second case, by Lem.~\ref{lem:slope-stochastic}, $\lslope(f_{s_0},\firstdim{\vec{p}}) < \lslope(f_{s_2},\firstdim{\vec{r}})$.
    By Lem.~\ref{lem:general-lemma}.\ref{lem:limit-slope}, the slope in $\vec{p}$ is the limit of the slopes in the points $\vecsup{p}{i}$,
    So there is an index $j$ after which $\lslope(f_{s_0},\firstdim{\vecsup{p}{i}}) < \lslope(f_{s_2},\firstdim{\vec{r}})$ for $i \ge j$.
    By Lem.~\ref{lem:optimal-implies-maximise}, $\vec{r}$ maximises the dot product with $(-\lslope(f_{s_2},\firstdim{\vec{r}}),1)$ on the curve of $f_{s_2}$.
    By the same lemma, $\vec{p}$ maximises the dot product with $(-\lslope(f_{s_0},\firstdim{\vec{p}}),1)$ on the curve of $f_{s_0}$.
    Since $\vec{p} = \tfunction(s_0,s_1) \cdot \vec{q} + \tfunction(s_0,s_2) \cdot \vec{r}$, Lem.~\ref{lem:dot-product-sum} implies that $\vec{r}$ also maximises the dot product with $(-\lslope(f_{s_0},\firstdim{\vec{p}}),1)$ on the curve of $f_{s_2}$.
    The point $\vec{r}$ therefore maximises the dot product with all vectors $(\alpha,1)$ with $\alpha \in [-\lslope(f_{s_0},\firstdim{\vec{p}}) , -\lslope(f_{s_0},\firstdim{\vec{p}})]$.
    This is in particular the case for $(-\lslope(f_{s_0},\firstdim{\vecsup{p}{i}}),1)$ where $i \ge j$.
    By Lem.~\ref{lem:dot-product-sum}, this implies that $\vecsup{r}{i} = \vec{r}$.
    Hence, in both cases, we can extract a subsequence of $\vecsup{p}{i}$ where $\vecsup{r}{i}$ is a constant $\vecsup{r}{0}$.
  Since the sequence $\vecsup{r}{i}$ converges to $\vec{r}$, we get $\vecsup{r}{0} = \vec{r}$.
  We have that $\vecsup{q}{i} = \frac{p^i + \tfunction(s_0,s_2) \cdot \vec{r}}{\tfunction(s_0,s_1)}$.

  Let us show that for $\vecprime{p}$ on the Pareto curve of $s_0$ close enough at the left to $\vec{p}$ there is $\vecprime{q}$ on the Pareto curve in ${s_1}$ such that $\vecprime{p} = \tfunction(s_0,s_1)\cdot \vecprime{q} + \tfunction(s_0,s_2) \cdot \vec{r}$.
  Let $\vecprime{p}$ such that $\firstdim{\vecprime{p}} > \firstdim{p^0}$, we will show that such a $\vecprime{q}$ exists.
  By Lem.~\ref{lem:optimal-stochastic} there are $\vecsup{q}{1}$ and $\vecsup{r}{1}$ such that $\vecprime{p} = \tfunction(s_0,s_1) \cdot \vecsup{q}{1} + \tfunction(s_0,s_2) \vecsup{r}{1}$ and because $\vecsup{p}{0} = \tfunction(s_0,s_1) \cdot \vecsup{q}{0} + \tfunction(s_0,s_2) \cdot \vec{r}$, we can choose $\firstdim{\vecprime{r}} \ge \firstdim{r^0} = \firstdim{\vec{r}}$ 
  Since $\vec{p} = \tfunction(s_0,s_1) \cdot \vecsup{q}{0} + \tfunction(s_0,s_2) \cdot \vec{r}$, we can also choose $\vecsup{q}{2}$ and $\vecsup{r}{2}$ such that $\vecprime{p} = \tfunction(s_0,s_1) \cdot \vecsup{q}{2} + \tfunction(s_0,s_2) \cdot \vecsup{r}{2}$ and $\firstdim{r^2} \le \firstdim{r^0} = \vec{r}$.
  There exists $\lambda_1, \lambda_2 \in [0,1]$ whose sum is $1$ and such that $\lambda_1 \cdot \vecsup{r}{1} + \lambda_2 \cdot \vecsup{r}{2} = \vec{r}$.
  Moreover letting $\vecprime{q} = \lambda_1 \cdot \vecsup{q}{1} + \lambda_2 \cdot \vecsup{q}{2}$, we obtain that 
  $\vecprime{p} = \tfunction(s_0,s_1)\cdot \vecprime{q} + \tfunction(s_0,s_2) \cdot \vec{r}$.

  Therefore, for $x' \in [\firstdim{p^0},\firstdim{\vec{p}}\openend$, $f_0(x') = \tfunction(s_0,s_1)\cdot f_1(\firstdim{\vecprime{q}}) + \tfunction(s_0,s_2) \cdot f_2(\firstdim{\vec{r}})$ where $\firstdim{\vecprime{q}}$ is such that $x' = \tfunction(s_0,s_1)\cdot \firstdim{\vecprime{q}} + \tfunction(s_0,s_2) \cdot \firstdim{\vec{r}}$.
    That means $f_0(x') = \tfunction(s_0,s_1)\cdot f_1\left(\frac{x' + \tfunction(s_0,s_2)}{\tfunction(s_0,s_1)}\right) + \tfunction(s_0,s_2) \cdot f_2(\firstdim{\vec{r}})$.
\qed
\end{proof}

\section{Inverse betting game}\label{sec:epsilon-game}

\subsection{Proof of Thm.~\ref{thm:inverse-betting}}
\begin{reftheorem}{thm:inverse-betting}
  Let $\langle V, E, (v_0,c_0), w \rangle$ be a inverse betting game.
  Let $T \subseteq V$ be a target set and $B \in \mathbb{R}$ a bound.
  If from every vertex $v\in V$, \eve has a strategy to ensure visiting $T$ then she has one to ensure visiting it with a valuation of the counter $c \ge 1$ or to exceed the bound, that is she can force a configuration in $(T \times [c_0,+\infty\openend) \cup (V \times [B, +\infty\openend)$.
\end{reftheorem}
\begin{proof}
  Assuming \eve has a strategy to ensure visiting $T$, then she has a memoryless strategy to do so (see for example~\cite{GTW03}).
  We write $\sigma \colon V \to V$ for the function on states associated to this memoryless strategy that ensures visiting $T$ from $v$ (it is easy to recover the full strategy from there: $h \mapsto \sigma(\LAST(h))$).
  We also write $a(v)$ for the length of the longest path from $v$ compatible with $\sigma$ that does not reach $T$.
  Note that $a(v)$ is bounded by $|V|$ and decrease with each step compatible with $\sigma$.

  We define a potential function over configurations: 
  $p(v,c) = c + W^{a(v)} - W^{|V|}$.
  Note that because $a$ is bounded, when $p$ goes to infinity, $c$ also goes to infinity.

  The idea for our strategy is to never make this potential decrease.
  We show that it is possible to do so in each configuration that is not a target.
  Given a configuration $(v,c)$, let us write $v_1$ and $v_2$ the successors of $v$ and $c_1$ and $c_2$ the respective valuations of these successors chosen by \adam.
  One of the successors is closer to $T$ with respect to $\sigma$, so without loss of generality we assume that $a(v_1) \le a(v) - 1$.
  We have that $c = w(v,v_1) \cdot c_1 + w(v,v_2) \cdot c_2$.
  \begin{align*}
    p(v_1,c_1) &= c_1 + W^{a(v_1)} - W^{|V|} \\
    & \ge c_1 + W^{a(v)-1} - W^{|V|} \tag{as $a(v_1) \le a(v) - 1$ and $W \le 1$} \\
    p(v_2,c_2) &= c_2 + W^{a(v_2)} - W^{|V|} \\
   & \ge c_2  \tag{as $a(v_2) \le |V|$ and $W \le 1$} \\
    w(v,v_1) \cdot p(v_1,c_1) & + w(v,v_2) \cdot p(v_2,c_2) \\
    & \ge w(v,v_1) \cdot c_1 + w(v,v_1) \cdot (W^{a(v)-1} - W^{|V|}) + w(v,v_2) \cdot c_2  \\
    & \ge c + w(v,v_1) \cdot (W^{a(v)-1} - W^{|V|}) \\
    & \ge c + W \cdot (W^{a(v)-1} - W^{|V|}) \tag{as $w(v,v_1) \ge W$} \\
    & \ge c + W^{a(v)} - W^{|V|+1} \\
    & \ge p(v,c) + W^{|V|} - W^{|V|+1}
  \end{align*}


  Since $w(v,v_1) + w(v,v_2) = 1$, either $p(v_1,c_1) \ge p(v,c) + W^{|V|} - W^{|V|+1}$ or $p(v_2,c_2) > p(v,c) + W^{|V|} - W^{|V|+1}$.
  We define $\sigma'$, to choose $(v_1,c_1)$ in the first case and $(v_2,c_2)$ in the second one.
  Along any path compatible with this strategy the potential at each step increases by at least $W^{|V|} - W^{|V|+1}$, which is strictly positive.
  This means that either it will reach a target (then $a(v)$ can no longer decrease) or it goes to infinity, and so does $c$.
\end{proof}

\subsection{Following a point close to the left accumulation point (proof of Lem.~\ref{lem:following-close}}

We consider a sequence of points that are $\theta({s_0,\vec{p}_0})$ close to $\vec{p}_0$ and with a slope that is decreasing at least as fast as that of their predecessors.

\begin{reflemma}{lem:following-close}
  For stopping games, given $s_0, \vecsup{p}{0}, \varepsilon_0$, such that $\varepsilon_0 < \theta({s_0,\vecsup{p}{0}})$, there exists a finite sequence $\pi(s_0,\vecsup{p}{0},\varepsilon_0) = (s_i,\vecsup{p}{i},\varepsilon_i)_{i \le j}$ such that:
  \begin{itemize}
  \item $(s_i,\vecsup{p}{i})_{i \le j}$ is a path in $T_{s_0,\vecsup{p}{0}}$;
  \item for all $i\le j$, $\rslope(f_{s_i},\firstdim{p^i} - \varepsilon_i) \ge \rslope(f_{s_{i+1}},\firstdim{p^{i+1}} - \varepsilon_{i+1})$.
  \item either $s_j \in U_{s_0,\vecsup{p}{0}}$ and $\varepsilon_j \ge \varepsilon_0$ or $\varepsilon_j \ge \theta({s_0,\vecsup{p}{0}})$.
  \end{itemize}
\end{reflemma}
\begin{proof}
  To construct this path, we will invoke results on inverse betting games presented in Sec.~\ref{subsec:inverse-betting}.
  Consider the inverse betting game given by $T_{s_0,\vecsup{p}{0}}$ in Sec.~\ref{subsec:inverse-betting}.

  We show in every configuration $((s_i,\vecsup{p}{i}),c_i)$ with $c_i \le \theta({s_0,\vecsup{p}{0}})$, \adam has a choice in its action such that the successor $((s_{i+1},\vec{p}^{i+1}),c_{i+1})$ will be such that $\lslope(f_{s_i},\firstdim{p^i} - c_i) \ge \lslope(f_{s_{i+1}},\firstdim{p^{i+1}} - c_{i+1})$.
  \begin{itemize}
  \item For \PONE states, this is thanks to Lem.~\ref{lem:slope-convex-union}:
    since $\varepsilon_i \le \theta({s_0,\vecsup{p}{0}}) \le \eta(s_i,\vecsup{p}{i})$, we have that there is $s'$ in $\tfunction(s_i)$ that is a successor of $s_i$ in $T_{s_0,\vecsup{p}{0}}$ (because it has a left accumulation point $\vecsup{p}{i}$ and $\lslope(s_i,\firstdim{p^i}) = \lslope(s',\firstdim{p^i})$) and such that $\rslope(f_{s_i},\firstdim{p^i} -\varepsilon_i) \ge \rslope(f_{s'},\firstdim{p^i} - \varepsilon_i)$.
    Since \adam controls the configurations corresponding to \PONE states, he can chose the appropriate successor.
  \item For \PTWO states, thanks to Lem.~\ref{lem:slope-intersection}:
since $\varepsilon_i \le \theta({s_0,\vecsup{p}{0}}) \le \eta(s_i,\vecsup{p}{i})$, we have that there is $s'$ in $\tfunction(s_i)$ that is a successor of $s_i$ in $T_{s_0,\vecsup{p}{0}}$ (because it has a left accumulation point $\vecsup{p}{i}$ and $\lslope(s_i,\firstdim{p^i}+\reward_1(s_i)) = \lslope(s',\firstdim{p^i})$) and such that $\rslope(f_{s_i},\firstdim{p^i} + \reward_1(s_i)-\varepsilon_i) = \rslope(f_{s'},\firstdim{p^i} - \varepsilon_i)$.
Since \adam controls the configurations corresponding to \PTWO states, he can chose the appropriate successor.
\item If $s_i$ is a stochastic state,
  then by Lem.~\ref{lem:slope-stochastic-two} there are $c_1, c_2 \in \mathbb{R}$ such that the successors $(s_1,\vec{q})$ and $(s_2,\vec{r})$ of $(s_i,\vecsup{p}{i})$ in $T_{s_0,\vecsup{p}{0}}$, are such that $c_i = \tfunction(s_0,s_i) \cdot c_1 + \tfunction(s_i,s_2) \cdot c_2$ and $\rslope(f_{s_i},\firstdim{p^{i}} - c_i) \ge \rslope(f_{s_1},\firstdim{\vec{q}} - c_1)$ and  $\rslope(f_{s_i},\firstdim{p^{i}} - c_i) \ge \rslope(f_{s_2},\firstdim{\vec{r}} - c_2)$.
  So by choosing $c_1$ for $s_1$ and $c_2$ for $s_2$, \adam ensures that for all choices of \eve, $\rslope(f_{s_i},\firstdim{p^i} -\varepsilon_i) \ge \rslope(f_{s_{i+1}},\firstdim{p^{i+1}} - \varepsilon_{i+1})$.
  \end{itemize}
  With such choices for \adam, there is a strategy $\sigma_\forall$ that ensures that $\rslope(f_{s_i},\firstdim{p^i} - c_i)$ is decreasing along the outcome of the game.
  
  By Cor.~\ref{cor:inverse-betting},
  for any bound $B$ there is a strategy for \eve to ensure we reach a configuration in $(U_{s_0,\vecsup{p}{0}} \times [c_0,+\infty\openend) \cup (V \times [B, +\infty\openend)$.
  This is in particular the case for $B = \theta({s_0,\vecsup{p}{0}})$, and we write $\sigma_\exists$ the corresponding strategy.

  The outcome $\rho$ of $(\sigma_\exists, \sigma_\forall)$ has both properties.
  We now distinguish two types of paths:
  \begin{itemize}
  \item If $\rho$ reaches a configuration with credit greater than $\theta({s_0,\vecsup{p}{0}})$, then let $j$ be the first index where this happen.
    We have that for all $i < j$, 
    $\rslope(f_{s_i},\firstdim{p^i} - \varepsilon_i) \ge \rslope(f_{s_{i+1}},\firstdim{p^{i+1}} - \varepsilon_{i+1})$, thanks to the construction of strategy~$\sigma_\forall$.
  \item Otherwise, we have for all $i$ that $\rslope(f_{s_i},\firstdim{p^i} - \varepsilon_i) \ge \rslope(f_{s_{i+1}},\firstdim{p^{i+1}} - \varepsilon_{i+1})$, thanks to the construction of strategy~$\sigma_\forall$.
    Moreover since $\sigma_\exists$ is winning and we do not get to a configuration in $V \times [\theta({s_0,\vecsup{p}{0}}),+\infty\openend$, $\rho$ reaches $U_{s_0,\vecsup{p}{0}}$ with a credit greater than the initial credit that was $\varepsilon_0$.
      Let $j$ be the first index where this happens.
  \end{itemize}
  In both case we have that $\rho_{\le j}$ 
  is a witness of the property.
\qed
\end{proof}

\subsection{Proof of Lem.~\ref{lem:constant-slope}}
\begin{reflemma}{lem:constant-slope}
  For all states $s$ with a left accumulation point $\vec{p}$ and for all $\varepsilon < \theta({s,\vec{p}})$, there is some $(s', \vec{p}')$ reachable in $T_{s,\vec{p}}$ such that $\rslope(f_{s'},\firstdim{\vecprime{p}}- \theta({s_0,\vecsup{p}{0}})) \le \rslope(f_s,\firstdim{\vec{p}} - \varepsilon)$.
\end{reflemma}
\begin{proof}
  Consider the sequence $\pi(s,\vec{p},\varepsilon)$ as defined in Lem.~\ref{lem:following-close}.
  Either for the last configuration, $\varepsilon_j$ is greater than $\theta({s,\vec{p}})$ in which case we directly get the property for $(s',\vec{p}') = (s_j,\vecsup{p}{j})$; or we reach $U_{s_0,\vecsup{p}{0}}$. 
  In this case, we have that $\varepsilon_j \ge \varepsilon$, 
  and by Lem.~\ref{lem:accumulation-stochastic-two}.\ref{lem:accumulation-stochastic-two-eps}, there is a successor $(s_{j+1},\vec{p}^{j+1})$ in $T_{s,\vec{p}}$ such that for all $\varepsilon \le \theta({s_0,\vecsup{p}{0}})$, $f_{s_j}(\firstdim{p^j} - \varepsilon) = \tfunction(s_j,s_{j+1})\cdot f_{s_{j+1}}\left(\frac{\firstdim{p^{j}} - \varepsilon - \tfunction(s_j,s') \cdot \vec{r}_1}{\tfunction(s_j,s_{j+1})}\right) + \tfunction(s_j,s') \cdot f_{s'}(\vec{r}_1)$ for some state $s'$ and real $\vec{r}_1$.
  This gives that $\rslope(f_{s_j},\firstdim{p^j} - \varepsilon_j) = \rslope(f_{s_{j+1}},\firstdim{p^{j+1}} - \frac{\varepsilon_j}{\tfunction(s_j,s_{j+1})})$.
  We write $\delta = \max(\{ \tfunction(s,s') \mid s,s' \in \states\} \setminus \{ 1 \})$.
  Since $\tfunction(s_j,s_{j+1}) \le \delta$ and the slope is decreasing:
  $\rslope(f_{s_j},\firstdim{p^j} - \varepsilon_j) \ge \rslope(f_{s_{j+1}},\firstdim{p^{j+1}} - \frac{\varepsilon_j}{\delta})$.
  
  Note that each time we repeat this, $\varepsilon_j$ is multiplied by at least $\frac{1}{\delta}$.
  Hence, after finitely many steps we will reach a value greater than $\theta({s,\vec{p}})$, which shows the property.
\qed
\end{proof}



\section{Challenges for generalisation of our results}\label{sec:difficulties}

This section enumerates the reasons why we were unable to generalize results to multiple dimensions.

The notions of left-slope and right-slope still makes sense in three dimensions (and we could think of generalising them: front slope, back slope and other directions).
However, the properties that we used in the two-dimensional case are now longer true, as we illustrate in the following lemmas.
We write $\lslope_i$ for the slope in the direction of decreasing $i$-th dimension.
\begin{lemma}
  There is a bounded convex set $X \subset \mathbb{R}^3$, for which there exists two extremal points $(x,y,z)$ and $(x',y',z')$ such that $\lslope_1(f,(x,y)) = \lslope_1(f,(x',y'))$ where $f_X$ is the function defined by $f(x,y)=\sup\{z \mid (x,y,z) \in X\}$.
\end{lemma}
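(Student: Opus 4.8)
The plan is to exhibit an explicit bounded polyhedron whose upper surface has a ``ridge'' running parallel to the first coordinate axis: all points lying above such a ridge share the same left first-coordinate slope, and two of them can be made extremal. This is exactly the two-dimensional phenomenon (a segment of extremal points with a common left slope) that Lem.~\ref{lem:general-lemma}.\ref{lem:extremal-slopes} rules out in dimension two. Concretely, I would take $D = [-1,1]\times[0,1]$ and the concave piecewise-affine function $g\colon D\to\mathbb{R}$ given by $g(x,y) = \min(2x+1,\,1-x)$, so that $g(x,y)=2x+1$ when $x\le 0$, $g(x,y)=1-x$ when $x\ge 0$, and the two affine pieces agree, with common value $1$, along $\{0\}\times[0,1]$. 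Set
\[
  X \;=\; \bigl\{\,(x,y,z)\in\mathbb{R}^3 \;\bigm|\; (x,y)\in D,\ -1\le z\le g(x,y)\,\bigr\}.
\]
Then $X$ is bounded (indeed $X\subseteq[-1,1]\times[0,1]\times[-1,1]$ since $-1\le g\le 1$ on $D$) and convex, being the intersection of the hypograph of the concave function $g$ over the convex set $D$ with the halfspace $\{z\ge -1\}$. As $g\ge -1$ on $D$, we get $f_X(x,y)=\sup\{z\mid(x,y,z)\in X\}=g(x,y)$ for all $(x,y)\in D$.

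The next step is to check that $\vec u=(0,0,1)$ and $\vec v=(0,1,1)$ are extremal points of $X$. Suppose $\vec u=\lambda\vec a+(1-\lambda)\vec b$ with $\vec a,\vec b\in X$ and $\lambda\in(0,1)$. Projecting to the first two coordinates and using that the second components of $\vec a,\vec b$ are nonnegative and average to $0$, both projections lie on the line $y=0$, say at $(s,0)$ and $(t,0)$ with $\lambda s+(1-\lambda)t=0$; we may assume $s\le 0\le t$. Then $z_{\vec a}\le g(s,0)=2s+1$ and $z_{\vec b}\le g(t,0)=1-t$, so, using $(1-\lambda)t=-\lambda s$,
\[
  1 \;=\; \lambda z_{\vec a}+(1-\lambda)z_{\vec b} \;\le\; \lambda(2s+1)+(1-\lambda)(1-t) \;=\; 1+3\lambda s,
\]
whence $s=0$, then $t=0$, and finally $z_{\vec a}=z_{\vec b}=1$, i.e.\ $\vec a=\vec b=\vec u$. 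Hence $\vec u$ admits no nontrivial representation as a convex combination of points of $X$, so it is extremal, and the same computation works verbatim for $\vec v$.

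Finally I would compute the slope. For every $y\in[0,1]$ the section $x\mapsto f_X(x,y)=\min(2x+1,1-x)$ coincides with $2x+1$ on $\{x\le 0\}$, so
\[
  \lslope_1(f_X,(0,y)) \;=\; \lim_{x'\to 0^-}\frac{f_X(0,y)-f_X(x',y)}{0-x'} \;=\; \lim_{x'\to 0^-}\frac{1-(2x'+1)}{-x'} \;=\; 2 .
\]
In particular $\lslope_1(f_X,(0,0))=\lslope_1(f_X,(0,1))=2$, so the extremal points $\vec u$ and $\vec v$ witness the statement.

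There is no genuine obstacle here; the argument is a short explicit verification. The only point that warrants care is the extremality check, and, more importantly, presenting the example so that it transparently exposes the failure of the two-dimensional picture: the ridge $\{0\}\times[0,1]$ of the surface $f_X$ is a whole family of extremal endpoints that all inherit the left first-coordinate slope $2$ from the single affine piece $z=2x+1$ lying immediately to their left, a situation impossible in two dimensions, where by Lem.~\ref{lem:general-lemma}.\ref{lem:extremal-slopes} distinct extremal points always have distinct left slopes.
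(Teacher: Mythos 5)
Your construction is correct and takes essentially the same route as the paper: exhibit an explicit bounded convex polytope in $\mathbb{R}^3$ whose upper surface has an affine piece immediately to the left (in the first coordinate) of two distinct extremal points, so that both inherit the same value of $\lslope_1$. If anything, your version is the more watertight one: the paper's set $\{x+y+z\le 1,\ x\le 1,\ y\le 1\}$ is, as literally written, unbounded and its claimed extremal point $(1,0,0)$ lies in the relative interior of the edge $\{(1,t,-t)\}$ and hence is not extremal, whereas your prism is explicitly bounded and your extremality check for $(0,0,1)$ and $(0,1,1)$ is carried out in full.
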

\begin{proof}
  Let $X = \{ (x,y,z) \mid x+y+z \le 1 \land x \le 1 \land y \le 1 \}$.
  The points $a=(1,0,0)$ and $b=(0,1,0)$ are extremal.
  $f_X(1-\varepsilon,0) = \varepsilon$ and $f_X(0-\varepsilon,1) = \varepsilon$ therefore $\lslope_1(f_X,a) = -1 = \lslope_1(f_X,b)$.
\end{proof}
Thus to generalise Lem.~\ref{lem:general-lemma-bis}.\ref{lem:extremal-slopes-bis} to dimension~$n$, we would need to consider more than $n$ directions.
We could for instance consider a property of this kind:

\smallskip
\noindent\textbf{Conjecture: }
  If $X \subset \mathbb{R}^n$ is a bounded convex set and $\vec{p} \ne \vec{p'}$ are extremal points of $X$, then $\exists i.\ \lslope_i(f_X,p) \ne \lslope_i(f_X,p)$.

\smallskip
Assuming this conjecture was true, there would still be the problem of how to follow an accumulation point.
In the two-dimensional case, we chose at each step an accumulation point with the same slope.
Now in higher dimension, we may not be able to follow a accumulation point that has the same slope in all directions (and if the slope is not preserved in all directions, our conjecture cannot be used).
We illustrate this problem, with the following lemma that shows that we could not extend the techniques used in Lem.~\ref{lem:slope-intersection}.
\begin{lemma}
  There are bounded convex sets $X, Y, Z$ such that $X = Y \cap Z$, $\lslope_1(f_X,p) \ne \lslope_1(f_Y,p)$ and $\lslope_2(f_X,p) \ne \lslope_2(f_Z,p)$.
\end{lemma}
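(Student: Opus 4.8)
The plan is to give an explicit example: three bounded convex polytopes obtained from a cube by cutting one slanted facet off each. The guiding observation is that if $X = Y \cap Z$ and all three sets have, over each point $(x,y)$ of a common domain, a fibre which is an interval bounded below by a fixed constant, then $f_X(x,y) = \min(f_Y(x,y), f_Z(x,y))$. I will then choose $f_Y$ and $f_Z$ to be affine functions that \emph{agree in value} at a point $p$ but cross there, so that $f_X = \min(f_Y,f_Z)$ inherits its left slope from $f_Z$ in the first coordinate direction and from $f_Y$ in the second coordinate direction, making both inequalities in the statement hold simultaneously.

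Concretely, I would take $p = (0,0)$ and work inside the box $[-1,1]^2$ in the first two coordinates, with $z$ bounded below by $-2$, and set
\begin{align*}
  Y &= \{(x,y,z) \mid -1 \le x \le 1,\ -1 \le y \le 1,\ -2 \le z \le y\},\\
  Z &= \{(x,y,z) \mid -1 \le x \le 1,\ -1 \le y \le 1,\ -2 \le z \le x\}.
\end{align*}
Each is an intersection of finitely many half-spaces and is bounded (it sits inside $[-1,1]^2 \times [-2,1]$), hence a bounded convex set, and $X = Y \cap Z = \{(x,y,z) \mid -1 \le x \le 1,\ -1 \le y \le 1,\ -2 \le z \le \min(x,y)\}$ is likewise bounded and convex. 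Over any $(x,y) \in [-1,1]^2$ the fibres are $[-2,y]$, $[-2,x]$ and $[-2,\min(x,y)]$, so $f_Y(x,y) = y$, $f_Z(x,y) = x$ and $f_X(x,y) = \min(x,y)$. All three restrict on every axis-parallel line to concave one-dimensional functions, so the notion of $\lslope$ from the preliminaries applies.

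It remains to compute the four slopes at $p$. Along the slice $y = 0$ we have $f_Y(x,0) \equiv 0$, whose left slope at $x=0$ is $0$, while $f_X(x,0) = \min(x,0) = x$ for $x \le 0$, whose left slope at $x = 0$ is $1$; hence $\lslope_1(f_X,p) = 1 \ne 0 = \lslope_1(f_Y,p)$. Symmetrically, along the slice $x = 0$ we have $f_Z(0,y) \equiv 0$ with left slope $0$ at $y=0$, while $f_X(0,y) = \min(0,y) = y$ for $y \le 0$ has left slope $1$; hence $\lslope_2(f_X,p) = 1 \ne 0 = \lslope_2(f_Z,p)$. This establishes the lemma. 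There is essentially no obstacle; the only points requiring a word of justification are that $f_X = \min(f_Y,f_Z)$ (immediate from the fibre description above) and that $p$ lies in the relative interior of the common domain so that the one-sided slopes are genuinely computed on the slanted, not the boxy, part of the boundary.
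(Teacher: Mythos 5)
Your proposal is correct and follows essentially the same route as the paper: both construct explicit bounded convex polytopes whose upper-boundary functions $f_Y$ and $f_Z$ are affine and cross at $p$, so that $f_X=\min(f_Y,f_Z)$ has a kink there and its one-sided slope in each coordinate direction is inherited from a different component than the one being compared against. The paper uses $Y=\{(x,y,z)\in[0,1]^3\mid z\le 1-x\}$, $Z=\{(x,y,z)\in[0,1]^3\mid z\le 1-y\}$ and $p=(\tfrac12,\tfrac12,\tfrac12)$ (so the roles are mirrored: there $f_X$ is flat and the components are sloped, whereas in your example $f_X$ is sloped and the components are flat), but this is only a cosmetic difference.
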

\begin{proof}
  Consider $Y = \{ (x,y,z) \in [0,1]^3 \mid z \le 1 - x \}$ and $Z = \{ (x,y,z) \in [0,1]^3 \mid z \le 1 - y \}$ and the point $p = (\frac{1}{2},\frac{1}{2},\frac{1}{2})$.
  Let $\varepsilon \in [0,\frac{1}{2}]$, $f_X(p - (\varepsilon,0,0)) = \frac{1}{2}$, so $\lslope_1(f_X,p) = 0$ and similarly $\lslope_2(f_X,p) = 0$. 
  However $f_Y(p - (\varepsilon,0,0)) = \frac{1}{2} + \varepsilon$, so $\lslope_1(f_Y,p) = -1$ and similarly $\lslope_2(f_Z,p) = -1$. 
\end{proof}

This shows that the idea of following an accumulation point with the same slope cannot be generalised easily to higher number of dimensions.



\end{document}